\theoremstyle{plain}
\newtheorem{thm}{Theorem}
\newtheorem{cor}{Corollary}[section]
\newtheorem{propn}{Proposition}[section]
\newtheorem{lem}{Lemma}[section]
\theoremstyle{definition}
\newtheorem{ex}{Example}[section]
\newtheorem{rmk}{Remark}[section]
\newtheorem{defn}{Definition}
\numberwithin{equation}{section}
\DeclareMathOperator{\Hess}{Hess}
\DeclareMathOperator{\crit}{Crit}
\DeclareMathOperator{\Ad}{Ad}
\DeclareMathOperator{\ad}{ad}
\DeclareMathOperator{\Tr}{Tr}
\newcommand{\orb}{\mathcal{O}}
\newcommand{\R}{\mathbf{R}}
\newcommand{\M}{\mathbb{M}}
\newcommand{\inertia}{\mathbb{I}}
\newcommand{\B}{\mathcal{B}}
\newcommand{\Id}{\text{Id}}
\title{Relative Equilibria of Mechanical Systems with Rotational Symmetry}
\author{Philip Arathoon\thanks{University of Michigan, \texttt{philash@umich.edu}}}
\date{June 2024}
\begin{document}
	\maketitle
	\begin{abstract}
		We consider the task of classifying relative equilibria for mechanical systems with rotational symmetry. We divide relative equilibria into two natural groups: a generic class which we call normal, and a non-generic abnormal class. The eigenvalues of the locked inertia tensor descend to shape-space and endow it with the geometric structure of a 3-web with the property that any normal relative equilibrium occurs as a critical point of the potential restricted to a leaf from the web. To demonstrate the utility of this web structure we show how the spherical 3-body problem gives rise to a web of Cayley cubics on the 3-sphere, and use this to fully classify the relative equilibria for the case of equal masses.
	\end{abstract}
	
	
	\tableofcontents
	
	\section*{Background and Outline}
	Finding general solutions of a dynamical system is often far too much to ask. Instead, we redirect our efforts to finding `special' solutions which are more tractable. The equilibria are one such example. There is another example if the system admits a symmetric group action by a Lie group \(G\), namely the \emph{\textbf{relative equilibria (RE)}}. These are solutions which are themselves orbits of one-parameter subgroups of \(G\). 
	
	Perhaps the most famous examples of RE are the central configurations in the planar \(n\)-body problem. These are solutions where the bodies rotate around their centre of mass as if they were a rigid system. They were classified for \(n=3\) by Euler and Lagrange, but for general \(n\) only partial results are known. Even the question of whether there exist finitely many RE for \(n>5\) remains unsolved (Problem~6 in Smale's list \cite{smale_list}).
	
	To motivate the results of this paper it is worth taking a quick look at what is involved in finding RE for the \(n\)-body problem with equal masses. Let \(q\) be the vector of particle positions and \(F\) the vector of forces. The centrifugal force scales linearly with distance, and so to balance the forces in a rotating frame we require
	\[
	F=-\kappa q
	\]
	for some positive \(\kappa\). Force is the negative gradient of the potential \(V\) and the vector \(q\) is half the gradient of the inertia \(\lambda=|q|^2\). Thus, we have a Lagrange-multiplier problem \(2\nabla V=\kappa\nabla\lambda\). Since the inertia and potential are rotationally symmetric, we may as well take the quotient of all configurations with a common centre of mass by \(\mathbf{SO}(2)\) to obtain what is called the shape space. Classifying RE now amounts to finding critical points of the potential restricted to level sets of the inertia in shape space. 
	
	There is a very nice picture of this for the 3-body problem. The level sets of the inertia in shape space are pairs-of-pants, as shown in Figure~\ref{pants} (see \cite[Ch.~14]{sr_book} for a good explanation). One can directly see the RE as the critical points of the potential: the Euler solutions are the three saddle points around the equator, and the two critical points on the top and bottom are the Lagrangian solutions.
	\begin{figure}[h]
		\centering
		\includegraphics{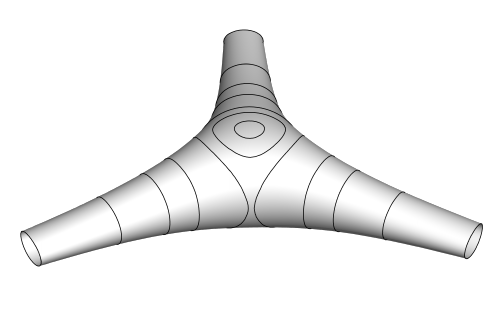}
		\caption{\label{pants}Constant inertia surface in shape space for the planar 3-body problem. The contours are of the potential energy.}
	\end{figure}
	
	The impetus for this work came from asking whether there is a similar nice picture for the spherical 3-body problem, where the bodies are now constrained to move on a sphere. This problem has attracted growing interest in the last few years (we recommend \cite{bitching} for a good review of the literature). Unlike the planar problem, there is no such thing as a centre-of-mass frame, and so one must consider the larger group \(\mathbf{SO}(3)\) of rotations. This means that it no longer makes sense to talk about the inertia in shape space, since the inertia now depends not just on the shape of the bodies but also on the axis of rotation. However, the eigenvalues of the inertia tensor do descend to shape space, and their level sets define a geometric structure which we call a web.
	\begin{defn}
		A \emph{\textbf{k-web}} on a manifold is a collection of regular foliations \(\{\mathcal{F}_1,\dots,\mathcal{F}_k\}\) defined almost everywhere and given locally by the level sets of functions \(\Lambda_1,\dots,\Lambda_k\).
	\end{defn}
	
	We can now state our main result. Let \(Q\) be a configuration space upon which \(\mathbf{SO}(3)\) acts freely and isometrically, and equip it with a potential energy \(V\) which is invariant with respect to the action. 
	
	\begin{thm}\label{introduction_thm}
		The eigenvalues of the locked inertia tensor \(\inertia_q\) descend to shape space \(\mathcal{B}=Q/\mathbf{SO}(3)\) and their level sets endow it with the structure of a 3-web. A point \(x\) in \(\mathcal{B}\) is a normal relative equilibrium if and only if \(2\nabla{V}=\kappa\nabla\lambda_j\) for some multiplicity-1 eigenvalue \(\lambda_j\) and some \(\kappa\ge0\). In particular, \(x\) must be a critical point of \({V}\) restricted to the leaf of constant \(\lambda_j\). The angular momentum of the relative equilibrium is an eigenvector \(L_j\) of \(\inertia_{q}\) with eigenvalue \(\lambda_j\) and magnitude  \(|L_j|^2=\kappa\lambda_j^2\).
	\end{thm}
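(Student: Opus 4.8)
The plan is to start from the standard variational characterisation of relative equilibria and push it down to shape space by splitting the critical‑point equation on \(Q\) into components tangent and orthogonal to the \(\mathbf{SO}(3)\)-orbit. Recall that for a simple mechanical system a pair \((q,\xi)\), with \(\xi\in\mathfrak{so}(3)\cong\R^3\), is a relative equilibrium exactly when \(q\) is a critical point of the augmented potential \(V_\xi:=V-\tfrac12\langle\inertia_\bullet\,\xi,\xi\rangle\), and that the locked inertia tensor is equivariant, \(\inertia_{Rq}=R\,\inertia_q\,R^{-1}\) for \(R\in\mathbf{SO}(3)\). Equivariance already shows that the eigenvalues \(\lambda_1,\lambda_2,\lambda_3\) of \(\inertia_q\) are \(\mathbf{SO}(3)\)-invariant, hence descend to \(\B\); on the open set where they are distinct (the generic situation) their level sets are three regular codimension-one foliations, and the remaining, lower-dimensional loci — where the abnormal relative equilibria live — are absorbed by the `almost everywhere' clause in the definition of a web. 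This gives the \(3\)-web.

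I would next extract the vertical part of the equation \(dV_\xi(q)=0\). Because \(V\) is invariant, \(dV\) kills the tangent space to the orbit, so only the kinetic term contributes vertically. Differentiating \(t\mapsto\langle\inertia_{\exp(t\hat\eta)q}\,\xi,\xi\rangle\) at \(t=0\) with \(\tfrac{d}{dt}\inertia_{\exp(t\hat\eta)q}\big|_0=[\hat\eta,\inertia_q]\), and simplifying with \(\langle\eta\times a,b\rangle=\langle\eta,a\times b\rangle\), collapses the vertical equation to \(\xi\times\inertia_q\xi=0\): the angular velocity \(\xi\) must be an eigenvector of \(\inertia_q\), say \(\inertia_q\xi=\lambda_j\xi\). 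This is the point at which the normal/abnormal dichotomy enters — the relative equilibrium is normal precisely when the associated eigenvalue \(\lambda_j\) is simple, which is exactly the regularity needed to pick a smooth unit eigenvector field \(e_j\) and to have \(\lambda_j\) differentiable near \(q\) (the sign ambiguity in \(e_j\) is harmless since everything below is quadratic in it).

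For the horizontal part, write \(\xi=\sqrt{\kappa}\,e_j(q)\) with \(\kappa:=|\xi|^2\ge0\). Along any curve \(q(t)\) through \(q\) with horizontal velocity \(w\), the Hellmann--Feynman-type identity — valid because \(\lambda_j\) is simple, the variation of \(e_j\) dropping out as \(e_j\) is a unit eigenvector — gives \(\tfrac{d}{dt}\lambda_j(q(t))\big|_0=\langle\dot\inertia\,e_j,e_j\rangle\), where \(\dot\inertia=\tfrac{d}{dt}\inertia_{q(t)}\big|_0\). Since \(\xi\) is fixed along the curve, \(\tfrac{d}{dt}\langle\inertia_{q(t)}\xi,\xi\rangle\big|_0=\langle\dot\inertia\,\xi,\xi\rangle=\kappa\,\langle\dot\inertia\,e_j,e_j\rangle=\kappa\,d\lambda_j(q)(w)\), so the horizontal part of \(dV_\xi(q)=0\) reads \(2\,dV(q)=\kappa\,d\lambda_j(q)\) on horizontal vectors. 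As \(V\) and \(\lambda_j\) are invariant, both one-forms also kill vertical vectors, so this is an equality of basic one-forms and descends to \(2\nabla V=\kappa\nabla\lambda_j\) at \(x=[q]\) in the quotient metric; by Lagrange multipliers \(x\) is then a critical point of \(V\) restricted to the leaf \(\{\lambda_j=\mathrm{const}\}\). The converse runs the argument backwards: given \(2\nabla V=\kappa\nabla\lambda_j\) at \(x\) for a simple \(\lambda_j\) and \(\kappa\ge0\), choose any \(q\) over \(x\), set \(\xi=\sqrt{\kappa}\,e_j(q)\), and check that the vertical equation holds because \(\xi\) is an eigenvector while the horizontal equation holds by hypothesis. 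Finally the angular momentum is \(L=\inertia_q\xi=\lambda_j\xi\), which is an eigenvector \(L_j\) of \(\inertia_q\) with eigenvalue \(\lambda_j\) and \(|L_j|^2=\lambda_j^{2}|\xi|^2=\kappa\lambda_j^{2}\).

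The two derivative computations are routine; the step needing the most care is the multiplicity bookkeeping. The vertical equation always produces \(\xi\) as an eigenvector of \emph{some} eigenvalue, but both the eigenvalue-perturbation identity and the reading of a level set as a smooth leaf break down when that eigenvalue is not simple; isolating `simple eigenvalue' as the definition of normal, and verifying that it is exactly what makes every implication above reversible, is the crux of the proof. A secondary point is checking the regularity claims underlying the word `web': that on a dense open subset of \(\B\) the relevant eigenvalues are simple with nonvanishing differential, so that the foliations are genuinely regular wherever the definition requires them to be defined.
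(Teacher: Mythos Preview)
Your argument is correct, and it reaches the same conclusion as the paper, but the route is genuinely different. The paper does not split \(dV_\xi=0\) into vertical and horizontal parts on \(Q\). Instead it first proves a general result for any compact group (Theorems~\ref{main_thm} and~\ref{augmented_thm}): working on the reduced space \(T^*U\times\orb\), it characterises RE as pairs \((x_0,L_0)\) with \(L_0\) critical for \(J_{x_0}|_\orb\) and \(x_0\) critical for the amended potential \(V_{L_0}\). The key manoeuvre there is not an eigenvalue--perturbation identity but a \emph{choice of section}: one replaces \(\sigma\) by \(g(x)\sigma(x)\) so that the critical point \(L_0(x)\) is frozen at \(L_0\), whence \(V_{L_0}=V+\Gamma_0\) with \(\Gamma_0(x)=J_x(L_0(x))\) a web function. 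The \(\mathbf{SO}(3)\) case is then a short specialisation: Lemma~\ref{lemma} turns the criticality of \(L_0\) on \(\orb\) into \([\omega,L]=0\), and for \(\mathfrak{so}(3)\) this forces \(\omega\) to be an eigenvector, with normality equivalent to simplicity of the eigenvalue.

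What each approach buys: your Hellmann--Feynman computation is more elementary and entirely self-contained for \(\mathbf{SO}(3)\); it avoids the reduced-space machinery and the section trick, and makes the appearance of \(\nabla\lambda_j\) transparent. The paper's route costs more setup but yields the general augmented/amended webs for arbitrary compact \(G\) (where critical points of \(K_x|_\orb\) need not be eigenvectors and webs are indexed by adjoint-orbit types), of which Theorem~\ref{introduction_thm} is the simplest instance. Your vertical computation is exactly the content of Lemma~\ref{lemma} specialised to \(\mathfrak{so}(3)\); your horizontal computation replaces the paper's adapted-section argument by first-order perturbation theory, which is a clean alternative precisely because simplicity of \(\lambda_j\) is already in hand.
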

	In the context of the theorem a normal RE is understood to mean a RE whose angular momentum is not an eigenvector of the inertia tensor for a repeated eigenvalue. The abnormal RE are quite special, since it is precisely where the eigenvalues are repeated that the 3-web becomes singular. Classifying the abnormal RE requires a separate analysis, and are possibly related to the presence of degenerate RE which are not `persistent' in phase space (in the sense described in \cite{jamespersistence}).
	
	To frame how Theorem~\ref{introduction_thm} compares with existing methods of classifying RE some historical comments are in order. Let \(M\) be a symplectic manifold equipped with a Hamiltonian group action \(G\) with momentum map \(\Phi\). In his study of the planar \(n\)-body problem, Smale devises an effective characterisation of the RE as critical points of the energy-momentum map \cite{smale1,smale2}. Thus, if \(H\) is a \(G\)-invariant Hamiltonian on \(M\), the RE are classified by the critical points of \(H\) restricted to a level set \(\Phi^{-1}(\mu)\). Equivalently, one talks of critical points of the augmented Hamiltonian \(H_\xi=H-\langle\Phi,\xi\rangle\).
	
	We can go one step further by passing to the quotient and considering critical points of the reduced Hamiltonian on the symplectic reduced space \(\Phi^{-1}(\mu)/G_\mu\). Thus, for a mechanical system \(M=T^*Q\) we have the task of classifying critical points on a manifold of dimension
	\[
	2\dim Q-\dim G-\dim G_\mu.
	\]
	In particular, for the case of rotational symmetry where \(G=\textbf{SO}(3)\), this will typically involve finding critical points on a manifold of dimension \(2\dim Q-4\). 
	
	In fact, for mechanical systems we can improve upon this still by considering the amended potential
	\[
	V_\mu(q)=V(q)+\frac{1}{2}\langle \mu,\inertia_q^{-1}(\mu)\rangle.
	\]
	The RE of a mechanical system can equivalently be characterised as critical points of \(V_\mu\) \cite{marsden_lectures}. From a computational perspective this is a significant improvement, since we need only compute critical points on a manifold of dimension \(\dim Q\). However, how should one choose \(\mu\) in the amended potential? The magnitude of \(\mu\) clearly makes a difference, however choosing a different \(\Ad_g^*\mu\) on the coadjoint orbit through \(\mu\) shouldn't make much difference, since both the dynamics and momentum map are \(G\)-equivariant.
	
	Theorem~\ref{introduction_thm} can be seen as a way of eliminating this redundancy in the choice of \(\mu\). The amended potential does not descend to shape space. However, by locally describing the configuration space as a principal \(G\)-bundle over shape space, we can introduce a slightly different notion of amended potential \(V_L\) which is defined with respect to the `body' momentum \(L\) (as opposed to the `spatial' momentum \(\mu\)). This function does descend to shape space, and is a crucial step in establishing the theorem.

	
	In summary, this theorem furnishes shape space with additional geometric structure, and provides a less taxing method for classifying the RE. Indeed, RE can now be classified as critical points restricted to a leaf of the web, which will typically have dimension \(\dim Q-4\); an improvement over the amended potential method. In answer to our original question, we do indeed find a nice picture for the spherical 3-body problem, and we invite the reader to skip ahead to Figure~\ref{octych} to behold a 3-web of Cayley cubics in shape space.

	We conclude with a brief outline of the paper. In Section~1 we prove a stronger version of Theorem~\ref{introduction_thm} which applies to any compact group acting freely on configuration space. Section~2 presents a trio of 3-webs arising from physical problems: a rubber ball, the triatomic molecule, and the full-body satellite problem. We show how an understanding of these webs can be used to deduce various existence results for RE. In Section~3 we apply Theorem~\ref{introduction_thm} to obtain a complete and self-contained classification of the RE for the equal-mass spherical 3-body problem. We would like to point out that this classification is not new, having recently been established in a series of preprints by Fujiwara \& P\'{e}rez-Chavela \cite{fuji1,fuji2,fuji3,fuji4,fuji5} and the article \cite{fuji_pub}. In the final section we show how the energy-momentum method of \cite{stability} for assessing stability fits into our formalism. We then apply this to the Eulerian and Lagrangian families of RE for the spherical 3-body; so-called since they generalise the RE of Euler and Lagrange in the planar 3-body problem.

	\section{Main Result}
	Let \(Q\) be configuration space and \(G\) a Lie group which acts on \(Q\). We shall suppose that the orbit-map \(\pi\colon Q\rightarrow \B\) is a smooth principal \(G\)-bundle onto the \emph{\textbf{shape space}} \(\B=Q/G\). We note that this assumption automatically holds if the action is free and the group compact. Additionally, we equip \(Q\) with a \(G\)-invariant metric, and a \(G\)-invariant scalar function \(V\).
	\subsection{Reduction}
	The horizontal subspace \(H_q\) is the orthogonal complement to the tangent space of the \(G\)-orbit passing through \(q\in Q\). The pushforward of \(\pi\) establishes an isomorphism between \(H_q\) and the tangent space to \(x=\pi(q)\). Since the metric on \(Q\) is \(G\)-invariant, the pushforward-metric does not depend on the choice of \(q\) in the fibre of \(\pi\), and hence, we have a well-defined {{reduced metric}} on \(\B\).
	
	Let \(U\) be a chart of \(\B\) and \(\sigma\) a local section of the bundle over \(U\). Consider the decomposition of a tangent vector into its horizontal and vertical parts \(\dot{q}=H+V\). We may identify this with the pair \(\dot{x}=\pi_*(H)\) and \(\dot{g}\), where \(V=\dot{g}\sigma(x)\). This establishes an orthogonal decomposition \(T_UQ=TU\times TG\). As the metric is \(G\)-invariant, the length of a vertical vector \(\dot{g}q\) is the same as \(\omega q\), where \(\omega=g^{-1}\dot{g}\) is the \emph{\textbf{(body) angular velocity}} in the Lie algebra \(\mathfrak{g}\). It follows that 
	\begin{equation}
		|\dot{q}|^2=\langle\M_x(\dot{x}),\dot{x}\rangle+\langle\inertia_{\sigma(x)}(\omega),\omega\rangle
	\end{equation}
	where \(\M_x\colon T_xB\rightarrow T_x^*B\) is the metric tensor for the reduced metric on shape space, and \(\inertia_q\colon\mathfrak{g}\rightarrow\mathfrak{g}^*\) is the symmetric \emph{\textbf{(locked) inertia tensor}} satisfying \(|\omega q|^2=\langle \inertia_q(\omega),\omega\rangle\) for all \(\omega\). 
	
	\begin{rmk}\label{sym2_rep}
		If we choose a different section \(\sigma\mapsto g\sigma\) then the inertia tensor varies according to the representation of \(G\) on \(\text{Sym}^2\mathfrak{g}^*\) given by
		\begin{equation}\label{sym2_rep_eqn}
			\inertia_{gq}=\Ad_g^*\circ~\inertia_{q}\circ \Ad_g^{-1}.
		\end{equation}
	\end{rmk}
	
	We now consider a mechanical system on \(Q\) with Lagrangian \(|\dot{q}|^2/2-V(q)\). The Legendre transform on \(T_UQ=TU\times TG\) sends the tangent vector \((\dot{x},g\omega)\) to \((y,gL)\) where 
	\[
	y=\M_x(\dot{x})
	\]
	is the {{reduced momentum}} in \(T^*_xB\), and 
	\[
	L=\inertia_{\sigma(x)}(\omega)
	\]
	is the \emph{\textbf{(body) angular momentum}} in \(\mathfrak{g}^*\). The Hamiltonian on \(T^*U\times T^*G\) is thus,
	\begin{equation}\label{hamiltonian}
		H(x,y;g,L)=\frac{1}{2}\langle y,\M_x^{-1}(y)\rangle+\frac{1}{2}\langle L,\inertia_{\sigma(x)}^{-1}(L)\rangle+{V}(x).
	\end{equation}
	Observe that this Hamiltonian does not depend on \(g\). Indeed, it descends to the reduced space \(T^*U\times\mathfrak{g}^*\) given by left-translating covectors on \(T^*G\) to \(\mathfrak{g}^*\). We caution that this is not equipped with the standard Poisson structure. This is a consequence of the identification \(T_UQ=TU\times TG\) not being the lift of a diffeomorphism between base spaces.

		The momentum map viewed on \(TQ\) sends \(\dot{q}\) to \(\inertia_{q}(\xi)\) where \(\xi=\Ad_g\omega\) is the \emph{\textbf{(spatial) angular velocity}} and \(q=g\sigma(x)\) (see \cite[Proposition~14.1]{sr_book}). Thus, with the aid of Eq.~\eqref{sym2_rep_eqn} we see that the \emph{\textbf{(spatial) angular momentum}} \(\mu=\Ad_g^*L\) is a conserved quantity. The symplectic reduced space with momentum \(\mu\) is therefore \(T^*U\times\orb\) where \(\orb\) is the coadjoint orbit in \(\mathfrak{g}^*\) through \(\mu\).  

	\subsection{The Amended and Augmented Webs}
	A RE is a solution contained to a group orbit. Hence, RE correspond to fixed points in the reduced space. Equivalently, they are critical points of the reduced Hamiltonian. Let \(\gamma(t)=(x,y,L)\) be a curve  in \(T^*U\times\orb\). Using the fact that \(\M\) and \(\inertia\) are symmetric we have
	\begin{equation}\label{H_first_deriv}
		\frac{d}{dt}H(\gamma(t))=
		\langle\dot{y},\M_x^{-1}y\rangle+\frac{1}{2}\langle y,\dot{\M}_x^{-1}y\rangle+\langle\dot{L},\inertia_x^{-1}L\rangle+\frac{1}{2}\langle L,\dot{\inertia}_x^{-1}L\rangle+\dot{V}.
	\end{equation}
	For \(\gamma(0)=(x_0,y_0,L_0)\) to be a critical point of \(H\) we clearly require \(y_0\) to be zero, and so when we talk of RE we shall only make reference to the \((x_0,L_0)\)-coordinate. We see from Eq.~\eqref{H_first_deriv} that \((x_0,L_0)\) is a RE if and only if \(L_0\) is a critical point of the quadratic form 
	\begin{equation}
		\label{J_defn}
		J_{x_0}(L)=\frac{1}{2}\langle L,\inertia_{x_0}^{-1}L\rangle
	\end{equation}
	restricted to \(\orb\), and \(x_0\) is a critical point of the \emph{\textbf{(reduced) amended potential}}
	\begin{equation}\label{reduced_amended_potential}
		V_{L_0}(x)=V(x)+J_x(L_0).
	\end{equation}
	These two conditions are in a sense coupled. To find a critical point \(x_0\) of \(V_{L_0}\) one must first decide on an appropriate \(L_0\), but finding a critical \(L_0\) of \(J_{x_0}\) itself depends on a choice of \(x_0\). The main idea in the following theorem is to decouple this interdependence by selecting a section \(\sigma\) which `follows' the critical points of \(J_{x_0}\) on a given orbit.

	\begin{defn}
		We will call a relative equilibrium \((x_0,L_0)\) \emph{\textbf{normal}} if \(L_0\) is a non-degenerate critical point of  \(J_{x_0}\) restricted to the coadjoint orbit through \(L_0\). 
	\end{defn}

	\begin{thm}\label{main_thm}

		For every coadjoint orbit \(\orb\) in \(\mathfrak{g}^*\) we may define the \textup{\textbf{amended web}} on shape space defined locally by the level sets of functions \(\Gamma_0,\dots,\Gamma_k\). A point \(x_0\) is a normal relative equilibrium with angular momentum belonging to a scalar multiple of \(\orb\) if and only if \(\nabla V=-\kappa\nabla\Gamma_j\) for some \(j\) and \(\kappa\ge 0\). In particular, \(x_0\) must be a critical point of the potential restricted to a leaf of constant \(\Gamma_j\).
	\end{thm}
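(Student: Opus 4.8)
The plan is to define the functions \(\Gamma_0,\dots,\Gamma_k\) as the critical \emph{values} of \(J_x\) along the fixed orbit \(\orb\), and then to show that the amended potential \(V_{L_0}\) agrees to first order with \(V+\Gamma_j\) by an envelope identity, so that criticality of \(V_{L_0}\) becomes the equation \(\nabla V=-\kappa\nabla\Gamma_j\). Fix a coadjoint orbit \(\orb\subseteq\mathfrak g^*\); as \(G\) is compact this is a compact embedded submanifold. Work on a chart \(U\subseteq\B\) with a local section \(\sigma\), and abbreviate \(J_x:=J_{\sigma(x)}\) (and \(\inertia_x:=\inertia_{\sigma(x)}\)) as in~\eqref{J_defn}. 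Since \(dJ_x(L)=\inertia_x^{-1}(L)\) and \(T_L\orb=\{\coad_\eta L:\eta\in\mathfrak g\}\), a point \(L\in\orb\) is a critical point of \(J_x|_{\orb}\) precisely when \(\coad_{\inertia_x^{-1}(L)}L=0\); the definition calls such an \(L\) \emph{normal} when the restricted Hessian of \(J_x\) at \(L\) is nonsingular. Note that \(J_x(cL)=c^2J_x(L)\), so \(L\in\orb\) is critical if and only if \(cL\) is critical on the rescaled orbit \(c\orb\), with critical value multiplied by \(c^2\). I would first invoke the implicit function theorem: writing the critical-point condition along a branch as the vanishing of the differential of \(J_x|_{\orb}\), non-degeneracy of a critical point \(L_0\) of \(J_{x_0}|_{\orb}\) is exactly the statement that the partial derivative of this condition in the \(L\)-direction is invertible on \(T_{L_0}\orb\); hence on a neighbourhood of \(x_0\) there is a unique smooth branch \(x\mapsto L_j(x)\in\orb\) of normal critical points through \(L_0\). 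Set \(\Gamma_j(x):=J_x(L_j(x))\). From Remark~\ref{sym2_rep} one gets \(J_{g\sigma(x)}(L)=J_{\sigma(x)}(\Ad_g^*L)\), and since \(\orb\) is \(\Ad^*\)-invariant the values \(\Gamma_j(x)\) do not depend on \(\sigma\); so the branches glue to functions defined on \(\B\) away from the locus where critical points of \(J_x|_{\orb}\) degenerate, and their level sets furnish the amended web. (For \(G=\mathbf{SO}(3)\) and \(\orb\) a sphere of radius \(r\) one has \(\Gamma_j=\tfrac12 r^2/\lambda_j\) with \(\lambda_j\) the eigenvalues of \(\inertia_x\), recovering the \(3\)-web of Theorem~\ref{introduction_thm}.)

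The key step is an envelope identity. Differentiating \(\Gamma_j(x)=J_x(L_j(x))\) at \(x_0\),
\[
d\Gamma_j\big|_{x_0}=\big(d_xJ\big)\big|_{(x_0,\,L_j(x_0))}+\big(d_LJ\big)\big|_{(x_0,\,L_j(x_0))}\!\circ\,dL_j\big|_{x_0}.
\]
The curve \(x\mapsto L_j(x)\) lies in \(\orb\), so \(dL_j|_{x_0}\) is valued in \(T_{L_j(x_0)}\orb\); but \(d_LJ|_{(x_0,L_j(x_0))}=dJ_{x_0}(L_j(x_0))\) annihilates \(T_{L_j(x_0)}\orb\) because \(L_j(x_0)\) is a critical point of \(J_{x_0}|_{\orb}\). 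Hence the second term vanishes and \(d\Gamma_j|_{x_0}=d_xJ|_{(x_0,L_j(x_0))}\). Since \(V_L(x)=V(x)+J_x(L)\) for fixed \(L\), and \(J_x(cL)=c^2J_x(L)\), this yields for every \(c\in\R\)
\[
dV_{cL_j(x_0)}\big|_{x_0}=dV\big|_{x_0}+c^2\,d\Gamma_j\big|_{x_0}.
\]

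Now assemble the equivalence using the characterisation recalled after~\eqref{H_first_deriv}: \((x_0,L_0)\) is a RE iff \(L_0\) is a critical point of \(J_{x_0}\) on the coadjoint orbit through it and \(x_0\) is a critical point of \(V_{L_0}\). If \(x_0\) is a normal RE whose angular momentum \(L_0\) lies on \(c\orb\), then \(c^{-1}L_0\in\orb\) is a normal critical point of \(J_{x_0}|_{\orb}\), hence equals \(L_j(x_0)\) for a unique \(j\); the identity above turns criticality of \(V_{L_0}\) into \(dV|_{x_0}=-c^2\,d\Gamma_j|_{x_0}\), i.e. \(\nabla V=-\kappa\nabla\Gamma_j\) with \(\kappa=c^2\ge0\). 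Conversely, given \(\nabla V=-\kappa\nabla\Gamma_j\) at \(x_0\) with \(\kappa\ge0\), put \(L_0:=\sqrt\kappa\,L_j(x_0)\); then \(L_0\) is a normal critical point of \(J_{x_0}\) on its own coadjoint orbit \(\sqrt\kappa\,\orb\), and the identity shows \(x_0\) is a critical point of \(V_{L_0}\), so \((x_0,L_0)\) is a normal RE with momentum on a scalar multiple of \(\orb\). Finally, wherever \(d\Gamma_j\ne0\), the relation \(\nabla V=-\kappa\nabla\Gamma_j\) says precisely that \(dV|_{x_0}\) annihilates \(\ker d\Gamma_j|_{x_0}\), i.e. \(x_0\) is a critical point of \(V\) restricted to the leaf \(\{\Gamma_j=\Gamma_j(x_0)\}\).

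I expect the main obstacle to be the smoothness-and-gluing step: verifying that near a normal critical point the critical-point condition cuts out a genuine smooth family \(L_j(x)\) (where ``normal'' is exactly the transversality hypothesis of the implicit function theorem), and that the critical values \(\Gamma_j\) are section-independent so that the web lives on shape space rather than on \(Q\). Keeping track of the rescaling \(c\orb\)---which is what forces the Lagrange multiplier to satisfy \(\kappa\ge0\) and dictates the match between a given RE and the correct branch---and describing the degenerate locus on which the web is singular are the remaining points requiring care; the envelope identity itself, the conceptual heart of the argument, is a one-line computation.
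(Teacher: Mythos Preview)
Your argument is correct and complete, but it reaches the key identity by a different mechanism than the paper. The paper does not use the envelope identity; instead it exploits the gauge freedom in the section. Having obtained the smooth branches \(L_j(x)\in\orb\), the paper replaces \(\sigma\) by \(g(x)\sigma(x)\) with \(g(x)\) chosen so that \(\Ad_{g(x)}^*L_0(x)=L_0\). For this section one has \(J_{g(x)\sigma(x)}(L_0)=J_{\sigma(x)}(L_0(x))=\Gamma_0(x)\) \emph{exactly}, hence \(V_{L_0}=V+\Gamma_0\) as functions on \(U\), and the gradient condition is immediate. Your route avoids this gauge trick entirely: you keep the section fixed and observe that when differentiating \(\Gamma_j(x)=J_x(L_j(x))\) the \(L\)-variation drops out because \(dL_j\) is tangent to \(\orb\) and \(L_j(x_0)\) is critical there, so only the first-order equality \(d\Gamma_j|_{x_0}=d_xJ|_{(x_0,L_j(x_0))}\) survives. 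Both yield \(dV_{L_0}|_{x_0}=dV|_{x_0}+d\Gamma_j|_{x_0}\). The paper's approach is slicker once the gauge choice is spotted and gives an exact functional identity (which later streamlines the Hessian computation in the stability section); yours is more self-contained, makes the section-independence of \(\Gamma_j\) explicit, and requires no cleverness beyond the chain rule.
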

	\begin{proof}
		As the coadjoint orbit \(\orb\) is compact it must contain finitely many non-degenerate critical points \(L_0,\dots, L_k\) of the function \(J_{x_0}\). Non-degenerate critical points are stable under perturbations, and so locally in some neighbourhood \(U\) of \(x_0\) we have \(L_j(x)\) with \(L_j(x_0)=L_j\) and which are critical points of  \(J_{x}\) restricted to \(\orb\) for each \(x\in U\).
		
		Introduce the functions \(\Gamma_j(x)=J_x(L_j(x))\). We now make an explicit choice of section \(\sigma\), choosing instead the section \(g(x)\sigma(x)\) where \(g(x)\) satisfies
		\[
		\Ad^*_{g(x)}L_0(x)=L_0.
		\]
		In light of Remark~\ref{sym2_rep} we see that for this choice of section \(
		V_{L_0}=V+\Gamma_0\). Since \(x_0\) must be a critical point of the amended potential at the RE, it follows that 
		\begin{equation}\label{directed_multiplier}
			\nabla V\big|_{x_0}=-\nabla\Gamma_0\big|_{x_0}.
		\end{equation}
	The term on the right hand side is quadratic in \(L_0\), and therefore, if \(\nabla V\) is a negative scalar multiple of \(\nabla\Gamma_0\), then there exists a scalar multiple of \(L_0\) for which Eq.~\eqref{directed_multiplier} holds.
	\end{proof}

We may adapt this construction to define a different web which classifies RE according to their angular velocity \(\omega\) instead of the momentum \(L\). Introduce the quadratic form 
\begin{equation}
	\label{K_defn}
	K_x(\omega)=\frac{1}{2}\langle\inertia_x(\omega),\omega\rangle
\end{equation}
on \(\mathfrak{g}\) and observe that \(J_x(L)=K_x(\omega)\) for \(L=\inertia_x(\omega)\).
\begin{lem}\label{lemma}
	The following are equivalent: \(L=\inertia_x(\omega)\) satisfies
	\begin{equation}\label{coad}
		\ad^*_\omega L=0;
	\end{equation}
	\(L\) is a critical point of \(J_x\) restricted to the coadjoint orbit through \(L\); and, \(\omega\) is a critical point of \(K_x\) restricted to the adjoint orbit through \(\omega\).
\end{lem}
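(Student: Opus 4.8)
The plan is to reduce all three statements to one linear-algebra identity by computing the differentials of \(J_x\) and \(K_x\) along the relevant orbits and then invoking the symmetry of \(\inertia_x\). Recall first that the tangent space to the coadjoint orbit through \(L\) is \(\{\ad^*_\eta L:\eta\in\mathfrak g\}\) and the tangent space to the adjoint orbit through \(\omega\) is \(\{[\eta,\omega]:\eta\in\mathfrak g\}\). Next, since \(\inertia_x\) and hence \(\inertia_x^{-1}\) are symmetric, differentiating \(J_x(L)=\tfrac12\langle L,\inertia_x^{-1}L\rangle\) in a direction \(\delta L\) gives \(DJ_x(L)\cdot\delta L=\langle\delta L,\inertia_x^{-1}L\rangle=\langle\delta L,\omega\rangle\), and similarly \(DK_x(\omega)\cdot\delta\omega=\langle\inertia_x(\omega),\delta\omega\rangle=\langle L,\delta\omega\rangle\).

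With these in hand the equivalences are immediate. For \((2)\Leftrightarrow(1)\): taking \(\delta L=\ad^*_\eta L\), the point \(L\) is critical for \(J_x\) on its coadjoint orbit precisely when \(\langle\ad^*_\eta L,\omega\rangle=0\) for all \(\eta\); and \(\langle\ad^*_\eta L,\omega\rangle=\langle L,[\eta,\omega]\rangle=-\langle\ad^*_\omega L,\eta\rangle\), which vanishes for every \(\eta\) if and only if \(\ad^*_\omega L=0\). For \((3)\Leftrightarrow(1)\): taking \(\delta\omega=[\eta,\omega]\), the point \(\omega\) is critical for \(K_x\) on its adjoint orbit precisely when \(\langle L,[\eta,\omega]\rangle=0\) for all \(\eta\), which is the very same condition \(\ad^*_\omega L=0\). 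One can also cross-check consistency using the already noted fact that \(J_x(L)=K_x(\omega)\) when \(L=\inertia_x(\omega)\).

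There is no real obstacle here; it is a one-screen computation. The only points meriting a line of care are fixing the sign convention for \(\ad^*\) (so that \(\langle\ad^*_\eta L,\cdot\rangle=\langle L,[\eta,\cdot]\rangle\)) and noting that ``critical point of a function restricted to an orbit'' is meaningful because the (co)adjoint orbits are immersed submanifolds with the tangent spaces recalled above; once these are pinned down the three conditions collapse to \(\ad^*_\omega L=0\).
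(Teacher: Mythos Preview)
Your proof is correct and follows essentially the same route as the paper: compute the derivative of \(J_x\) along \(\ad^*_\eta L\) and of \(K_x\) along \([\eta,\omega]\), use symmetry of \(\inertia_x\) to reduce both to \(\langle\ad^*_\eta L,\omega\rangle=-\langle\ad^*_\omega L,\eta\rangle\), and conclude. The paper presents this as a single displayed identity rather than two separate equivalences, but the content is identical.
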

\begin{proof}
	The infinitesimal change in \(J_x(L)\) generated by the tangent vector \(\ad^*_\xi L\) is
	\[
	\frac{1}{2}\langle \ad^*_\xi L,\inertia_x^{-1}L\rangle+\frac{1}{2}\langle L,\inertia_x^{-1}(\ad^*_\xi L)\rangle=\langle \ad^*_\xi L,\omega\rangle=-\langle \ad^*_\omega L,\xi\rangle.
	\]
	This is also the same as the infinitesimal change in \(K_x(\omega)\) generated by the tangent vector \(\ad_\xi\omega\). Therefore, both points are critical if this is zero for all \(\xi\).
\end{proof}
Differentiating the constant  \(\langle\omega_0,\inertia_{x}\inertia^{-1}_{x}L_0\rangle\) reveals that the term \(\langle L,\dot{\inertia}^{-1}_xL\rangle\) in Eq.~\eqref{H_first_deriv} may be replaced with \(-\langle\dot{\inertia}_x(\omega),\omega\rangle\). Together with the previous lemma this gives us an alternative characterisation of RE. The pair \((x_0,L_0)\) is a RE if and only if \(\omega_0=\inertia_{x_0}^{-1}(L_0)\) is a critical point of \(K_{x_0}\) restricted to the adjoint orbit \(\orb\), and if \(x_0\) is a critical point of the \emph{\textbf{(reduced) augmented potential}}
\begin{equation}\label{augmented_pot}
	V_{\omega_0}(x)=V(x)-K_x(\omega_0).
\end{equation}
We will also call a RE normal if \(\omega_0\) is a non-degenerate critical point of \(K_{x_0}\) restricted to \(\orb\). Theorem~\ref{main_thm} may now be modified to
\begin{thm}\label{augmented_thm}
	For every adjoint orbit \(\orb\) in \(\mathfrak{g}\) we may define the \textup{\textbf{augmented web}} on shape space defined locally by the level sets of functions \(\Lambda_0,\dots,\Lambda_k\). A point \(x_0\) is a normal relative equilibrium with angular velocity belonging to a scalar multiple of \(\orb\) if and only if \(\nabla V=\kappa\nabla\Lambda_j\) for some \(j\) and some \(\kappa\ge 0\). In particular, \(x_0\) must be a critical point of the potential restricted to a leaf of constant \(\Lambda_j\).
\end{thm}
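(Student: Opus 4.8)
The plan is to transcribe the proof of Theorem~\ref{main_thm} into the ``body angular velocity'' picture, using Lemma~\ref{lemma} to translate between critical points of \(J_x\) on a coadjoint orbit and critical points of \(K_x\) on an adjoint orbit, and using the characterisation of relative equilibria via the augmented potential \(V_{\omega_0}\) in place of the amended potential \(V_{L_0}\). I would fix an adjoint orbit \(\orb\subset\mathfrak{g}\) and a point \(x_0\in\B\). Since \(\orb\) is compact, \(K_{x_0}\) restricted to \(\orb\) has only finitely many non-degenerate critical points \(\omega_0,\dots,\omega_k\); as non-degenerate critical points are stable under perturbation there are, on a neighbourhood \(U\) of \(x_0\), smooth maps \(\omega_j\colon U\to\orb\) with \(\omega_j(x_0)=\omega_j\) and with each \(\omega_j(x)\) a critical point of \(K_x\) restricted to \(\orb\). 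The functions of the statement are then \(\Lambda_j(x)=K_x(\omega_j(x))\), and their level sets foliate \(U\) regularly wherever \(d\Lambda_j\neq 0\) and the \(\omega_j(x)\) remain non-degenerate, i.e.\ away from the locus where critical points of \(K_x|_{\orb}\) collide --- which is precisely where the web degenerates.

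Next I would record how \(K_x\) transforms under a change of local section: by Remark~\ref{sym2_rep}, replacing \(\sigma\) by \(h\sigma\) sends \(\inertia_{\sigma(x)}\) to \(\Ad_{h(x)}^*\circ\inertia_{\sigma(x)}\circ\Ad_{h(x)}^{-1}\) and hence sends \(\omega\mapsto K_x(\omega)\) to \(\omega\mapsto K_x(\Ad_{h(x)}^{-1}\omega)\). Since \(\orb\) is stable under the adjoint action the critical values \(\Lambda_j\) are unaffected, while choosing \(h\) near \(x_0\) with \(\Ad_{h(x)}\omega_0(x)=\omega_0\) for a fixed vector \(\omega_0\) --- possible smoothly because the orbit map \(G\to\orb\) is a submersion and admits local sections --- makes the tracked critical point the constant \(\omega_0\) and turns the augmented potential into \(V_{\omega_0}=V-\Lambda_0\) identically on \(U\). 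Invoking the characterisation of relative equilibria recorded just before the statement, \((x_0,\inertia_{x_0}(\omega_0))\) is a relative equilibrium if and only if \(\omega_0\) is a critical point of \(K_{x_0}|_{\orb}\) (true by construction) and \(x_0\) is a critical point of \(V_{\omega_0}=V-\Lambda_0\), i.e.\ \(\nabla V|_{x_0}=\nabla\Lambda_0|_{x_0}\); the relative equilibrium is normal because \(\omega_0\) was chosen non-degenerate. (Alternatively, one can omit the gauge change and differentiate \(\Lambda_j(x)=K_x(\omega_j(x))\) at \(x_0\) directly: the contribution from the motion of \(\omega_j\) pairs \(\partial_\omega K_{x_0}\) at \(\omega_0\) with a vector tangent to \(\orb\), so it vanishes, leaving \(\nabla\Lambda_j|_{x_0}\) equal to the \(x\)-gradient of \(K_x(\omega_0)\) with \(\omega_0\) frozen.) This gives the claim with multiplier \(1\) when the angular velocity lies on \(\orb\) itself.

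Finally, to obtain a general \(\kappa\ge 0\) and ``a scalar multiple of \(\orb\)'', I would use that \(K_x\) is homogeneous of degree \(2\) in \(\omega\): for \(c>0\) the rescaled set \(c\orb\) is again an adjoint orbit, its non-degenerate critical points of \(K_x\) are the \(c\omega_j(x)\), and the corresponding web functions are \(c^2\Lambda_j\). Applying the previous step to \(c\orb\) shows \(x_0\) is a normal relative equilibrium with angular velocity in \(c\orb\) if and only if \(\nabla V|_{x_0}=c^2\nabla\Lambda_j|_{x_0}\) for some \(j\); setting \(\kappa=c^2\) gives one direction and \(c=\sqrt{\kappa}\) the other, the case \(\kappa=0\) being an honest equilibrium of zero angular velocity. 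I do not anticipate a real obstacle --- Lemma~\ref{lemma}, the augmented-potential characterisation, and Remark~\ref{sym2_rep} do the work --- and the only points genuinely needing care are the smooth, consistent choice of the gauge \(h(x)\) (or, via the envelope shortcut, checking that the variational term drops out) and the bookkeeping of the quadratic scaling, so that the Lagrange multiplier enters as an inequality \(\kappa\ge 0\) rather than an equality, mirroring the magnitude relation \(|L_j|^2=\kappa\lambda_j^2\) of Theorem~\ref{introduction_thm}.
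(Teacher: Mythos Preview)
Your proposal is correct and follows essentially the same approach as the paper: the paper does not give a separate proof of this theorem but simply states that Theorem~\ref{main_thm} ``may now be modified'' using the augmented-potential characterisation and Lemma~\ref{lemma}, and your transcription---tracking non-degenerate critical points \(\omega_j(x)\) of \(K_x|_\orb\), gauging so that \(\omega_0(x)\equiv\omega_0\) via \(\Ad_{h(x)}\omega_0(x)=\omega_0\), and then reading off \(V_{\omega_0}=V-\Lambda_0\) and the quadratic scaling in \(\omega\)---is exactly that modification carried out in full. Your parenthetical envelope argument (freezing \(\omega_0\) when differentiating \(\Lambda_j\)) is a harmless alternative to the gauge change and yields the same gradient identity.
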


	\subsection{The Special Case of Rotational Symmetry}

	For \(G\) compact we may identify \(\mathfrak{g}\) with its dual, and the adjoint representation with the coadjoint representation. The inertia tensor is now identified with a symmetric map \(\inertia_{x}\colon\mathfrak{g}\rightarrow\mathfrak{g}\). Eq.~\eqref{coad} becomes
	\begin{equation}\label{Lie_bracket}
		[\omega,L]=0,
	\end{equation}
	where \([~,~]\) denotes the Lie bracket on \(\mathfrak{g}\). In particular, a necessary condition for \((x,L)\) to be a RE is that \(L=\inertia_x(\omega)\) be contained to the centralizer of \(\omega\).

	\begin{proof}[Proof of Theorem~\ref{introduction_thm}]
		For the Lie algebra \(\mathfrak{so}(3)\) the centralizer of any non-zero \(\omega\) is the line spanned by \(\omega\). Therefore, in the special case where \(\mathfrak{g}=\mathfrak{so}(3)\) the solutions to Eq.~\eqref{Lie_bracket} are the eigenvectors of \(\inertia_x\). Observe that a RE with angular velocity \(\omega\) is normal if and only if the corresponding eigenvalue has multiplicity one

		The orbits of \(\mathfrak{so}(3)\) are spheres of constant \(|\omega|^2\), and so all non-zero orbits are scalar multiples of each other. If we now apply Theorem~\ref{augmented_thm} to this special case then we see that the web is defined by the functions
		\[
		\Lambda_j=K_x(\omega_j)=\frac{1}{2}|\omega_j|^2\lambda_j
		\]
		where \(\omega_j\) is an eigenvector of \(\inertia_{x}\) with eigenvalue \(\lambda_j\).
	\end{proof}
	\begin{rmk}
		For rotational symmetry the augmented and amended webs coincide. One gives the level sets of \(\lambda_j\), and the other gives those of \(\lambda_j^{-1}\). We shall therefore simply refer to `the web' in these cases.
	\end{rmk}
	\begin{ex}[The spherical 2-body problem]\label{2body}
		Consider two non-colinear particles \(q_1\) and \(q_2\) on the unit sphere with masses \(m_1\) and \(m_2\), respectively. Suppose that they are subject to a strictly attractive potential force which only depends on the angle \(\theta\in(0,\pi)\) subtended between them. The orbit-map \(\pi(q_1,q_2)=\theta\) is a principal \(\mathbf{SO}(3)\)-bundle and admits a global section \(\sigma\) sending \(\theta\) to the pair \(q_1=(1,0,0)\) and \(q_2=(\cos\theta,\sin\theta,0)\). The locked inertia tensor for this pair of points is 
		\begin{equation}
			\inertia_\theta=\begin{pmatrix}
				m_2	\sin^2\theta & -\frac{m_2}{2}\sin 2\theta& 0\\
				-\frac{m_2}{2}\sin 2\theta & m_1+m_2\cos^2\theta & 0\\
				0 & 0 & m_1+m_2
			\end{pmatrix}
		\end{equation}
		with characteristic polynomial 
		\begin{equation}
			(m_1+m_2-t)\left(2t-m_1-m_2-\sqrt{D_\theta}\right)\left(2t-m_1-m_2+\sqrt{D_\theta}\right),
		\end{equation}
		where \(D_\theta=m_1^2+2m_1m_2\cos 2\theta+m_2^2\). The eigenvalue \(\lambda_0=m_1+m_2\) is constant in \(\theta\), and therefore defines a trivial leaf of the web equal to the whole of shape space. The two remaining eigenvalues
		\begin{equation}
				\lambda_{\pm}=\frac{m_1+m_2\pm\sqrt{D_\theta}}{2}
		\end{equation}
		define leaves of the web which are discrete points. The derivatives \(d\lambda_{\pm}/d\theta=\mp m_1m_2\sin 2\theta/\sqrt{D_\theta}\) have opposite signs, and hence, for a strictly attractive force with \(dV/d\theta>0\), there exists a single normal RE (up to time-reversal symmetry) for every \(\theta<\pi/2\) with inertia \(\lambda_-\), and every \(\theta>\pi/2\) with inertia \(\lambda_+\).
	
		To handle abnormal RE we must first identify the \(x_0\) in shape space for which the inertia tensor admits repeated eigenvalues. Once we have done so, we must find when \(x_0\) is a critical point of \(V_{\omega}\) as \(\omega\) ranges over the repeated eigenspace. In this example, the inertia \(\inertia_\theta\) admits repeated eigenvalues when \(m_1=m_2=m\) and for \(\theta=\pi/2\). In this case, the eigenspace for the repeated eigenvalue is the plane \(\omega=(u,v,0)\) with \(u,v\in\R\). A computation reveals
		\begin{equation}
			\frac{1}{2}\frac{d}{d\theta}\Big|_{\theta=\pi/2}\langle \inertia_{\theta}(\omega),\omega\rangle=muv
		\end{equation}
		from which we conclude that an entire family of abnormal RE exist for the precise case of equal masses and \(\theta=\pi/2\), provided \(muv=|V'(\pi/2)|\). This completes the classification of RE for the spherical 2-body problem \cite{jamesborisov}.
	\end{ex}

	\section{Examples of Webs}
	The most interesting examples of 3-webs are those which we can visualise in 3-dimensional space. We should therefore like to gather examples of 6-dimensional configuration spaces \(Q\) upon which \(\mathbf{SO}(3)\) acts freely. Below we enumerate four examples together with a physical motivation for each.
	\[
	\renewcommand{\arraystretch}{1.2}
	\begin{array}{l|l}
		Q &  \text{Physical example}\\\hline
		\text{Sym}^2\R^3 & \text{a rubber ball}\\
		\R^3\times\R^3 & \text{a triatomic molecule}\\
		\mathbf{SE}(3) &  \text{a body in a central force field }\\
		\mathbf{S}^2\times 	\mathbf{S}^2\times 	\mathbf{S}^2 &\text{the spherical $3$-body problem}
	\end{array}
	\]
	In this section we shall exhibit the 3-webs in the first three examples, presenting them in increasing order of complexity. For the rubber ball the web is given by linear planes, for the triatomic molecule they are planes and quadratic cones, and for the orbital satellite they are an intriguing family of quartic surfaces. 
	
	For more general symmetry groups there are more solutions to Eq.~\ref{Lie_bracket} than just the eigenvectors of \(\inertia_x\). For a fixed \(x\) this equation is quadratic in \(\omega\in\mathfrak{g}\) and the solutions depend very delicately upon the nature of the Lie algebra and the inertia tensor. Furthermore, for larger groups there is a greater variety of co/adjoint orbits. Consequently, there exists a family of augmented/amended webs parametrised by the equivalence classes of co/adjoint orbits, modulo scaling. 
	
	An example of a mechanical system with a larger symmetry group is the Dirichlet system, whose RE were famously classified by Riemann and are known as the Riemannian ellipsoids \cite{riemann}.
	\[
	\renewcommand{\arraystretch}{1.2}
	\begin{array}{l|l}
		Q &  \text{Physical example}\\\hline
		\textbf{SL}(3)& \text{an incompressible liquid drop}
	\end{array}
	\]
	This \(Q\) admits a symmetric group action by \(\mathbf{SO}(3)\times\mathbf{SO}(3)\), whose adjoint orbits are products of spheres \(\textbf{S}^2_{\rho_1}\times\textbf{S}^2_{\rho_2}\) with radii \(\rho_{1,2}\). We conclude the section by demonstrating how this defines a family of augmented 6-webs on shape space parametrised by the ratio \(\rho_1:\rho_2\).
	
	\subsection{The Rubber Ball}
	 Suppose we have a rubber ball which may be deformed into the shape of an ellipsoid. We shall model the deformation of the ball by supposing that at time \(t\) the particle initially at \(x_0\) is now at \(S(t)x_0\), where \(S\) is a symmetric matrix. The kinetic energy is given by
	\begin{equation*}
		\frac{1}{2}\int |\dot{S}x_0|^2~dx_0=\frac{1}{2}\Tr (\dot{S}^T\dot{S}),
	\end{equation*}
	where we have assumed that \(\int x_0x_0^Tdx_0=\Id\). This defines a metric on the configuration space \(\text{Sym}^2\R^3\) which is invariant with respect to conjugation by \(\mathbf{O}(3)\). The orbit map \(\pi\colon\text{Sym}^2\R^3\rightarrow\R^3_{\ge 0}/\mathbf{S}_3\) for this action sends \(S\) to its eigenvalues \((x_1,x_2,x_3)\) modulo permutations. 
	\begin{rmk}\label{glib}
		We note that this shape space is not a smooth manifold, nor is the action of \(\mathbf{O}(3)\) on \(\text{Sym}^2\R^3\) free. However, we can simply sidestep these technicalities by working locally in a region of shape space where \(x_1<x_2<x_3\). 
	\end{rmk}
	
	We choose a local section which sends the eigenvalues to the diagonal matrix \(x=\text{diag}(x_1,x_2,x_3)\). The kinetic energy generated by \(\omega\in\mathfrak{so}(3)\) acting on the configuration \(x\) is thus
	\[
	\frac{1}{2}\Tr\left((\omega x-x\omega)^T(\omega x-x\omega)\right)=\frac{1}{2}\langle\omega,\inertia_x(\omega)\rangle.
	\]
	If we identify \(\mathfrak{so}(3)\) with \(\R^3\) in the standard way then \(\inertia_x\) becomes the diagonal matrix with entries \((x_1-x_2)^2, (x_1-x_3)^2, (x_2-x_3)^2\). We therefore have
	\begin{propn}
		The 3-web on shape space \(\R^3_{\ge 0}/\mathbf{S}_3\)  is given by the planes of constant \(x_i-x_j\). The inertia tensor has twice-repeated eigenvalues precisely on the planes \(2x_k=x_l+x_m\).
	\end{propn}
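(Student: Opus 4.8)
The plan is to read off both assertions directly from the explicit form of the locked inertia tensor obtained just above. Recall that, under the standard identification of \(\mathfrak{so}(3)\) with \(\R^3\), the tensor at the diagonal section \(x=\text{diag}(x_1,x_2,x_3)\) is the diagonal matrix with entries \((x_1-x_2)^2\), \((x_1-x_3)^2\), \((x_2-x_3)^2\); its eigenvalues are therefore exactly \(\lambda_{ij}=(x_i-x_j)^2\) for \(i<j\), and on the open chamber \(x_1<x_2<x_3\) (to which we restrict by Remark~\ref{glib}) these are already listed in increasing order. By Theorem~\ref{introduction_thm} the web on shape space is cut out locally by the level sets of the \(\lambda_{ij}\).

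For the first claim I would observe that on \(x_1<x_2<x_3\) the function \(\lambda_{ij}=(x_i-x_j)^2\) is a strictly monotone function of the linear coordinate \(x_j-x_i>0\), hence has the same level sets; these are the affine planes of constant \(x_i-x_j\), and since \(d(x_i-x_j)\) never vanishes, each family is a regular foliation. Letting \(\{i,j\}\) range over the three two-element subsets of \(\{1,2,3\}\) (with pairwise transverse leaves, since the normals \(e_i-e_j\) are pairwise independent) gives the asserted 3-web.

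For the second claim I would locate where two of the three eigenvalues coincide. Imposing \(\lambda_{kl}=\lambda_{km}\), i.e. \((x_k-x_l)^2=(x_k-x_m)^2\), and factoring the difference of squares yields \((x_m-x_l)(2x_k-x_l-x_m)=0\); on the chamber \(x_l\neq x_m\), so the locus is precisely the plane \(2x_k=x_l+x_m\). Running over the three choices of the distinguished index \(k\) produces the three planes in the statement, and off these planes all three eigenvalues are distinct. One also notes that two such planes meet inside the chamber only along \(x_1=x_2=x_3\), which lies on its boundary and is where \(\inertia_x\) vanishes identically, so within the chamber the repetition is always exactly double. All of this is elementary algebra of quadratics in three variables; the only genuine subtlety is the one already flagged in Remark~\ref{glib}, namely that the web and the statement are to be understood within the open chamber where the \(\mathbf{SO}(3)\)-action is free and generic eigenvalues are simple.
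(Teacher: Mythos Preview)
Your proposal is correct and follows exactly the route the paper intends: the proposition is stated immediately after the explicit diagonal form \(\inertia_x=\operatorname{diag}\bigl((x_1-x_2)^2,(x_1-x_3)^2,(x_2-x_3)^2\bigr)\) and is meant to be read off from it, so your argument---level sets of \((x_i-x_j)^2\) on the chamber are planes of constant \(x_i-x_j\), and the difference-of-squares factorisation \((x_k-x_l)^2-(x_k-x_m)^2=(x_m-x_l)(2x_k-x_l-x_m)\)---is precisely the intended one. One small remark: within the chosen chamber \(x_1<x_2<x_3\) only the plane \(2x_2=x_1+x_3\) actually meets the interior (the other two force an equality \(x_l=x_m\)), which is consistent with there being a single locus in the quotient \(\R^3_{\ge0}/\mathbf{S}_3\).
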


	\subsection{The Triatomic Molecule}
	Consider \(n\) particles located at \(q_1,\dots, q_n\) in \(\R^3\) with masses \(m_1,\dots, m_n\). The kinetic energy generated by the angular velocity \(\omega\in\mathfrak{so}(3)\cong\R^3\) is
	\begin{align*}
		\frac{1}{2}\sum m_j|\omega\times q_j|^2
		&=\frac{1}{2}\sum m_j\left(|\omega|^2|q_j|^2-\langle\omega, q_j\rangle^2\right)\\
		&=\frac{1}{2}\omega^T\left[\left(\sum m_j|q_j|^2\right)\Id-\sum m_jq_jq_j^T\right]\omega.
	\end{align*}
	Let \(Q\) denote the matrix whose \(j\)\textsuperscript{th}-column is \(\sqrt{m_j}q_j\). The kinetic energy is then  \(\frac{1}{2}\langle \omega,\inertia_{Q}\omega\rangle\) where
	\begin{equation}\label{nboday_inertia}
		\inertia_{Q}=|{Q}|^2\Id-{Q}{Q}^T
	\end{equation}
	is the inertia tensor and \(|{Q}|^2=\Tr({Q}^T{Q})\). 
	
	We shall now limit our attention to the case where \(n=3\)  and suppose that we are in a centre-of-mass frame within which \(\sum m_jq_j=0\). The configuration is therefore entirely determined by \(q_1\) and \(q_2\) alone. The shape space may be identified with the solid cone \(\mathcal{C}\) in \(\R^3\) defined by
	\[
	x_1x_2-x_3^2\ge 0,\quad\text{and}\quad x_1,x_2\ge0,
	\]
	where \(x_1=|q_1|^2\), \(x_2=|q_2|^2\), and \(x_3=\langle q_1, q_2\rangle\). We choose an orbit-map \(\pi\) sending \((q_1, q_2)\) to
	\begin{equation}
		\pi({Q})=|{Q}|^2\textup{\Id}-{Q}^T{Q},
	\end{equation}
	which we note to be a linear expression in \(x_1, x_2, x_3\).
	
	\begin{propn}
		Let \(\{\hat{\mu},\hat{\xi}, \hat{\eta}\}\) be an orthonormal basis of \(\R^3\) where \(\mu=(\sqrt{m_1},\sqrt{m_2},\sqrt{m_3})^T\), and define the symmetric matrix
		\begin{equation}
			{S}=2\hat{\mu}\hat{\mu}^T+\hat{\xi}\hat{\xi}^T+\hat{\eta}\hat{\eta}^T.
		\end{equation}
		This matrix belongs to the interior of \(\mathcal{C}\), and the inertia tensor has twice-repeated eigenvalues precisely along the line spanned by \(S\) and on the boundary \(\partial\mathcal{C}\). The 3-web in \(\mathcal{C}\) consists of the translations along this line of: the plane \(\Tr\pi(Q)=0\), the cone \(\partial\mathcal{C}\), and the reversed cone \(-\partial\mathcal{C}\). 
	\end{propn}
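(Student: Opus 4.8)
I would diagonalise the inertia tensor and the orbit map \(\pi\) simultaneously, show they have the same spectrum, and then read the web off the affine geometry of \(\mathcal{C}\). The first step is to exploit the centre-of-mass constraint. With \(\mu=(\sqrt{m_1},\sqrt{m_2},\sqrt{m_3})^{T}\) the constraint \(\sum m_jq_j=0\) is exactly \(Q\mu=0\), so \(\mu\) lies in the kernel of \(Q\) and of \(Q^{T}Q\), and \(\rk Q\le 2\). Being \(3\times3\) of rank at most \(2\), the matrices \(QQ^{T}\) and \(Q^{T}Q\) share the spectrum \(\{0,a,b\}\) with \(a,b\ge0\), and \(\hat\mu\in\ker(Q^{T}Q)\). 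Since \(|Q|^{2}=\Tr(Q^{T}Q)=a+b\), the identity \(\inertia_{Q}=|Q|^{2}\Id-QQ^{T}\) gives \(\inertia_{Q}\) the eigenvalues \(\{a,b,a+b\}\); likewise \(\pi(Q)=|Q|^{2}\Id-Q^{T}Q\) has the \emph{same} eigenvalues, but now the distinguished eigenvalue \(a+b\) is always carried by the fixed vector \(\hat\mu\), while \(a\) and \(b\) are the eigenvalues of \(\pi(Q)|_{\mu^{\perp}}\). Thus, writing \(S=2\hat\mu\hat\mu^{T}+\hat\xi\hat\xi^{T}+\hat\eta\hat\eta^{T}=\Id+\hat\mu\hat\mu^{T}\), both \(\pi(Q)\) and \(S\) are diagonalised by the splitting \(\R^{3}=\R\hat\mu\oplus\mu^{\perp}\), and by Theorem~\ref{introduction_thm} the web is the collection of level sets of \(a\), \(b\) and \(a+b\), expressed in the linear coordinate \(\pi(Q)\) on \(\mathcal{C}\).

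Next I would settle the repeated-eigenvalue statement. Two of \(\{a,b,a+b\}\) coincide precisely when \(a=b\), \(a=0\), or \(b=0\). The equations \(a=0\) or \(b=0\) say \(\rk(Q^{T}Q)\le1\), i.e. \(q_1\) and \(q_2\) are colinear, which by Cauchy--Schwarz is exactly the boundary \(\{x_1x_2=x_3^{2}\}=\partial\mathcal{C}\). The equation \(a=b\) forces \(\pi(Q)|_{\mu^{\perp}}=a\,\Id_{\mu^{\perp}}\), hence \(\pi(Q)=aS\); conversely each \(aS\) with \(a\ge0\) is realised (take \(Q=\sqrt a\,P_{\mu^{\perp}}\), which is a genuine non-colinear configuration for \(a>0\)), so \(\{a=b\}\cap\mathcal{C}\) is the ray through \(S\), and \(S\) itself, where \(a=b=1>0\), lies in the interior of \(\mathcal{C}\).

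Finally I would identify the foliations. As \(a+b=\tfrac12\Tr\pi(Q)\) and \(\Tr S=4\ne0\), the \((a+b)\)-leaves are the planes parallel to \(\{\Tr\pi(Q)=0\}\), i.e. its translates along \(\R S\). For the other two I would use the simultaneous diagonalisation: \(\pi(Q)-cS\) again has its \(\hat\mu\)-eigenvalue \((a-c)+(b-c)\) equal to the sum of its \(\mu^{\perp}\)-eigenvalues \(a-c\) and \(b-c\), hence lies in the image cone \(\mathcal{C}\) iff \(a,b\ge c\) and on \(\partial\mathcal{C}\) iff moreover \((a-c)(b-c)=0\); ordering the eigenvalues locally, this says \(\pi(Q)-cS\in\partial\mathcal{C}\iff a=c\), so the \(a\)-leaves are exactly the translates \(\partial\mathcal{C}+cS\). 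Symmetrically \(cS-\pi(Q)\in\partial\mathcal{C}\iff b=c\), so the \(b\)-leaves are the translates \(-\partial\mathcal{C}+cS\) of the reversed cone. I expect the one delicate point to be purely bookkeeping — verifying that the image of \(\pi\) is precisely \(\mathcal{C}\), via the affine bijection \(N\mapsto(\Tr N)\Id-N\) on \(\{N\succeq0:N\mu=0\}\), so that ``\(\pi(Q)-cS\in\mathcal{C}\)'' is the right membership test — together with being honest that the \(a\)- and \(b\)-foliations are only defined away from \(\partial\mathcal{C}\cup\R S\), where they exchange; this is exactly why the construction yields a \(3\)-web in the sense of the definition, namely regular foliations defined almost everywhere.
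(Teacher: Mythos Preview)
Your proposal is correct and follows essentially the same approach as the paper's proof: both hinge on the observation that \(\inertia_Q\) and \(\pi(Q)\) share the same characteristic polynomial, that \(\hat\mu\) is always an eigenvector of \(\pi(Q)\), and that translating by multiples of \(S\) reduces each leaf to one on which an eigenvalue vanishes. Your version is more explicit---you name the eigenvalues \(\{a,b,a+b\}\) and spell out the membership test \(\pi(Q)-cS\in\partial\mathcal{C}\) carefully---but the underlying argument is the same as the paper's terser treatment.
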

	
	\begin{proof}
		Observe that the characteristic polynomials of \(\inertia_{Q}\) and \(\pi(Q)\) coincide. Therefore, it suffices to study the eigenvalues of \(\pi(Q)\). 
		
		The centre-of-mass condition is equivalent to \(Q\mu=0\). Therefore, \(\pi(Q)\) always has \(\mu\) as an eigenvector with eigenvalue \(|Q|^2\). The matrix \(S\) is the image under \(\pi\) of \(\hat{\xi}\hat{\xi}^T+\hat{\eta}\hat{\eta}^T\), and has a repeated eigenvalue in the plane orthogonal to \(\mu\). Consequently, we may translate any leaf of the web by some multiple of \(S\) to obtain a leaf with a constant zero eigenvalue. It therefore suffices to consider the leaves where an eigenvalue is zero: the subset with \(|Q|^2=0\) is the plane \(\Tr\pi(Q)=0\), and if any other eigenvalue is zero then the particles are colinear, and hence \(x_1x_2=x_3^2\).
	\end{proof}
	A picture of the components of this 3-web is given in Figure~\ref{triatomic_web}. An understanding of the geometry of the 3-web can be very useful in determining the existence of RE. For instance, suppose that a given level set of the potential is a bounded subset in shape space. Consider the effect of translating the leaves of the 3-web along the line spanned by \(S\). The planes must intersect the level set tangentially at least twice. So too must the forward  and reverse cones. We therefore have a pair of critical points of the reduced potential for each constant eigenvalue. It can be shown that for at least three of these critical points the gradient of the potential is oriented in the direction of increasing eigenvalue. Theorem~\ref{introduction_thm} applied to this idea then gives a result of Montaldi \& Roberts \cite{jamesmolecule} concerning the bifurcation of an equilibrium into 3 families of RE.
	
	\begin{figure}
		\centering
		\includegraphics[scale=0.6]{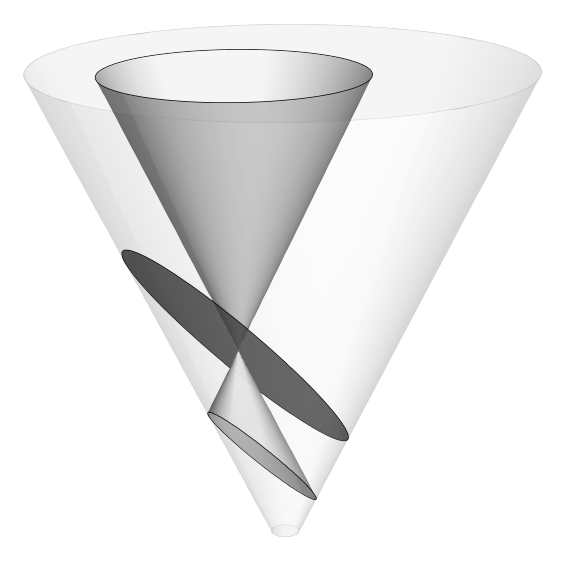}
		\caption{\label{triatomic_web}The 3-web in shape space for the triatomic molecule.}
	\end{figure}
	
	\begin{propn}
		Consider a triatomic molecule in a stable non-degenerate equilibrium configuration \(Q_0\). For any \(\epsilon>0\) sufficiently small, there exist at least 3 distinct normal relative equilibria at \(\pi(Q_j)\) with \(|\pi(Q_0)-\pi(Q_j)|<\epsilon\) for \(j=1,2,3\).
	\end{propn}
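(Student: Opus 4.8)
The plan is to intersect a small level set of the potential around $x_0=\pi(Q_0)$ with each of the three foliations of the web and read off relative equilibria directly from Theorem~\ref{introduction_thm}. First I would pass to shape space: $V$ descends to $\B$, the image $x_0$ of the equilibrium is a critical point of $V$, and stability together with non-degeneracy forces $\Hess V(x_0)$ to be positive-definite, so $x_0$ is a nondegenerate local minimum of $V$ on $\B$. I will also take $Q_0$ to be of generic type --- it is non-collinear, so the $\mathbf{SO}(3)$-action and shape space are regular near $x_0$; $x_0$ lies off the singular locus $\R S\cup\partial\mathcal{C}$ of the web, so the three eigenvalues $\lambda_1,\lambda_2,\lambda_3$ of $\inertia_Q$ are smooth and pairwise distinct on a neighbourhood of $x_0$; and none of the $\lambda_i$ has $x_0$ as a critical point. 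Under these assumptions $x_0$ is a regular point of the $3$-web and the multiplicity-one hypothesis of Theorem~\ref{introduction_thm} holds throughout a neighbourhood of $x_0$.

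Next, for $c>V(x_0)$ small, the sublevel set $B_c=\{V\le c\}$ near $x_0$ is a compact ball with smooth boundary sphere $S_c=\{V=c\}$, and $B_c$ shrinks to $x_0$ as $c\to V(x_0)^+$. For each $i\in\{1,2,3\}$ I pick a point $p_i^c$ maximising $\lambda_i$ over $B_c$; since $\lambda_i$ has no critical point in $B_c$ for $c$ small, this maximum lies on $S_c$ and $\nabla\lambda_i(p_i^c)\ne 0$. A boundary maximum of $\lambda_i$ over the sublevel set cannot be decreased by moving into $B_c$, so $\nabla\lambda_i(p_i^c)$ points out of $B_c$, i.e.\ is a strictly positive multiple of the outward conormal $\nabla V(p_i^c)$. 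Hence $2\nabla V(p_i^c)=\kappa_i\nabla\lambda_i(p_i^c)$ with $\kappa_i>0$, and --- $\lambda_i$ being multiplicity one near $x_0$ --- Theorem~\ref{introduction_thm} certifies $p_i^c$ as a normal relative equilibrium whose angular momentum is an eigenvector of $\inertia_{p_i^c}$ with eigenvalue $\lambda_i(p_i^c)$.

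It remains to see that $p_1^c,p_2^c,p_3^c$ furnish three distinct relative equilibria close to $x_0$. Each carries a nonzero angular velocity $\omega_i$ (nonzero since $\kappa_i>0$) directed along the $i$-th eigendirection of $\inertia_{p_i^c}$; for $c$ small the eigenvalues $\lambda_i(p_i^c)\approx\lambda_i(x_0)$ are pairwise different and their eigendirections are mutually orthogonal, so no two of these relative equilibria can agree, even after time reversal. Finally $p_i^c\in S_c\to x_0$ as $c\to V(x_0)^+$, so fixing $c$ close enough to $V(x_0)$ gives $|\pi(Q_0)-p_i^c|<\epsilon$, which is the assertion.

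The step I expect to require the most care is the orientation analysis of the second paragraph: confirming that it is precisely the constrained \emph{maximum} of each $\lambda_i$ --- and not its minimum, which would produce $\kappa<0$ and hence no relative equilibrium --- that lands in the admissible range of Theorem~\ref{introduction_thm}, and that this extremum is non-degenerate enough that $\nabla\lambda_i$ does not vanish there. The other non-routine point is the genericity reduction: one must argue that a stable non-degenerate equilibrium of a triatomic molecule may be taken non-collinear and off the singular locus of the web, since it is exactly there that the foliations $\F_1,\F_2,\F_3$ and the multiplicity-one condition degenerate and the count of three could in principle fail.
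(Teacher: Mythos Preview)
Your proposal is correct and follows essentially the same approach as the paper: both arguments locate, for each of the three eigenvalue functions, a tangency between a leaf of the web and a small potential level set near the minimum $x_0$, and then invoke Theorem~\ref{introduction_thm}. The paper phrases this as translating the planes and the two families of cones along the line $\R S$ until they touch a bounded level set of $V$; you phrase it as maximising each $\lambda_i$ over a sublevel set $B_c$. These are the same mechanism.

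Where your version actually adds something is the orientation step. The paper merely asserts that ``it can be shown that for at least three of these critical points the gradient of the potential is oriented in the direction of increasing eigenvalue''; your observation that the constrained \emph{maximum} of $\lambda_i$ on $B_c$ forces $\nabla\lambda_i$ to point along the outward conormal $\nabla V$, hence $\kappa_i>0$, is precisely the missing justification. Your distinctness argument via the three mutually orthogonal eigendirections is also not spelled out in the paper but is the natural one. The genericity caveats you flag (that $x_0$ lie off the line $\R S$ and off $\partial\mathcal{C}$) are implicitly assumed by the paper as well---its subsequent remark explicitly requires the web to be regular at $x_0$---so you are not introducing any extra hypothesis the paper avoids.
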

	
	\begin{rmk}
		More generally suppose that \(x_0\) is a stable non-degenerate equilibrium in shape space. For a given coadjoint orbit suppose the \(k\)-web is regular at \(x_0\). Then in a sufficiently small neighbourhood of \(x_0\) the equilibrium bifurcates into a family of at least \(k\) normal RE.
	\end{rmk}
	\subsection{The Full-Body Satellite Problem}
	Consider the motion of a rigid body in \(\R^3\) subject to a central force directed through the origin. The configuration space may be taken to be \(\mathbf{SE}(3)=\mathbf{SO}(3)\ltimes\R^3\), whereby a configuration of the body is identified with the Euclidean motion \((a,d)\) which sends the fixed frame \(\mathcal{F}\) in space to the body frame \(a\mathcal{F}+d\). 
	
	Consider the action of \(\mathbf{SO}(3)\) on the left given by \(g(a,d)=(ga,gd)\). An orbit-map for this action is 
	\[
	\pi\colon \mathbf{SE}(3)\longrightarrow\R^3;\quad(a,d)\longmapsto x=-a^{-1}d.
	\]
	The vector \(x\) is the origin of the space-frame viewed from within the body frame. We choose \(\sigma\colon x\mapsto (\Id,-x)\) to be a section for this principal bundle and consider the kinetic energy generated by \(\omega\in\mathfrak{so}(3)\) acting on the configuration \(\sigma(x)\)
	\[
	\frac{1}{2}\int|\omega z-\omega x|^2dz=\frac{1}{2}\int\left(|\omega z|^2-2\langle \omega z,\omega x\rangle+|\omega x|^2\right) dz.
	\]
	The integral is taken over the points \(z\) belonging to the body in the body frame. If we suppose that the origin of the body frame coincides with the centre of mass of the body, then \(\int z dz=0\) and the kinetic energy simplifies to
	\[
	\frac{1}{2}\langle \inertia_0(\omega),\omega\rangle+\frac{1}{2}|\omega x|^2
	\]
	where \(\inertia_0\) is the constant inertia tensor for the body in the body frame, and where we are supposing for simplicity that the body has unit mass. By once again identifying \(\omega\) with an element in \(\R^3\) we can write the kinetic energy as \(\frac{1}{2}\langle \omega,\inertia_{x}(\omega)\rangle\) where the symmetric matrix
	\begin{equation}\label{fullbody_inertia}
		\inertia_{x}=\inertia_0+|x|^2\Id-xx^T
	\end{equation}
	is the inertia tensor. We may suppose that the body frame is chosen so that \(\inertia_0\) is diagonal, with distinct entries \(I_1< I_2< I_3\), the principal moments of inertia  of the body.
	
	\begin{propn}
		The 3-web on shape space is given by the surfaces 
		\begin{equation}\label{char_poly_fullbody}
			1=\sum_{j=1,2,3}\frac{x_j^2}{(I_j-\lambda)+|x|^2}
		\end{equation}
		for \(\lambda\ge I_1\) a given eigenvalue of \(\inertia_{x}\). The subset of shape space in which the inertia tensor admits repeated eigenvalues are the curves
		\begin{equation}\label{equi_eigencurves_fullbody}
			x_k^2(I_l-I_m)+x_l^2(I_k-I_m)=(I_k-I_m)(I_l-I_m),\quad x_m=0,
		\end{equation}
		for \(k,l,m=1,2,3\) all distinct.
	\end{propn}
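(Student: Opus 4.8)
The plan is to deduce both statements from Theorem~\ref{introduction_thm}, which identifies the leaves of the web with the level sets of the three eigenvalue functions of \(\inertia_{x}\); it therefore suffices to describe, for a fixed value \(\lambda\), the locus of \(x\in\R^{3}\) at which \(\lambda\) is an eigenvalue of \(\inertia_{x}\), and separately the locus at which two eigenvalues collide. I would set up notation by writing \(\inertia_{x}-\lambda\Id=A_{\lambda}-xx^{T}\) with \(A_{\lambda}=\mathrm{diag}(a_{1},a_{2},a_{3})\) and \(a_{j}=(I_{j}-\lambda)+|x|^{2}\), reducing everything to the linear algebra of a rank-one perturbation of a diagonal matrix.

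For the first claim I would apply the matrix determinant lemma \(\det(A_{\lambda}-xx^{T})=\det(A_{\lambda})\bigl(1-x^{T}A_{\lambda}^{-1}x\bigr)\) where \(A_{\lambda}\) is invertible, and then clear the denominator \(\det A_{\lambda}=a_{1}a_{2}a_{3}\): the characteristic equation \(\det(\inertia_{x}-\lambda\Id)=0\) becomes the polynomial identity \(a_{1}a_{2}a_{3}=\sum_{j}x_{j}^{2}\prod_{i\ne j}a_{i}\), valid for all \(x\), which is exactly Eq.~\eqref{char_poly_fullbody}. The bound \(\lambda\ge I_{1}\) I would get from \(|x|^{2}\Id-xx^{T}\succeq0\) (Cauchy--Schwarz), whence \(\inertia_{x}=\inertia_{0}+(|x|^{2}\Id-xx^{T})\succeq\inertia_{0}\succeq I_{1}\Id\), so every eigenvalue is \(\ge I_{1}\).

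For the second claim I would first rule out repeated eigenvalues away from the spheres \(a_{m}=0\). If all \(a_{j}\ne0\) then \(\det(\inertia_{x}-\lambda\Id)=\det(A_{\lambda})f(\lambda)\) with \(f(\lambda)=1-\sum_{j}x_{j}^{2}/a_{j}\), and a repeated root forces \(f(\lambda)=f'(\lambda)=0\); but \(f'(\lambda)=-\sum_{j}x_{j}^{2}/a_{j}^{2}\le0\) with equality only when \(x=0\), in which case \(f(\lambda)=1\ne0\) --- a contradiction. Hence a repeated eigenvalue occurs only where some \(a_{m}=0\), i.e.\ \(\lambda=I_{m}+|x|^{2}\); substituting this into the polynomial of the first part, and using \(a_{k}=I_{k}-I_{m}\ne0\), \(a_{l}=I_{l}-I_{m}\ne0\) for the other two indices, shows \(\lambda\) is an eigenvalue there only when \(x_{m}=0\) (the case \(x=0\) is excluded since \(\inertia_{0}\) has distinct eigenvalues). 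When \(x_{m}=0\) the tensor splits as the eigenvalue \(I_{m}+|x|^{2}\) on \(\R e_{m}\) together with the block \(B=\begin{pmatrix} I_{k}+x_{l}^{2} & -x_{k}x_{l}\\ -x_{k}x_{l} & I_{l}+x_{k}^{2}\end{pmatrix}\) on \(e_{m}^{\perp}\), with \(\{k,l\}=\{1,2,3\}\setminus\{m\}\), so \(I_{m}+|x|^{2}\) is a repeated eigenvalue exactly when \(\det\bigl(B-(I_{m}+|x|^{2})\Id\bigr)=0\); expanding and cancelling the \(x_{k}^{2}x_{l}^{2}\) term rearranges this to \(x_{k}^{2}(I_{l}-I_{m})+x_{l}^{2}(I_{k}-I_{m})=(I_{k}-I_{m})(I_{l}-I_{m})\), namely Eq.~\eqref{equi_eigencurves_fullbody}. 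I expect the only delicate point to be the bookkeeping on the degenerate strata --- the spheres \(a_{m}=0\) in the first part and the coordinate axes in the second, where the coinciding eigenvalue need not be the ``obvious'' one but is still accounted for by the curve with a different excluded index \(m\) --- everything else being routine manipulation of rank-one updates.
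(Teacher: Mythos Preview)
Your argument is correct. The first claim is handled exactly as in the paper: both of you identify the leaf through \(x\) as the vanishing locus of the characteristic polynomial, obtained by clearing denominators in the secular identity \(\det(A_\lambda-xx^T)=\det(A_\lambda)\bigl(1-x^TA_\lambda^{-1}x\bigr)\); the paper merely says ``one may verify'' where you invoke the matrix determinant lemma explicitly.

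For the repeated-eigenvalue locus your route is genuinely different. The paper argues geometrically: a two-dimensional eigenspace must meet \(x^\perp\), and any eigenvector \(\omega\perp x\) satisfies \(\inertia_x\omega=\inertia_0\omega+|x|^2\omega\), so \(\omega\) is a principal axis \(e_m\) and \(\lambda=I_m+|x|^2\); a second eigenvector \(\nu\perp e_m\) then satisfies \((\inertia_0-I_m)\nu=x\langle x,\nu\rangle\), which forces \(x_m=0\) and \(\nu\propto D^{-1}x\), and substituting back yields Eq.~\eqref{equi_eigencurves_fullbody}. You instead use the analytic structure of the secular function: since \(f'(\lambda)=-\sum_j x_j^2/a_j^2<0\) for \(x\neq0\), a double root of the characteristic polynomial cannot occur where all \(a_j\neq0\), forcing \(\lambda=I_m+|x|^2\); the polynomial identity then gives \(x_m=0\), and the \(2\times2\) block computation finishes. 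The paper's argument is shorter and immediately produces the eigenvectors, which is useful later (e.g.\ Lemma~\ref{greatcircle_lemma}); your secular-function argument is a standard and robust device for rank-one perturbations that would generalise more readily to higher dimensions, at the cost of not exhibiting the eigenbasis directly. Your closing remark about the coordinate axes is apt: the points where a repeated eigenvalue of the \(2\times2\) block differs from \(I_m+|x|^2\) lie on a curve of the family with a different excluded index, so nothing is missed.
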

	\begin{proof}
		One may verify that the characteristic polynomial of \(\inertia_x\) is the numerator obtained from Eq.~\eqref{char_poly_fullbody} after multiplying up by the denominators. 
		
		Suppose \(\inertia_x\) has a repeated eigenvalue \(\lambda\). The eigenspace must contain an eigenvector \(\omega\) orthogonal to \(x\), and so \(\inertia_x(\omega)=\inertia_0\omega+|x|^2\omega=\lambda\omega\), which implies that \(\omega\) is along a principal direction, let's say \(e_m\), and that \(\lambda=I_m+|x|^2\). Now take a second independent eigenvector \(\nu\), which we may suppose is orthogonal to \(\omega\). Then
		\begin{equation}\label{nu_eigenvector}
			\inertia_x(\nu)=\inertia_0\nu+|x|^2\nu-x\langle x,\nu\rangle=\lambda\nu=(I_m+|x|^2)\nu,
		\end{equation}
		which implies \(\inertia_0\nu-I_m\nu=x\langle x,\nu\rangle\). It follows that \(\nu\) is proportional to \(D^{-1}x\), where \(D=\text{diag}(I_k-I_m,I_l-I_m)\) with respect to the basis \(\{e_k,e_l\}\). Substituting this \(\nu\) back into Eq.~\eqref{nu_eigenvector} yields Eq.~\eqref{equi_eigencurves_fullbody}.
	\end{proof}
	%
	
	The surfaces defined in Eq.~\eqref{char_poly_fullbody} are quartic in \(x\) and come in three families depending on which of the intervals \((I_1,I_2)\), \((I_2,I_3)\), or \((I_3,\infty)\) the eigenvalue \(\lambda\) belongs to (and three singular leaves for when \(\lambda=I_1,I_2,I_3)\). The regular leaves are displayed in Figure~\ref{fullbody_web}.
	\begin{itemize}
		\item[\ding{226}]The leaves are empty for \(\lambda<I_1\). Indeed, the inertia in space can never be less than the smallest moment of inertia of the body itself. 
		\item[\ding{226}] For \(I_1<\lambda<I_2\) the leaf is a topological cylinder aligned with the \(x_1\)-axis.
		\item[\ding{226}] For \(I_2<\lambda<I_3\) the leaf consists of two sheets joined together with a topological sphere at 4 singular points in the plane \(x_1=0\).
		\item[\ding{226}]For \(\lambda>I_3\) the leaf is two concentric topological spheres joined together at 4 nodal singularities in the plane \(x_2=0\). Figure~\ref{fullbody_web} shows two cutaways of this surface to highlight this property.
	\end{itemize}

	\begin{figure}[h]
		\centering
\begin{tikzpicture}
	\draw (-3, 3) node[inner sep=0] {\includegraphics[scale=0.9]{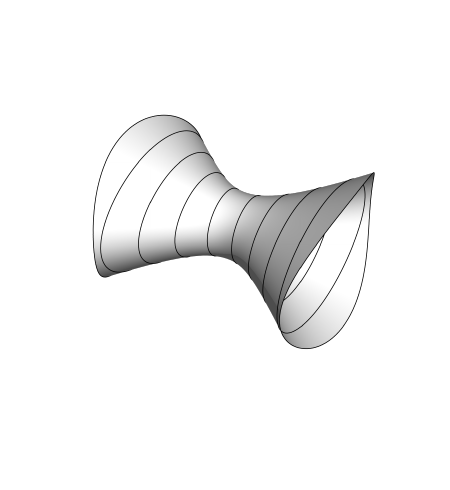}};
	\draw (-3,0.5) node {$I_1<\lambda<I_2$};
	
	\draw (3,3) node[inner sep=0] {\includegraphics[scale=0.9]{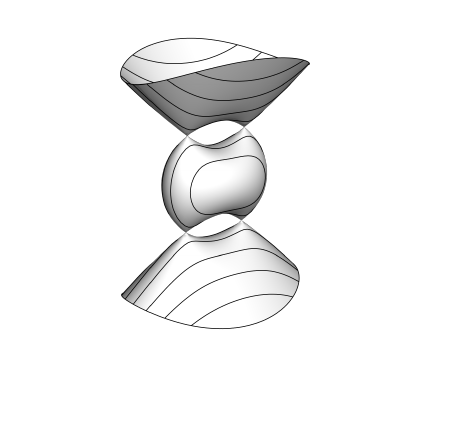}};
	\draw (2.8,0.5) node {$I_2<\lambda<I_3$};
	
	\draw (-2.5,-2) node[inner sep=0]
	{\includegraphics[scale=0.9]{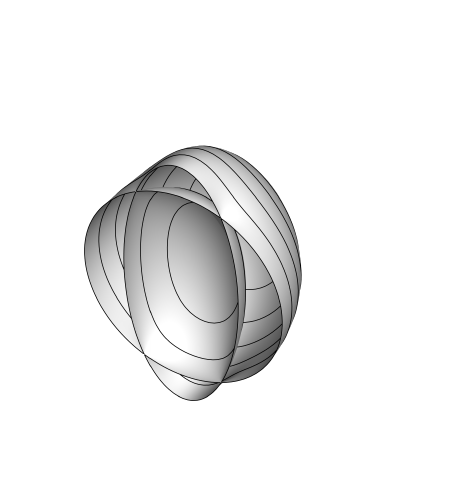}};
	\draw (-3,-5) node {$\lambda>I_3,~x_2>0$};
	
	\draw (2.5,-3.5) node[inner sep=0] {\includegraphics[scale=0.9]{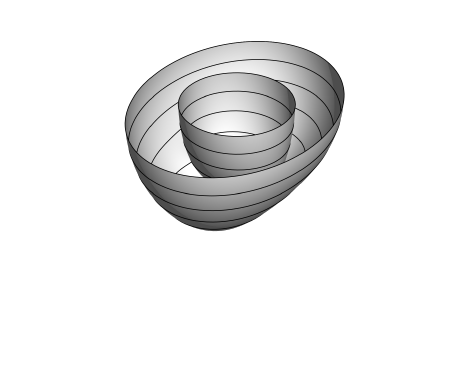}};
	\draw (2.8,-5) node {$\lambda>I_3,~x_3<0$};
	
\end{tikzpicture}
		\caption{\label{fullbody_web}The 3-web in shape space for the full-body satellite problem.}
	\end{figure}
	
	\begin{rmk}
		The curves in Eq.~\eqref{equi_eigencurves_fullbody} are empty if \(I_m\) is the largest moment of inertia. Therefore, there are 3 curves of repeated eigenvalues: an ellipse in the plane \(x_1=0\), and two hyperbolae in the plane \(x_2=0\). Notice how these curves match the singular points of the web. Indeed, singular points of a web are indicative of repeated eigenvalues.
	\end{rmk}
	

	A \emph{\textbf{great-circle RE}} is a RE for which the centre of mass of the body remains in the plane of rotation \cite{krish}. For our choice of section \(\sigma\) this is equivalent to \(\omega\) being orthogonal to \(x\). From Eq.~\ref{fullbody_inertia} this implies that \(\omega\) is parallel with a principal direction. In fact, we have
	
	\begin{lem}\label{greatcircle_lemma}
		A relative equilibrium at \(x\in\R^3\) is a great-circle motion if and only if \(x\) belongs to a principal plane, say orthogonal to \(e_m\), and the corresponding angular velocity \(\omega\) is parallel to \(e_m\) with inertia \(\lambda=I_m+|x|^2\).
	\end{lem}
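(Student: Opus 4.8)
The plan is to reduce the statement to the relative-equilibrium condition $[\omega,L]=0$ of Eq.~\eqref{Lie_bracket}, which for $\mathfrak{g}=\mathfrak{so}(3)\cong\R^3$ says precisely that $L=\inertia_x(\omega)$ is a scalar multiple of $\omega$; equivalently, $\omega$ is an eigenvector of $\inertia_x$. As recalled just before the lemma, for the section $\sigma(x)=(\Id,-x)$ a relative equilibrium is a great-circle motion exactly when $\langle x,\omega\rangle=0$. So the whole lemma amounts to checking, among relative equilibria, that $\langle x,\omega\rangle=0$ is equivalent to the stated triple of conditions: $x\perp e_m$, $\omega\parallel e_m$, and $\lambda=I_m+|x|^2$.

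For the forward implication I would start from a great-circle relative equilibrium, so that $\langle x,\omega\rangle=0$. Substituting this into the explicit form $\inertia_x=\inertia_0+|x|^2\Id-xx^T$ of Eq.~\eqref{fullbody_inertia} annihilates the rank-one term, since $xx^T\omega=x\langle x,\omega\rangle=0$, leaving $\inertia_x(\omega)=\inertia_0\omega+|x|^2\omega$. The relative-equilibrium condition then forces $\inertia_0\omega=(\lambda-|x|^2)\omega$, i.e.\ $\omega$ is an eigenvector of the constant body tensor $\inertia_0=\text{diag}(I_1,I_2,I_3)$. This is the one place where the hypothesis $I_1<I_2<I_3$ enters: the spectrum of $\inertia_0$ is simple, so $\omega$ must be parallel to one of the principal axes $e_m$, with $\lambda-|x|^2=I_m$. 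Combining $\omega\parallel e_m$ with $\omega\perp x$ gives $x\perp e_m$, so $x$ lies in the principal plane orthogonal to $e_m$.

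The converse is essentially bookkeeping: if $x$ lies in the plane orthogonal to $e_m$ and $\omega$ is parallel to $e_m$, then $\langle x,\omega\rangle=0$, and a direct substitution into Eq.~\eqref{fullbody_inertia} gives $\inertia_x(\omega)=I_m\omega+|x|^2\omega=(I_m+|x|^2)\omega$, so $\omega$ is an eigenvector with eigenvalue $\lambda=I_m+|x|^2$ (consistent with the relative-equilibrium condition) and the motion is great-circle by the characterisation above.

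I do not anticipate a real obstacle: the argument is a short linear-algebra computation once the relative-equilibrium condition has been translated into the eigenvector statement. The only point demanding care is the interplay between the section-dependent criterion ``great-circle $\iff\omega\perp x$'' and the intrinsic definition of a great-circle motion, but that has already been settled in the discussion preceding the lemma, and the essential use of $I_1<I_2<I_3$ is exactly the step that identifies the eigenvectors of $\inertia_0$ with the coordinate axes.
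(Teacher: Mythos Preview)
Your proposal is correct and follows exactly the argument the paper sketches in the sentences immediately preceding the lemma (the paper does not include a formal proof environment here): great-circle $\Longleftrightarrow\langle x,\omega\rangle=0$, which kills the $xx^T$ term in Eq.~\eqref{fullbody_inertia} and forces $\omega$ to be an eigenvector of $\inertia_0$, hence a principal direction by simplicity of the spectrum. Your write-up just fills in the routine eigenvalue bookkeeping and the converse, both of which the paper leaves implicit.
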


	Consider the orbit of a body sufficiently far away from the origin so that its orbital inertia is larger than any moment-of-inertia of the body itself. From topological considerations of the leaf alone we are able to deduce the following result of \cite{wang}.
	
	\begin{propn}
		For any sufficiently large \(\lambda>I_3\) there exist at least two normal relative equilibria for a rigid body in an attractive central-force with potential \(V\). If the body has a plane of symmetry, then it has at least 2 great-circle relative equilibria. If the body is symmetric with respect to 2 planes of symmetry, then it has at least 4 great-circle relative equilibria, and if the body has 3 planes of symmetry then it has at least 12 great-circle relative equilibria (all solutions defined up to time-reversal).
	\end{propn}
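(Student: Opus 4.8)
The plan is to apply Theorem~\ref{introduction_thm} to a single leaf of the web, namely the one on which the eigenvalue $\lambda$ takes a fixed large value $\lambda>I_3$. By the theorem, a point $x_0$ on a regular leaf is a normal relative equilibrium exactly when $2\nabla V=\kappa\nabla\lambda$ there with $\kappa\ge 0$; equivalently, it is a critical point of $V$ restricted to the leaf at which $\nabla V$ points in the direction of increasing $\lambda$. The topology of this leaf was already described: for $\lambda>I_3$ it is the union of two concentric topological spheres, an ``outer shell'' and an ``inner shell'', joined at the four nodal points which lie in the plane $x_2=0$ on the curve~\eqref{equi_eigencurves_fullbody} with $|x|^2=\lambda-I_2$. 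Since the leaf is compact, $V$ attains a maximum on the outer shell and a minimum on the inner shell, and these two points (one far from, one near to the origin) are distinct. I claim they furnish the two asserted normal relative equilibria.

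To finish this first part I would check the sign condition and that neither extremum is a node. ``Attractive'' is taken to mean $\langle\nabla V,x\rangle>0$, so $V$ is strictly increasing along every ray from the origin. From~\eqref{fullbody_inertia}, for $|x|$ large the two large eigenvalues of $\inertia_x=\inertia_0+|x|^2\Id-xx^T$ grow like $|x|^2$ with eigenvectors nearly orthogonal to $x$, so the outer shell is asymptotic to the sphere $|x|=\sqrt\lambda$ and $\nabla\lambda$ points away from the origin there; at a maximum of $V$ on the outer shell $\nabla V$ must point to the side of larger $V$, hence larger $|x|$, so $\kappa>0$. Near the origin all eigenvalues are close to $I_1,I_2,I_3<\lambda$, so along the inner shell $\nabla\lambda$ again points away from the origin, and $\langle\nabla V,x\rangle>0$ forces $\nabla V$ to do the same at a minimum of $V$ on the inner shell; again $\kappa>0$. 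Finally, since $V$ is strictly monotone along rays, its maximum on the outer shell is not attained at the points of smallest $|x|$ on that shell, and its minimum on the inner shell is not attained at the points of largest $|x|$; as the four nodes are precisely these extremal-$|x|$ points (all with $|x|^2=\lambda-I_2$), both extrema are smooth points of the web, hence normal relative equilibria.

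For the great-circle statements I would invoke Lemma~\ref{greatcircle_lemma}: a great-circle relative equilibrium on the leaf $\lambda$ has $x$ in a principal plane $x_m=0$ with $\omega\parallel e_m$ and $\lambda=I_m+|x|^2$, so it lies on the circle $C_m(\lambda)=\{x_m=0,\ |x|^2=\lambda-I_m\}$, which is non-empty for every $\lambda>I_3\ge I_m$. On the plane $x_m=0$ the relevant eigenvalue function is $I_m+|x|^2$ with gradient $2x$, so for $x\in C_m(\lambda)$ the condition $2\nabla V=\kappa\nabla\lambda$ splits into (i) $\partial V/\partial x_m=0$ and (ii) $\nabla V$ radial within the plane, i.e. $x$ is a critical point of $V|_{C_m(\lambda)}$; moreover $\langle\nabla V,x\rangle>0$ makes $\kappa>0$ automatic, and $\lambda$ is a simple eigenvalue at such $x$ because $\lambda>I_3$ keeps $x$ off the curves~\eqref{equi_eigencurves_fullbody}, so the relative equilibrium is normal. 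If the body has a symmetry plane normal to $e_m$ then $V$ is even in $x_m$, so (i) holds on all of $\{x_m=0\}$ and (ii) has at least two solutions on the circle $C_m(\lambda)$: at least two great-circle relative equilibria. With two symmetry planes one gets such a circle for each of the two corresponding axes, and these two circles are disjoint (a common point would have to lie on the shared axis at two unequal radii $\sqrt{\lambda-I_k}$ and $\sqrt{\lambda-I_l}$), hence at least four. With three symmetry planes, on each circle $C_m(\lambda)$ the restriction $V|_{C_m(\lambda)}$ inherits the two residual reflections in $x_k$ and $x_l$, whose four fixed points $\pm\sqrt{\lambda-I_m}\,e_k$ and $\pm\sqrt{\lambda-I_m}\,e_l$ are automatically critical; running over $m=1,2,3$, the twelve resulting points are pairwise distinct, since points on a common coordinate axis occur there with distinct radii (the eigenvalue at $c\,e_k$ equals $I_k+c^2$, forcing $c^2=\lambda-I_k$, which depends on $k$). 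Since time reversal fixes $x$ and only flips $\omega$, counting points in shape space counts relative equilibria up to time reversal.

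The main obstacle I anticipate is the geometric bookkeeping behind the sign condition and the singular leaf: one must identify precisely which pieces of the level set $\{\lambda\ \text{an eigenvalue}\}$ are the two topological spheres, and confirm that $\nabla\lambda$ points ``outward'' along each so that the attractiveness of $V$ delivers $\kappa\ge 0$ at the extrema; and for the twelve great-circle solutions one must verify carefully that they are genuinely distinct and non-degenerate. Everything else is either a direct appeal to Theorem~\ref{introduction_thm} and Lemma~\ref{greatcircle_lemma} or elementary, using only that a smooth function on a circle has at least two critical points and a reflection-symmetric one is critical at the fixed points of the reflection.
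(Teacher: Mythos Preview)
Your approach is essentially the same as the paper's: exploit the two-sphere topology of the leaf $\lambda>I_3$ to get a maximum of $V$ on the outer shell and a minimum on the inner shell, then use Lemma~\ref{greatcircle_lemma} together with the circles $|x|^2=\lambda-I_m$ in the principal planes and symmetry to count great-circle solutions. You supply considerably more detail than the paper does---in particular on the sign condition $\kappa\ge 0$, on why the extrema avoid the nodes, and on why the twelve points in the fully symmetric case are pairwise distinct---whereas the paper dispatches the sign issue in one line (both $V$ and $\lambda$ increase radially since the body does not contain the origin) and leaves the distinctness and node-avoidance implicit.
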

	\begin{proof}
		We suppose the body does not intersect the origin in space. Therefore, both V and \(\lambda\) increase in the radial direction, and hence, if \(\nabla V\) and \(\nabla \lambda\) are parallel, then they must be in the same direction. It follows that all critical points of the potential and inertia are normal RE.
		
		On the leaf \(\lambda>I_3\) there must exist a minimum of \(V\) on the inner sphere, and a maximum on the outer sphere (neither of which can coincide with the 4 singular conical points of the leaf). 
		
		The intersection of the leaf with the principal plane orthogonal to \(e_m\) is the union of an ellipse and the circle \(|x|^2=\lambda- I_m\). The potential must have at least two critical points on such a circle, and hence, from symmetry considerations and Lemma~\ref{greatcircle_lemma} the result follows. 
	\end{proof}

\subsection{The Riemannian Ellipsoids}
Consider a spherical droplet of incompressible fluid at time \(t=0\). In the linear approximation the configuration of the droplet at time \(t\) is determined by a \(Q(t)\in\mathbf{SL}(3)\), whereby the fluid particle initially at \(x_0\) is now at \(Q(t)x_0\). If the fluid is homogeneous and has unit mass, then the kinetic energy is \(\Tr(\dot{Q}^T\dot{Q})\), which we observe to be invariant under both left and right multiplication by \(\mathbf{O}(3)\).

The singular values \(\{x_1,x_2,x_3\}\) of \(Q\) are invariant under this action, and so we may identify shape space with \(\mathcal{B}=\{x_1,x_2,x_3\ge 0~|~x_1x_2x_3=1\}\). Strictly speaking, shape space is actually the quotient of this set by permutations, however we won't come to any harm if we proceed by working locally in a region where the singular values are distinct, as in Remark~\ref{glib}.

We take a section over shape space by sending the singular values to the matrix \(\text{diag}(x_1,x_2,x_3)\). If we identify elements  of \(\mathfrak{so}(3)\) with vectors in \(\mathbf{R}^3\) then for this choice of section the inertia tensor is the block matrix
\begin{equation}
	\inertia_x=\begin{pmatrix}
		A & B\\B & A
	\end{pmatrix},~\text{for}~A=\text{diag}(a_1,a_2,a_3),~\text{and}~B=\text{diag}(b_1,b_2,b_3),
\end{equation}
where \(a_k=x_l^2+x_m^2\) and \(b_k=-2x_lx_m\) for distinct \(k,l,m\). The vectors \(\omega,\xi\in\mathfrak{so}(3)\) are the angular velocity and vorticity of the fluid. The terminology in the following lemma is borrowed from \cite{fasso}.

\begin{lem}\label{ellipsoid_lemma}
	For distinct \(x_1,x_2,x_3,\) the solutions to \([(\omega,\xi),\inertia_x(\omega,\xi)]=0\) come in two types:
	\begin{enumerate}
		\item The vectors \(\omega\) and \(\xi\) are each parallel to a principal direction \(e_k\) and have arbitrary length. These solutions are said to be of \textup{\textbf{Type}}~\textbf{\textup{S\textsubscript{k}}}.
		\item For any \(\omega\) orthogonal to \(e_k\) but not parallel with \(e_l\) or \(e_m\), there exist exactly two solutions for \(\xi\) if
		\[
		D=(2x_k-x_l-x_m)(2x_k+x_l-x_m)(2x_k-x_l+x_m)(2x_k+x_l+x_m)
		\]
		is strictly positive, and precisely one if it is equal to zero. These solutions are said to be of \textup{\textbf{Type}}~\textbf{\textup{R\textsubscript{k}}}. If \(D\) is negative then there are no solutions of this type.
	\end{enumerate}
\end{lem}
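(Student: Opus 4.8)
The plan is to diagonalize the problem by exploiting the block structure of $\inertia_x$, and then reduce the bracket equation $[(\omega,\xi),\inertia_x(\omega,\xi)]=0$ in $\mathfrak{so}(3)\times\mathfrak{so}(3)$ to a pair of coupled conditions on $\omega$ and $\xi$. Since the Lie algebra here is $\mathfrak{so}(3)\oplus\mathfrak{so}(3)$, the bracket of $(\omega,\xi)$ with $(\omega',\xi')$ is $([\omega,\omega'],[\xi,\xi'])$, so the equation splits into $[\omega,\inertia_x^{(1)}(\omega,\xi)]=0$ and $[\xi,\inertia_x^{(2)}(\omega,\xi)]=0$ where the two components of $\inertia_x(\omega,\xi)$ are $A\omega+B\xi$ and $B\omega+A\xi$ respectively. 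Recalling that in $\mathfrak{so}(3)\cong\mathbf{R}^3$ the condition $[u,v]=0$ means $u$ and $v$ are parallel (or one vanishes), I would first write the system as: $A\omega+B\xi$ is parallel to $\omega$, and $B\omega+A\xi$ is parallel to $\xi$.

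First I would dispatch the degenerate sub-cases. If $\omega=0$, then the first equation is automatic and the second becomes $A\xi \parallel \xi$; since $A=\text{diag}(a_1,a_2,a_3)$ has distinct entries (because $x_1,x_2,x_3$ are distinct, and $a_k = x_l^2+x_m^2$), an eigenvector of a matrix with distinct eigenvalues must be a principal axis $e_k$, giving a Type~S\textsubscript{k} solution (with $\omega=0$). Symmetrically for $\xi=0$. So assume both are nonzero. Next, the case where $\omega$ is itself parallel to some $e_k$: then $A\omega = a_k\omega$ and $B\omega = b_k\omega$, so the first equation $a_k\omega + B\xi \parallel \omega$ forces $B\xi \parallel \omega = e_k$, and since $B$ is diagonal with (generically) distinct nonzero entries this forces $\xi \parallel e_k$ too — and then the second equation holds automatically. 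This is exactly Type~S\textsubscript{k}. The remaining case is $\omega$ nonzero and not parallel to any principal axis; here I want to show no solutions arise unless $\omega \perp e_k$ for some $k$, and in that situation count the solutions for $\xi$, obtaining Type~R\textsubscript{k}.

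The heart of the argument is the generic case. Write the first parallelism condition as $A\omega + B\xi = \mu\omega$ for some scalar $\mu$, i.e. $B\xi = (\mu I - A)\omega$, hence $\xi = B^{-1}(\mu I - A)\omega$ whenever $B$ is invertible (i.e. no $x_i=0$, which holds on the interior). Substitute into the second condition $B\omega + A\xi = \nu\xi$: this gives $B\omega + AB^{-1}(\mu I-A)\omega = \nu B^{-1}(\mu I - A)\omega$, i.e. $\big(B^2 + A(\mu I-A) - \nu(\mu I - A)\big)\omega = 0$ after multiplying by $B$. Since $A$, $B$ are diagonal, this is a diagonal system: $\omega$ must lie in the kernel of a diagonal matrix whose $k$-th entry is a quadratic-in-$(\mu,\nu)$ expression $p_k(\mu,\nu) = b_k^2 + (a_k-\nu)(\mu - a_k)$. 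For $\omega$ not along a principal axis, at least two of these diagonal entries must vanish; for $\omega$ generic (no zero components) all three must. Asking two of the $p_k$ to vanish determines $(\mu,\nu)$ and then the third generically does not vanish — this is where the constraint on $x$ enters, and where I expect the discriminant $D$ to emerge: eliminating $\mu,\nu$ from $p_l=p_m=0$ and then imposing that the solution be consistent with $\omega\perp e_k$ (so $\omega_k=0$, only $p_l,p_m$ need vanish) yields, after substituting $a_l = x_k^2+x_m^2$, $a_m = x_k^2+x_l^2$, $b_l=-2x_kx_m$, $b_m=-2x_kx_l$, a quadratic equation for $\xi$ (equivalently for the ratio determining $\mu$) whose discriminant is precisely $D=(2x_k-x_l-x_m)(2x_k+x_l-x_m)(2x_k-x_l+x_m)(2x_k+x_l+x_m)$.

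The main obstacle I anticipate is the bookkeeping in this last elimination: one must be careful that the two scalars $\mu,\nu$ are genuinely free (they are the two "Lagrange-multiplier'' eigenvalues) and track how many independent solutions $(\omega,\xi)$ — up to the scaling freedom noted in the lemma statement ("arbitrary length'', "any $\omega$ orthogonal to $e_k$'') — survive. Concretely, once $\omega\perp e_k$ is fixed and nonzero, the pair $(\mu,\nu)$ together with $\xi = B^{-1}(\mu I - A)\omega$ is governed by a single quadratic, so there are two roots when $D>0$, a double root (one solution) when $D=0$, and no real solution when $D<0$ — matching the three stated cases. I would close by verifying that the Type~R\textsubscript{k} solutions are genuinely distinct from Type~S\textsubscript{k} (the $\xi$ produced is not parallel to $\omega$ nor to a principal axis when $\omega$ is not) and that the excluded directions $\omega\parallel e_l,e_m$ are exactly the ones already accounted for among the Type~S solutions, so the classification into the two types is exhaustive.
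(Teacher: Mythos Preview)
Your proposal is correct and follows essentially the same route as the paper: you rewrite the bracket as the pair of parallelism conditions \(A\omega+B\xi=\mu\omega\), \(B\omega+A\xi=\nu\xi\), solve the first for \(\xi\), substitute into the second, and obtain a diagonal matrix (your \(p_k(\mu,\nu)\), the paper's \(C=B^2-A^2+(s+t)A-st\,\Id\)) whose kernel must be a principal line (Type~S) or a principal plane (Type~R), with the discriminant \(D\) arising from solving \(p_l=p_m=0\) for \((\mu,\nu)\). The only difference is cosmetic: you treat the degenerate cases \(\omega=0\), \(\xi=0\), \(\omega\parallel e_k\) explicitly up front, whereas the paper absorbs them into the kernel dichotomy.
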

\begin{proof}
	The Lie bracket \([(\omega,\xi),(L,\Omega)]\) is \((\omega\times L,\xi\times\Omega)\), and so we require
	\begin{align}\label{s_t_system}
		\begin{split}
			A\omega+B\xi&=s\omega\\
			B\omega+A\xi&=t\xi
		\end{split}
	\end{align}
	for some \(s,t\in\R\). We can solve for \(\xi\) in the first equation and then substitute this into the second to find that \(C\omega=0\), where \( C=B^2-A^2+(s+t)A-(st)\Id\). This matrix is diagonal, and so its kernel is either: a principle direction, giving solutions of Type~$\text{S}$; or, a principal plane. In the latter, for \(e_l\) and \(e_m\) to both belong to the kernel of \(C\) we require \((s,t)=(u_\pm,u_\mp)\), where
	\begin{equation}
		u_\pm=\frac{1}{2}\left(x_l^2+x_m^2-2x_k^2\pm \sqrt{D}\right).
	\end{equation}
\end{proof}

In general we expect \(\inertia_x\) to be a less complicated function in \(x\) than its inverse. For this reason, it will be more convenient to present the augmented web. Thanks to the previous lemma, for a given adjoint orbit \(\mathbf{S}^2_{\rho_1}\times\mathbf{S}^2_{\rho_2}\) the web is given by the level sets of 
\[
\Lambda(x)=\langle\inertia_x(\omega,\xi),(\omega,\xi)\rangle=s\rho_1^2+t\rho_2^2
\]
where \(s\) and \(t\) are as in Eq.~\eqref{s_t_system}. 

There is a slight caveat. Solutions of Type R\textsubscript{k} only exist on the given orbit in a subset of shape space. To see why, consider the map \(\omega\mapsto\xi=B^{-1}(u_\pm\omega-A\omega)\) obtained from solving for \(\xi\) in Eq.~\eqref{s_t_system}. This sends the circle in the plane orthogonal to \(e_k\) with radius \(|\omega|=\rho_1\) into an ellipse. For there to exist a solution on the orbit this ellipse must intersect the circle \(|\xi|=\rho_2\). This defines a region \(\mathcal{R}^\pm_{k}\subset\mathcal{B}\).  
\begin{propn}\label{ellisoid_web_propn}
	The augmented web on \(\mathcal{B}\)  corresponding to the adjoint orbit \(\mathbf{S}^2_{\rho_1}\times\mathbf{S}^2_{\rho_2}\) consists of two components:
	\begin{enumerate}
		\item  The web of Type~S given by the level sets of 
		\begin{equation}\label{web_Stype}
			\Lambda(x)=(\rho_1^2+\rho_2^2)(x_l^2+x_m^2)\pm 4\rho_1\rho_2 x_lx_m
		\end{equation}
		defined everywhere. 
		\item The web of Type~R given by the level sets of 
		\begin{equation}
			\Lambda(x)=\frac{1}{2}(\rho_1^2+\rho_2^2)(x_l^2+x_m^2-2x_k^2)\pm\frac{\sqrt{D}}{2}(\rho_1^2-\rho_2^2)
		\end{equation}
		defined in the region \(\mathcal{R}^\pm_{k}\).
	\end{enumerate}
	These equations are defined for each choice of distinct \(k,l,m\), and for each \(\pm\)-sign.
\end{propn}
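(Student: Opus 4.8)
The plan is to read the web directly off Theorem~\ref{augmented_thm} and Lemma~\ref{ellipsoid_lemma}. By Theorem~\ref{augmented_thm}, the augmented web attached to the adjoint orbit $\orb=\mathbf{S}^2_{\rho_1}\times\mathbf{S}^2_{\rho_2}$ is cut out locally by the functions $\Lambda(x)=\langle\inertia_x(\omega,\xi),(\omega,\xi)\rangle$, one for each branch of non-degenerate critical points $(\omega,\xi)$ of this quadratic form restricted to $\orb$. Lemma~\ref{ellipsoid_lemma} enumerates those critical points and sorts them into Types~S\textsubscript{k} and~R\textsubscript{k}, and its proof records, through Eq.~\eqref{s_t_system}, the scalars $s,t$ for which $\inertia_x(\omega,\xi)=(s\omega,t\xi)$ on each branch. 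Since $\langle\inertia_x(\omega,\xi),(\omega,\xi)\rangle=s|\omega|^2+t|\xi|^2=s\rho_1^2+t\rho_2^2$, the whole proof amounts to writing down $s\rho_1^2+t\rho_2^2$ for each of the two types.

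For Type~S\textsubscript{k} the pair $(\omega,\xi)$ lies in the plane $\R e_k\oplus\R e_k$, on which $\inertia_x$ acts as $\left(\begin{smallmatrix}a_k&b_k\\ b_k&a_k\end{smallmatrix}\right)$; taking $\omega=\rho_1e_k$, $\xi=\pm\rho_2e_k$ and substituting $a_k=x_l^2+x_m^2$, $b_k=-2x_lx_m$ yields Eq.~\eqref{web_Stype}, the $\pm$ recording whether $\omega$ and $\xi$ point the same way. Such a pair sits on $\orb$ for every $x\in\mathcal B$, so this component is defined everywhere. For Type~R\textsubscript{k} the proof of Lemma~\ref{ellipsoid_lemma} gives $(s,t)=(u_\pm,u_\mp)$ with $u_\pm=\tfrac12(x_l^2+x_m^2-2x_k^2\pm\sqrt D)$, and — this is the point to emphasise — these depend only on $x$, not on which $\omega$ orthogonal to $e_k$ one picks, so $\Lambda(x)=u_\pm\rho_1^2+u_\mp\rho_2^2$ is a genuine function of $x$ and equals the stated Type~R expression, the $\pm$ inherited from the choice of $u_\pm$.

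The only real obstacle is the domain of the Type~R component. A Type~R\textsubscript{k} solution lies on $\orb$ precisely when the image of the circle $|\omega|=\rho_1$ in the plane orthogonal to $e_k$ under the linear map $\omega\mapsto B^{-1}(u_\pm\Id-A)\omega$ meets the circle $|\xi|=\rho_2$, which is exactly the condition defining $\mathcal R^\pm_k$ (and which forces $D\ge0$). I would make this explicit, note that $B$ is invertible throughout the region of distinct positive singular values so the map makes sense, and then verify the non-degeneracy needed for Theorem~\ref{augmented_thm} on the open dense subset of $\mathcal R^\pm_k$ away from the loci $D=0$ and the loci where S- and R-branches merge. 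Finally, letting $k$ run over $1,2,3$ and taking both sign choices in each type exhausts the branches supplied by Lemma~\ref{ellipsoid_lemma}, so these functions describe the whole web.
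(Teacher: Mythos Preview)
Your proposal is correct and follows essentially the same route as the paper: the paper derives the proposition from the discussion immediately preceding it, observing that $\Lambda(x)=s\rho_1^2+t\rho_2^2$ via Theorem~\ref{augmented_thm} and Lemma~\ref{ellipsoid_lemma}, and then describing the region $\mathcal R^\pm_k$ exactly as you do via the ellipse-meets-circle condition for the map $\omega\mapsto B^{-1}(u_\pm\Id-A)\omega$. Your direct evaluation of the quadratic form on $\R e_k\oplus\R e_k$ for Type~S is a minor variant of reading off $s,t$ from Eq.~\eqref{s_t_system}, and your remarks on non-degeneracy away from $D=0$ add welcome detail the paper leaves implicit.
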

An example of the full 6-web of Type~S is shown on the left in Figure~\ref{spider_web} for \(\rho_1/\rho_2=3\). In this figure we have identified \(\mathcal{B}\) with the plane \(z_1+z_2+z_3=0\) by taking \(z_k=\log x_k\). Apparent in this figure is the \(\mathbf{S}_3\)-symmetry which arises from permuting the singular values of \(Q\). Shown on the right is a component of the web of Type R, together with the shaded region inside which it is defined.

\begin{figure}
	\centering
	\includegraphics[scale=0.65]{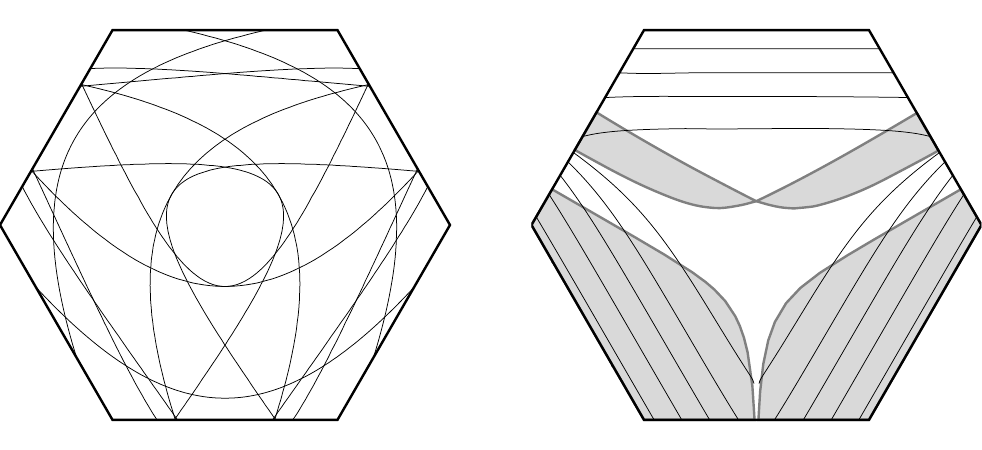}
	\caption{\label{spider_web}Webs of Type S and Type R}
\end{figure}

\begin{rmk}
	Dedekind famously observed \cite{dedkind} that if \(Q(t)\) was a RE then so was the transpose \(Q^T(t)\). Taking the transpose of a RE swaps the angular velocity and vorticity vectors \(\omega\leftrightarrow\xi\). We can see this symmetry at the level of webs by observing that the web in Proposition~\ref{ellisoid_web_propn} is invariant with respect to \(\rho_1\leftrightarrow\rho_2\).
\end{rmk}

	\section{The Spherical 3-Body Problem}
	Consider 3 particles \(q_1, q_2, q_3\) on the unit sphere and let \(Q\) be the matrix whose \(j\)\textsuperscript{th}-column is \(q_j\). An orbit map for the action of \(\textbf{O}(3)\) on \(\textbf{S}^2\times\textbf{S}^2\times\textbf{S}^2\) is given by sending \((q_1,q_2,q_3)\) to 
	\begin{equation}
		\pi(Q)=3\Id-Q^TQ=\begin{pmatrix}
			2 & -x_1 & -x_2\\
			-x_1 & 2 & -x_3\\
			-x_2& -x_3 & 2
		\end{pmatrix}
	\end{equation}
	where 
	\[
	x_1=\cos\theta_{12},\quad x_2=\cos\theta_{13},\quad x_3=\cos\theta_{23},
	\]
	and \(\theta_{ij}\) is the angle subtended between \(q_i\) and \(q_j\). The \(x_j\) are bounded between \(-1\) and \(1\) and satisfy the the inequality
	\[
	\det Q^TQ=|\langle q_1,q_2\times q_3\rangle|^2\ge 0.
	\]
	The shape space is therefore the \emph{\textbf{(curvy) tetrahedron}} \(\mathcal{T}\subset\mathbf{R}^3\) given by the cubic inequality
	\begin{equation}\label{curvytetrahedron_defn}
		1+2x_1x_2x_3-x_1^2-x_2^2-x_3^2\ge 0.
	\end{equation}
	It is helpful to see which parts of the tetrahedron correspond to given configurations of the particles.
	\begin{itemize}
		\item[\ding{226}] The vertices of \(\mathcal{T}\) are the colinear configurations (which we should technically exclude since here the action is not free). The top vertex \(P\) at \(x_1=x_2=x_3=1\) is the triple collision, and the other vertices are pairs of binary collisions with the third particle antipodal to the pair. 
		\item[\ding{226}] The three edges from \(P\) correspond to binary collisions, and the other three edges to antipodal pairs.
		\item[\ding{226}] The boundary \(\partial\mathcal{T}\) consists of the coplanar configurations. The faces \(F_k\) with common vertex \(P\) are those configurations where the particles lie in a common half-plane, with \(q_k\) in the middle. Those which do not lie in a common half-plane correspond to the bottom face \(F_0\) opposite \(P\). 
		\item[\ding{226}] The interior of the tetrahedron are those configurations where the particles are in general position. 
	\end{itemize}
	
	Permuting the particles generates the \(\mathbf{S}_3\)-action which permutes the coordinates \(x_1, x_2, x_3\) and fixes \(P\). In addition, the \(\mathbf{Z}_2\)-symmetry which negates a particle's position negates a pair of \(x_j\)-coordinates and transposes \(P\) with another vertex. Taken together these generate the order-24 group \(\mathbf{T_d}\) of tetrahedral symmetries.
	
	\begin{rmk}\label{actually}
		We shall actually be considering the quotient by \(\mathbf{SO}(3)\). There is an additional invariant \(\Delta=\langle q_1,q_2\times q_3\rangle\) which satisfies \(\Delta^2=\det Q^TQ\). The shape space is therefore two copies \(\mathcal{T}_\pm\) corresponding to the sign of \(\Delta\), and the union is taken over their boundary for \(\Delta=0\). Topologically the shape space is a 3-sphere. However, since the \(\mathbf{Z}_2\)-symmetry which negates \((q_1, q_2, q_3)\) interchanges \(\mathcal{T}_+\) with \(\mathcal{T}_-\), it will suffice to deal with a single copy of \(\mathcal{T}\).
	\end{rmk}
	\subsection{A Web of Cayley Cubics on the 3-Sphere}
	We shall now suppose that the three particles each have unit mass. In the same way that we derived Eq.~\eqref{nboday_inertia} the inertia tensor is
	\begin{equation}
		\inertia_Q=3\Id-QQ^T.
	\end{equation}
	\begin{propn}
		The 3-web on \(\mathcal{T}\) is the family of Cayley cubics
		\begin{equation}\label{cayley_cubics}
			(\lambda-2)^3+2x_1x_2x_3-(\lambda-2)(x_1^2+x_2^2+x_3^2)=0
		\end{equation}
		for \(\lambda\in[0,3]\) a given eigenvalue of \(\inertia_Q\). The inertia tensor has twice-repeated eigenvalues along the four lines which originate at a vertex and go through the midpoint of the opposite face. There is a triple-repeated eigenvalue at the centre \(x_1=x_2=x_3=0\) of the tetrahedron where these lines intersect.
	\end{propn}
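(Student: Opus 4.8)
The plan is to reduce everything to the characteristic polynomial of the explicit \(3\times 3\) matrix \(\pi(Q)\) and then invoke Theorem~\ref{introduction_thm}, which identifies the web with the level sets on \(\mathcal{T}\) of the three eigenvalue functions of \(\inertia_Q\). As in the triatomic-molecule proposition, \(\inertia_Q=3\Id-QQ^T\) and \(\pi(Q)=3\Id-Q^TQ\) have the same characteristic polynomial, so it suffices to study \(\pi(Q)\); and since \(QQ^T\succeq0\) has trace \(3\), all of its eigenvalues, hence those of \(\inertia_Q\), lie in \([0,3]\), which fixes the parameter range. Writing \(\pi(Q)=2\Id-N\) with
\[
N=\begin{pmatrix}0&x_1&x_2\\x_1&0&x_3\\x_2&x_3&0\end{pmatrix},
\]
a direct cofactor expansion gives \(\det(\pi(Q)-\lambda\Id)=(2-\lambda)^3-(2-\lambda)(x_1^2+x_2^2+x_3^2)-2x_1x_2x_3\), and equating this to zero is precisely Eq.~\eqref{cayley_cubics}. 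Thus each leaf of the web is a piece of one of these cubic surfaces, and the whole family, parametrised by \(\lambda\in[0,3]\), sweeps out all the leaves of the three foliations.

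To justify the name I would observe that for \(\lambda\neq2\) the rescaling \(x_i\mapsto(\lambda-2)x_i\) carries the standard Cayley cubic \(x_1^2+x_2^2+x_3^2=1+2x_1x_2x_3\) onto Eq.~\eqref{cayley_cubics}. Its four nodes are read off by solving \(\nabla f=0\) on \(\{f=0\}\): the solutions with all \(x_i\neq0\) are the points \((\epsilon_1,\epsilon_2,\epsilon_3)\) with each \(\epsilon_i=\pm1\) and \(\epsilon_1\epsilon_2\epsilon_3=1\) (before rescaling, the four points \((\lambda-2)(\epsilon_1,\epsilon_2,\epsilon_3)\)); a short check rules out any singularity with some \(x_i=0\) or at infinity, and the Hessian at each of these four points is nondegenerate. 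Four nodes is the maximal number a cubic surface can carry, and this is exactly what characterises the Cayley cubic up to projective equivalence. At \(\lambda=2\) the cubic degenerates to the reducible surface \(x_1x_2x_3=0\), the one singular member of the family.

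For the repeated-eigenvalue locus I would use the discriminant. The matrix \(N\) is traceless with \(\Tr N^2=2(x_1^2+x_2^2+x_3^2)\) and \(\det N=2x_1x_2x_3\), so its characteristic polynomial is \(t^3-(x_1^2+x_2^2+x_3^2)t-2x_1x_2x_3\), whose discriminant vanishes exactly when \((x_1^2+x_2^2+x_3^2)^3=27(x_1x_2x_3)^2\). Applying the AM--GM inequality to \(x_1^2,x_2^2,x_3^2\) shows this is equivalent to \(|x_1|=|x_2|=|x_3|\), a union of four lines through the origin; intersecting it with \(\mathcal{T}\) yields the four claimed segments, each running from a vertex of \(\mathcal{T}\) to the midpoint of the opposite face and passing through the origin. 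At the origin \(N=0\), so \(\pi(Q)=2\Id\) and \(\lambda=2\) is a triple eigenvalue, which is where the four medians meet.

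The determinant, discriminant, and AM--GM steps are all routine; the one point needing a little care is the Cayley-cubic identification, namely confirming that the four critical points found really are nondegenerate nodes and that the cubic has no further singularity at infinity, so that it is projectively the Cayley cubic and not merely a cubic with four distinguished points.
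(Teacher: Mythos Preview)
Your proof is correct and follows essentially the same approach as the paper: identify the spectra of \(\inertia_Q\) and \(\pi(Q)\), compute the characteristic polynomial of \(\pi(Q)\) explicitly to obtain Eq.~\eqref{cayley_cubics}, and locate the repeated-eigenvalue locus via the discriminant together with AM--GM. Your additional material---the explicit bound \(\lambda\in[0,3]\) from \(QQ^T\succeq0\) with trace \(3\), and the projective identification of the surface as the Cayley cubic by rescaling and counting nodes---goes beyond what the paper's proof formally establishes (the paper handles the Cayley identification informally in the discussion following the proof), but it is all correct and a welcome bit of extra care.
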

	\begin{proof}
		The characteristic polynomials of \(\pi(Q)\) and \(\inertia_{Q}\) coincide, and thus, the subset of \(\mathcal{T}\) for a constant eigenvalue \(\lambda\) are the surfaces \(\chi(\lambda)=0\), where \(\chi(\lambda)\) is the characteristic polynomial of \(\pi(Q)\) and is given by the left hand side of  Eq.~\eqref{cayley_cubics}.
		
		A repeated eigenvalue occurs whenever the discriminant
		\[
		(x_1^2+x_2^2+x_3^2)^3-27x_1^2x_2^2x_3^3
		\]
		of \(\chi\) is zero. By the AM-GM inequality this only holds when \(x_1^2=x_2^2=x_3^2\). 
	\end{proof}
	The surface defined by taking the equality in Eq.~\eqref{curvytetrahedron_defn} is Cayley's Nodal Cubic Surface. It consists of a curvy tetrahedron \(\partial\mathcal{T}\) with nodal singularities at the vertices, around which  4 conical regions emanate outwards. For \(\lambda\ne 2\) the surface defined by Eq.~\eqref{cayley_cubics} is a dilation of the Cayley cubic by a factor of \(\lambda-2\). Notice that for \(\lambda<2\) this scaling is negative and inverts the surface, producing a tetrahedron dual to \(\mathcal{T}\). The intersections of these surfaces with \(\mathcal{T}\) produce the leaves of the 3-web and are shown in Figure~\ref{octych}.
	\begin{figure}
		\centering
	\begin{tabular}{cc}
	\begin{tikzpicture}
		\draw (0, 0) node[inner sep=0] {\includegraphics[scale=0.6]{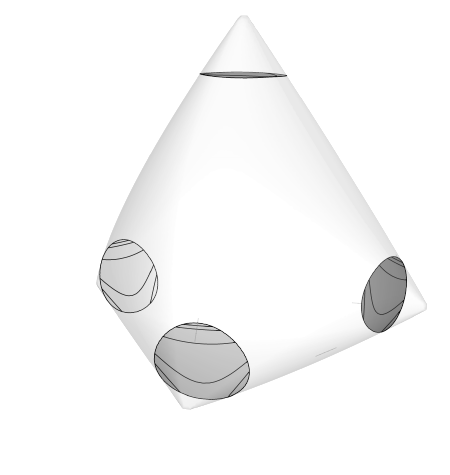}};
		\draw (-2,1) node {$\lambda=0.5$};
	\end{tikzpicture}\quad\quad& 
	\begin{tikzpicture}
		\draw (0, 0) node[inner sep=0] {\includegraphics[scale=0.6]{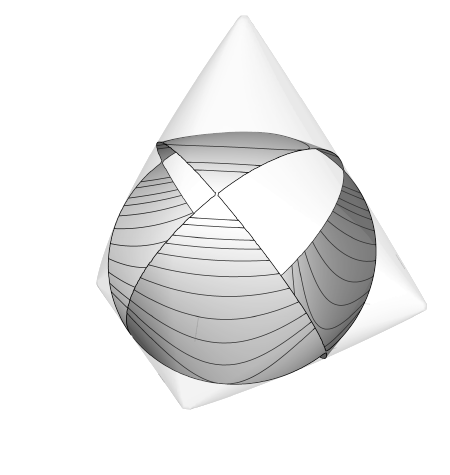}};
		\draw (-2,1) node {$\lambda=1$};
	\end{tikzpicture}
	
	\\
	
	\begin{tikzpicture}
		\draw (0, 0) node[inner sep=0] {\includegraphics[scale=0.6]{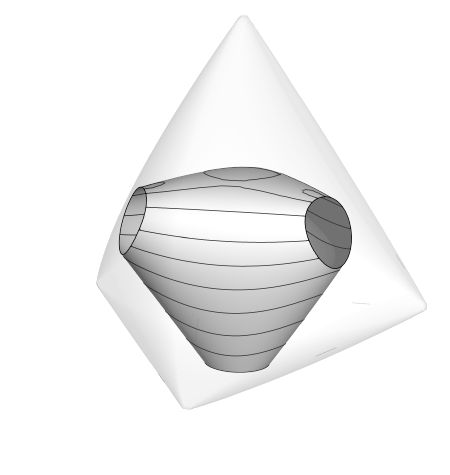}};
		\draw (-2,1) node {$\lambda=1.3$};
	\end{tikzpicture}\quad\quad&
	\begin{tikzpicture}
		\draw (0, 0) node[inner sep=0] {\includegraphics[scale=0.6]{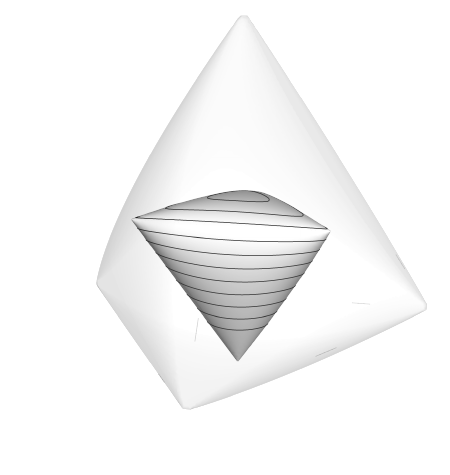}};
		\draw (-2,1) node {$\lambda=1.5$};
	\end{tikzpicture}
	
	\\

	\begin{tikzpicture}
		\draw (0, 0) node[inner sep=0] {\includegraphics[scale=0.6]{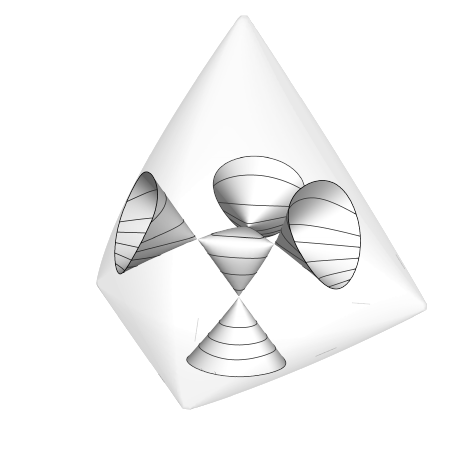}};
		\draw (-2,1) node {$\lambda=1.8$};
	\end{tikzpicture} \quad\quad& 
	\begin{tikzpicture}
		\draw (0, 0) node[inner sep=0] {\includegraphics[scale=0.6]{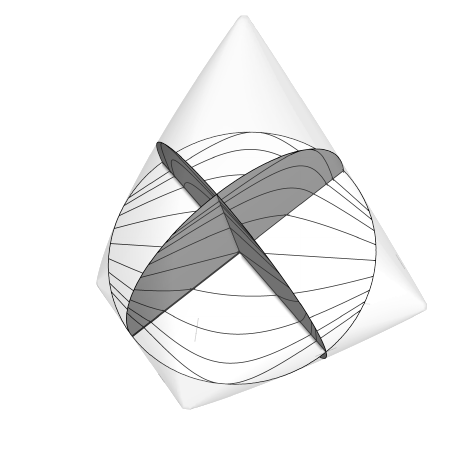}};
		\draw (-2,1) node {$\lambda=2$};
	\end{tikzpicture}
	
	\\

	\begin{tikzpicture}
		\draw (0, 0) node[inner sep=0] {\includegraphics[scale=0.6]{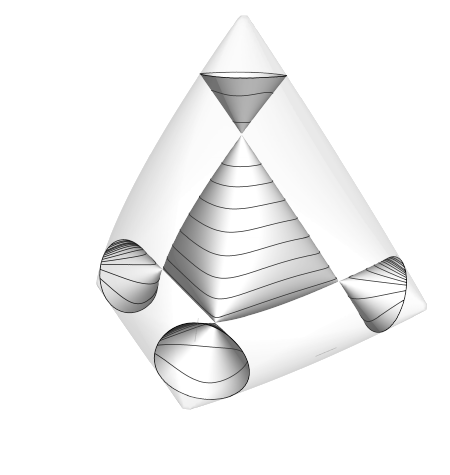}};
		\draw (-2,1) node {$\lambda=2.5$};
	\end{tikzpicture} \quad\quad& 
	\begin{tikzpicture}
		\draw (0, 0) node[inner sep=0] {\includegraphics[scale=0.6]{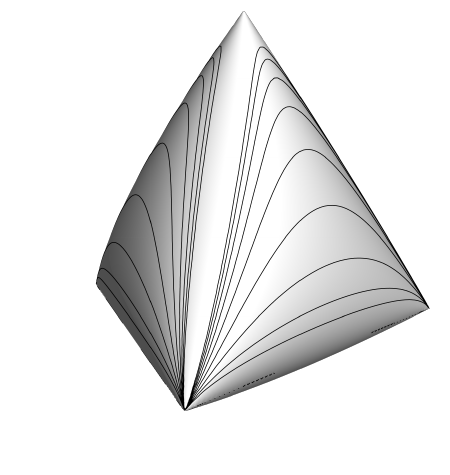}};
		\draw (-2,1) node {$\lambda=3$};
	\end{tikzpicture}
\end{tabular}
		\caption{\label{octych}The 3-web of Cayley cubics for the spherical 3-body problem. The contours are for the cotangent potential.}
	\end{figure}
	\begin{itemize}
		\item[\ding{226}] For \(\lambda=0\) the leaf is singular and consists of the four vertices of \(\mathcal{T}\).  For \(0<\lambda<1\) the leaf becomes 4 disconnected disk-like regions near each vertex.
		\item[\ding{226}] At \(\lambda=1\) the 4 disks connect to each other pairwise at the midpoints of the edges of \(\mathcal{T}\). For \(1\le\lambda\le3/2\) the leaf is a connected surface.
		\item[\ding{226}] For \(\lambda=3/2\) the leaf is a curvy tetrahedron dual to \(\mathcal{T}\) and contained entirely inside \(\mathcal{T}\), with its vertices at the midpoints of the faces of \(\partial\mathcal{T}\). As \(\lambda\) grows the dual tetrahedron shrinks inside \(\mathcal{T}\) and its 4 conical regions grow.
		\item[\ding{226}] The leaf degenerates at \(\lambda=2\) into the union of planes \(x_1,x_2,x_3=0\). For \(2<\lambda\le3\) the leaf is a dilation of \(\partial\mathcal{T}\) and continues to grow until \(\lambda=3\), at which point the leaf coincides with the boundary of \(\mathcal{T}\). For \(\lambda>3\) the intersection is empty and the leaves are no longer defined.
	\end{itemize} 
	\subsection{Classification of Relative Equilibria for Equal Masses}
	\begin{thm}\label{classification_thm}
		Consider 3 particles of equal mass constrained to a sphere and mutually interacting via a strictly attractive potential force depending only on the angle between particles. The system admits the following types of relative equilibrium solutions.
		\begin{enumerate}
			\item \textup{\textbf{Coplanar:}} (i) with one particle on the axis of rotation and the other two particles each located an angle \(\theta\in(0,\frac{\pi}{2})\cup(\frac{\pi}{2},\frac{2\pi}{3})\) either side of the axis; (ii) with one particle orthogonal to the axis of rotation and the other two particles located at an angle \(\theta\in(\frac{2\pi}{3},\pi)\) either side of the first particle; (iii) an equilibrium solution with the particles at the vertices of an equatorial equilateral triangle, and a family of relative equilibria obtained by spinning this configuration in the plane.
			\item \textup{\textbf{General Position:}} (i) with the particles at the vertices of a spherical equilateral triangle with internal angles \(\phi\in(0,\frac{2\pi}{3})\), rotating about the axis through the midpoint of the triangle.
		\end{enumerate}
		These relative equilibria are normal expect for abnormal solutions of Type 1.(i) for \(\theta=\frac{\pi}{3}\), and Type 2.(i) for \(\phi=\frac{\pi}{2}\). For the specific choice of \textup{\textbf{cotangent potential}}
		\begin{equation}
			V=-\cot\theta_{12}-\cot\theta_{13}-\cot\theta_{23}
		\end{equation}
		the relative equilibria are completely classified by two additional subtypes of normal relative equilibria.
		\begin{enumerate}
			\item (iv)~\textup{\textbf{Scalene:}} with the particles belonging to a half plane containing the axis of rotation, where the angles \(\alpha\) and \(\beta\) from the middle particle to the other two satisfy
			\begin{equation}\label{scalene_horrible}
				\sin2\beta(\csc^2\alpha+\csc^2\gamma)+\sin2\gamma(\csc^2\alpha-\csc^2\beta)=\sin2\alpha(\csc^2\beta+\csc^2\gamma),
			\end{equation}
			for \(\gamma=\alpha+\beta\) and \(\alpha\ne\beta\). 
			\item[2.](ii)~\textup{\textbf{Isosceles:}} with two particles separated by an angle \(\beta\) and the third particle located at an angle \(\alpha\ne\beta\) from each of the two particles, where \(\alpha,\beta\in(-\pi,\pi)\) satisfy
			\begin{equation}\label{isosceles_eqn}
				\cos\alpha(2\sin^6\alpha-\sin^6\beta)=\sin^3\alpha\sin^3\beta\cos\beta.
			\end{equation}
		\end{enumerate}
	\end{thm}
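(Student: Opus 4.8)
The plan is to feed the web of Cayley cubics from Eq.~\eqref{cayley_cubics} into Theorem~\ref{introduction_thm} and then to search for constrained critical points of $V$ on the leaves, exploiting the tetrahedral symmetry group $\mathbf{T_d}$ of $\mathcal{T}$ to reduce each family to a one-variable problem. By Theorem~\ref{introduction_thm}, $x_0\in\mathcal{T}$ is a normal relative equilibrium exactly when $2\nabla V=\kappa\nabla\lambda_j$ at $x_0$ for a multiplicity-one eigenvalue $\lambda_j$ of $\inertia_{x_0}$ and some $\kappa\ge0$; equivalently, $x_0$ is a critical point of $V$ restricted to the Cayley-cubic leaf of constant $\lambda_j$, subject to the sign of the Lagrange multiplier $\kappa$, which records whether $x_0$ is a genuine rotation ($\kappa>0$) or an equilibrium ($\kappa=0$) and whose nonnegativity will pin down the admissible range of a parameter in each family. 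The eigenvalues of $\inertia_x$ are repeated only along the four altitude lines of $\mathcal{T}$, triply so at its centre, so every relative equilibrium off this singular locus is normal; the abnormal candidates on it are finitely many and are treated by hand as in Example~\ref{2body}, by restricting $x_0$ to an altitude line and differentiating $\langle\inertia_x(\omega),\omega\rangle$ as $\omega$ runs over the repeated eigenplane --- this is what isolates $\theta=\tfrac{\pi}{3}$ in Type~1.(i) and $\phi=\tfrac{\pi}{2}$ in Type~2.(i).

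I would first settle the coplanar relative equilibria, those with $x_0\in\partial\mathcal{T}$. There the eigenvalue $\lambda=3$, with eigenvector the normal to the common great circle, is constant along $\partial\mathcal{T}$ (and simple away from the collinear vertices and edges), so $\nabla\lambda$ vanishes there and the condition $2\nabla V=\kappa\nabla\lambda$ forces $\nabla V=0$: a coplanar relative equilibrium rotating about the normal is just an equilibrium of the system lying on $\partial\mathcal{T}$, carrying a one-parameter spinning family --- this is Type~1.(iii), for equal masses the equatorial equilateral triangle being such an equilibrium. If instead the rotation axis lies in the plane of the particles, the reflection symmetry of $\inertia$ forces the configuration to be isosceles, with the axis either along the apex particle or perpendicular to it within the plane; restricting the critical-point condition to the isosceles curves $x_i=x_j$ inside the relevant leaf, via Palais' principle of symmetric criticality, reduces the problem to a single transcendental equation in one angle, and tracking where $\kappa\ge0$, where collisions occur, and where the eigenvalue stays simple then yields Types~1.(i) and~1.(ii) with the stated $\theta$-ranges.

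The relative equilibria in general position are handled the same way. A configuration fixed by all of $\mathbf{S}_3$ lies on an altitude line $x_1=x_2=x_3$: the spherical equilateral triangle spinning about its centroidal axis. Parametrising the line by the common angle and using the two distinct eigenvalues of $\inertia=3\Id-QQ^T$ there --- one along the axis, one doubled in the orthogonal plane --- the condition $2\nabla V=\kappa\nabla\lambda_{\mathrm{axis}}$ becomes a single equation; solving it subject to $\kappa\ge0$ gives Type~2.(i) with internal angle in $(0,\tfrac{2\pi}{3})$, and the axial and doubled eigenvalues coincide exactly at $\phi=\tfrac{\pi}{2}$, which is where that family becomes abnormal. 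Together with the coplanar list this exhausts what symmetry and the topology of the leaves force, hence holds for any strictly attractive potential; in each case the rotation axis is the eigenvector of $\inertia$ identified by Theorem~\ref{introduction_thm}.

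The main obstacle is the completeness statement for the cotangent potential, where one must locate the remaining critical points --- those at configurations with no symmetry --- and rule out any others. Topology gives nothing here, so one writes the proportionality $2\nabla V=\kappa\nabla\chi(\lambda)$ out in full, with $\nabla V$ built from $\csc^2$ of the pairwise angles (so rational in the $1-x_j^2$) and $\nabla\chi(\lambda)$ read off from Eq.~\eqref{cayley_cubics}, and then eliminates $\lambda$ and $\kappa$. The leverage is that the cotangent potential is algebraic enough in the entries of $Q^TQ$ that, after clearing half-integer powers by squaring, this elimination collapses to the identities \eqref{scalene_horrible} and \eqref{isosceles_eqn} --- the first describing the coplanar scalene family 1.(iv), with the three particles and the in-plane rotation axis in one half-plane, the second the isosceles general-position family 2.(ii). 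One must still verify that each solution satisfies $\kappa\ge0$, discard the spurious branches introduced by squaring, confirm that no genuinely scalene relative equilibrium survives in general position, and check that nothing extra hides on the singular altitude lines. Collecting the symmetric families, the two abnormal exceptions, and the cotangent-specific scalene and isosceles families then gives the stated classification.
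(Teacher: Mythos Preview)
Your overall strategy matches the paper's: feed the Cayley-cubic web into Theorem~\ref{introduction_thm}, exploit the $\mathbf{T_d}$ symmetry, treat coplanar and general-position configurations separately, and handle the abnormal cases on the altitude lines by direct computation. The symmetric families 1.(i)--(iii) and 2.(i) do indeed arise from symmetric criticality exactly as you describe, and the abnormal exceptions are located in the same way.

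The real gap is in the completeness argument for the cotangent potential in general position. You list ``confirm that no genuinely scalene relative equilibrium survives in general position'' as a verification, but this is the crux and you give no mechanism for it. The paper's key lemma observes that, with the leaf parametrised as in Eq.~\eqref{cayley_param}, the Lagrange-multiplier equation becomes after squaring a \emph{cubic} in $x_j^2$; if $x_1,x_2,x_3$ were all distinct then $x_1^2,x_2^2,x_3^2$ would be its three roots, and Vieta would force $x_1^2+x_2^2+x_3^2=3$, which happens only at the vertices of $\mathcal{T}$. This algebraic trick is what pins every interior critical point to a plane $x_i=x_j$, where the web factors as in Eq.~\eqref{webs_inthe_plane} and yields Eq.~\eqref{isosceles_eqn}. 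Without it, the brute-force elimination of $\lambda$ and $\kappa$ you propose does not obviously terminate.

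On the boundary you also miss a decisive simplification and make one incorrect claim. Since $\partial\mathcal{T}$ is itself the leaf $\lambda=3$, Eq.~\eqref{cayley_cubics} shows that the intersection of any other leaf with $\partial\mathcal{T}$ coincides with the intersection of a \emph{sphere} $x_1^2+x_2^2+x_3^2=\text{const}$; the coplanar problem therefore reduces to $\nabla V\parallel\nabla r$ with $r=\sum\cos^2\theta_{ij}$ on each face, and this factors completely---as $\prod\sin(\theta_{ij}-\theta_{ik})=0$ on $F_0$ and as Eq.~\eqref{scalene_horrible} on $F_k$---giving both the isosceles families \emph{and} completeness in one stroke. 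Your statement that ``reflection symmetry of $\inertia$ forces the configuration to be isosceles'' is not true as written: Palais gives sufficiency, not necessity, and the scalene family 1.(iv) is precisely a coplanar relative equilibrium with in-plane axis that is not isosceles.
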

	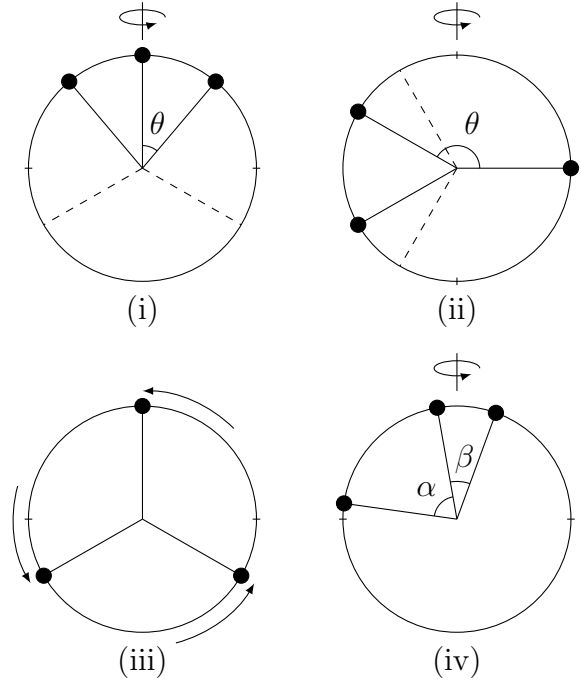
\begin{figure}
		\centering
	\begin{tabular}{cc}
	\begin{tikzpicture}[scale=1]
		\path[use as bounding box] (-2,-2) rectangle (2,2.5);	
		
		\draw (0,0) circle [radius=1.5];
		\draw (0,0) -- (90:1.5);
		\draw[dashed] (0,0) -- (-30:1.5);
		\draw[dashed] (0,0) -- (210:1.5);
		\draw (0,0) -- (50:1.5);
		\draw (0,0) -- (130:1.5);
		
		\draw[fill] (0,1.5) circle [radius=0.1];
		\draw[fill] (50:1.5) circle [radius=0.1];
		\draw[fill] (130:1.5) circle [radius=0.1];
		
		\draw (0,1.7) -- (0,2.2);
		\draw[-latex] (0.3,2) arc (0:310:0.3 and 0.1);
		
		\draw (1.45,0) -- (1.55,0);
		\draw (-1.45,0) -- (-1.55,0);
		
		\draw (50:0.3) arc(50:90:0.3);
		\node at (0.2,0.6){$\theta$};
		\node at (0,-1.9){(i)};
	\end{tikzpicture}
	
	\begin{tikzpicture}[scale=1]
		\path[use as bounding box] (-2,-2) rectangle (2,2.5);	
		
		\draw (0,0) circle [radius=1.5];
		\draw (0,0) -- (0:1.5);
		\draw[dashed] (0,0) -- (120:1.5);
		\draw[dashed] (0,0) -- (-120:1.5);
		\draw (0,0) -- (150:1.5);
		\draw (0,0) -- (-150:1.5);
		
		\draw[fill] (1.5,0) circle [radius=0.1];
		\draw[fill] (150:1.5) circle [radius=0.1];
		\draw[fill] (-150:1.5) circle [radius=0.1];
		
		\draw (0,1.7) -- (0,2.2);
		\draw[-latex] (0.3,2) arc (0:310:0.3 and 0.1);
		
		\draw (0,1.45) -- (0,1.55);
		\draw (0,-1.45) -- (0,-1.55);
		
		\draw (0.3,0) arc(0:150:0.3);
		\node at (0.2,0.6){$\theta$};
		
		\node at (0,-1.9){(ii)};
	\end{tikzpicture}
	
	\\
	
	\begin{tikzpicture}[scale=1]
		\path[use as bounding box] (-2,-2) rectangle (2,2.5);	
		
		\draw (0,0) circle [radius=1.5];
		\draw (0,0) -- (90:1.5);
		\draw (0,0) -- (-30:1.5);
		\draw (0,0) -- (210:1.5);
		
		\draw[fill] (0,1.5) circle [radius=0.1];
		\draw[fill] (-30:1.5) circle [radius=0.1];
		\draw[fill] (210:1.5) circle [radius=0.1];
		
		\draw[-latex] (45:1.7) arc (45:90:1.7);
		\draw[-latex] (165:1.7) arc (165:210:1.7);
		\draw[-latex] (-75:1.7) arc (-75:-30:1.7);
		
		\draw (1.45,0) -- (1.55,0);
		\draw (-1.45,0) -- (-1.55,0);
		
		\node at (0,-1.9){(iii)};
	\end{tikzpicture}
	
	\begin{tikzpicture}[scale=1]
		\path[use as bounding box] (-2,-2) rectangle (2,2.5);	
		
		\draw (0,0) circle [radius=1.5];
		
		\draw[fill] (100:1.5) circle [radius=0.1];
		\draw[fill] (70:1.5) circle [radius=0.1];
		\draw[fill] (172:1.5) circle [radius=0.1];
		
		\draw (0,1.7) -- (0,2.2);
		\draw[-latex] (0.3,2) arc (0:310:0.3 and 0.1);
		
		\draw (1.45,0) -- (1.55,0);
		\draw (-1.45,0) -- (-1.55,0);
		
		\draw (0,0) -- (100:1.5);
		\draw (0,0) -- (70:1.5);
		\draw (0,0) -- (172:1.5);
		
		\draw (70:0.5) arc (70:100:0.5);
		\draw (100:0.3) arc (100:172:0.3);
		
		\node at (0.1,0.8){$\beta$};
		\node at (-0.4,0.4){$\alpha$};
		
		\node at (0,-1.9){(iv)};
	\end{tikzpicture}
\end{tabular}
		\caption{\label{coplanar_pic}Coplanar relative equilibria.}
	\end{figure}
	We shall prove this theorem by classifying the critical points \(\crit(V,\lambda)\). Of course, points in \(\crit(V,\lambda)\) are not necessarily abnormal RE, as one must check that the gradients of \(V\) and \(\lambda\) have the same direction. We leave this routine task as an exercise to the interested reader. 
	
	\subsubsection{Coplanar Configurations}
	The reduced potential is invariant under the \(\mathbf{Z}_2\)-symmetry which interchanges \(\mathcal{T}_+\) with \(\mathcal{T}_-\). Therefore, the points of \(\crit({V},\lambda)\) belonging to \(\partial\mathcal{T}\) are equivalently the points of \(\crit({V}|_{\partial\mathcal{T}},\lambda|_{\partial\mathcal{T}})\). It is also invariant under the \(\mathbf{S}_3\)-symmetry fixing \(P\), including the reflections which transpose two vertices of \(F_0\). The set of critical points therefore contains the intersections of \(\partial\mathcal{T}\) with the planes of symmetry \(x_i=x_j\) for any choice of potential.
	
	\begin{proof}[Proof of Theorem~\ref{classification_thm} for normal RE of Type 1]
		Observe from Eq.~\eqref{cayley_cubics} that the intersection of any leaf from the web with \(\partial\mathcal{T}\) is the same as the intersection of \(\partial\mathcal{T}\) with a sphere centred at the origin. Therefore, for the cotangent potential we must classify the critical points of 
		\[
		V=-\cot\theta_{12}-\cot\theta_{13}-\cot\theta_{23},\quad\text{and}\quad r=\cos^2\theta_{12}+\cos^2\theta_{13}+\cos^2\theta_{23}
		\]
		restricted to \(\partial\mathcal{T}\). For configurations in \(F_0\) we have \(\theta_{12}+\theta_{13}+\theta_{23}=2\pi\). This allows us to write \(V\) and \(r\) in terms of \((\theta_{12},\theta_{23})\). The equation \(\nabla V\times\nabla r=0\) boils down to
		\begin{equation}
			\sin(\theta_{12}-\theta_{13})\sin(\theta_{12}-\theta_{23})\sin(\theta_{13}-\theta_{23})=0
		\end{equation}
		which gives the coplanar configurations of Type~1.{(ii)}. For coplanar configurations in \(F_2\) we now have \(\theta_{13}=\theta_{12}+\theta_{23}\) (recall that the second particle is in the middle of the half-plane containing all three). Writing \(\nabla V\times\nabla r=0\) in terms of \((\alpha,\beta)=(\theta_{12},\theta_{23})\) yields Eq.~\eqref{scalene_horrible} whose solutions are shown in Figure~\ref{scalene_curve_pic}. The curve \(\alpha=\beta\) gives the RE of Type~1.{(i)}, and the other curve gives the scalene family of Type~1.{(iv)}
		
		\begin{figure}
			\centering
	\begin{tikzpicture}
	\draw (0,0) node[inner sep=0] {\includegraphics[scale=0.8]{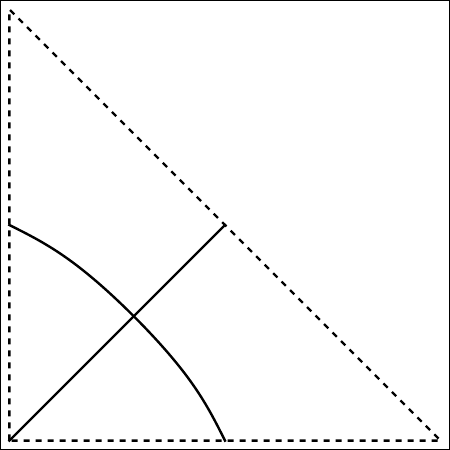}};
	
	\draw (0,-0.8*4.5) node {$\alpha$};
	\draw (-0.8*4.5,0) node {$\beta$};
	
	\draw  (-0.8*3.7,-0.8*3.7) --++(0,-0.8*0.2) ;
	\draw  (0.8*3.7,-0.8*3.7) --++(0,-0.8*0.2) ;
	
	\draw  (-0.8*3.7,0.8*3.7) --++(-0.8*0.2,0) ;

	\draw  (-0.8*3.7,-0.8*3.7) --++(-0.8*0.2,0) ;
	
	\draw (-0.8*3.7,-0.8*4.3) node {$0$};
	\draw (0.8*3.7,-0.8*4.3) node {$\pi$};
	
	\draw (-0.8*4.4,0.8*3.7) node {$\pi$};
	\draw (-0.8*4.4,-0.8*3.7) node {$0$};
	
\end{tikzpicture}
			\caption{\label{scalene_curve_pic}Curve of scalene-coplanar configurations.}
		\end{figure}
		

	\end{proof}
	
	\begin{rmk}
		The boundary \(\partial\mathcal{T}\) is itself the leaf \(\lambda=3\), and is the fixed-point set of the \(\mathbf{Z}_2\)-symmetry which interchanges \(\mathcal{T}_+\) with \(\mathcal{T}_-\). It follows that \(\nabla\lambda=0\) for this eigenvalue on the boundary. For this reason, the equilibrium point at the midpoint of \(F_0\) is also a normal RE for any \(L\) an eigenvector with eigenvalue \(\lambda=3\). These are the RE of Type~1.{(iii)}. 
	\end{rmk}

The RE of Type 1 are displayed in Figure~\ref{coplanar_pic}.

	\subsubsection{Abnormal Relative Equilibria}
	For abnormal RE we must find the critical points of \(V_{\omega}\) in Eq.~\eqref{augmented_pot} for when \(\omega\) ranges over the eigenspace for a repeated eigenvalue. 
	
	\begin{propn}\label{abnormal_propn}
		Along the 4 lines in \(\mathcal{T}\) where \(\inertia_{x}\) has a repeated eigenvalue, the map 
		\[
		\omega\longmapsto\nabla K_x(\omega)
		\]
		from the repeated eigenspace into \(T_x\mathcal{T}\) is a 2-1 map with image
		\begin{equation*}
			\begin{cases}
				T_x\mathcal{T}, &\text{for $x$ at the centre of $\mathcal{T}$}\\[1ex]
				T_x\partial\mathcal{T},&\text{for $x$ at the midpoint of $F_0$}\\[1ex]
				\mathcal{C},&\parbox[t]{7cm}{for $x$ elsewhere on the line through $P$.}
			\end{cases}
		\end{equation*}
		Here \(\mathcal{C}\) is the cone of vectors \(X_1(\partial/\partial x_1)+X_2(\partial/\partial x_2)+X_3(\partial/\partial x_3)\) satisfying
		\begin{equation}\label{abnormal_cone}
			X_1X_2+X_1X_3+X_2X_3=0,\quad\text{for}\quad X_1+X_2+X_3\ge 0.
		\end{equation}
		Along any of the other lines we may apply the \(\mathbf{Z}_2\)-symmetry exchanging \(P\) with another vertex by negating a pair of coordinates from \(x_1,x_2,x_3\).
	\end{propn}
	\begin{proof}
		For \(x\) in the interior of \(\mathcal{T}\) the matrix square root \(x\mapsto\sqrt{3\text{Id}-\pi(Q)}\) is a differentiable local section \(\sigma\). For this choice of section \(\inertia_{\sigma(x)}=\pi(Q)\), and so it suffices to compute
		\[
		\nabla\langle\pi(Q)\omega,\omega\rangle
		\]
		for \(\omega\) an eigenvector of \(\pi(Q)\) with a repeated eigenvalue. 
		
		
		The situation on the boundary is a little different since \((x_1,x_2,x_3)\) fails to be a coordinate chart on shape space. Instead, we take a chart \((\theta_{12},\theta_{13},\varphi)\) with a specific section \(q_1=(0,0,1)^T\), \(q_2=(\sin\theta_{12},0,\cos\theta_{12})^T\), and \(q_3=(\cos\varphi\sin\theta_{13},\sin\varphi\sin\theta_{13},\cos\theta_{13})^T\), and compute \(\nabla K_x(\omega)\) directly.
		
	\end{proof}
	
	\begin{proof}[Proof of Theorem~\ref{classification_thm} for abnormal RE]
	It follows from the proposition that exactly one abnormal RE exists (up to time-reversal) at the centre of \(\mathcal{T}\) and at the midpoints of the faces. 
		
		The gradient \(\nabla V\) is proportional to \((1,1,1)\) when evaluated along any of the four lines in the interior of \(\mathcal{T}\). Hence, for \(x\) elsewhere along these lines there are no critical points of \(V_\omega\) since \(\nabla{V}\) does not belong to the cones in the previous proposition.

	\end{proof}
	
	%
	\subsubsection{General Position Configurations}
	\begin{lem}
		For the cotangent potential every point of \(\crit({V},\lambda)\) in the interior of \(\mathcal{T}\) is isosceles. That is to say, \(x_i=x_j\) for a pair of coordinates.
	\end{lem}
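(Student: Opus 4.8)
The plan is to make the condition ``$x\in\crit(V,\lambda)$'' completely explicit and to reduce it to a small polynomial system. Here $\crit(V,\lambda)$ is the locus at which $dV$ and $d\lambda$ are linearly dependent, with $\lambda$ one of the locally-defined eigenvalue functions of $\inertia_Q$; on the interior of $\mathcal{T}$ the triple $(x_1,x_2,x_3)$ is a coordinate chart, so I would express both differentials in this chart in terms of $x_1,x_2,x_3$ and the auxiliary scalar $s=\lambda-2$.

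For the potential, since $V=-\cot\theta_{12}-\cot\theta_{13}-\cot\theta_{23}$ and $x_1=\cos\theta_{12}$, $x_2=\cos\theta_{13}$, $x_3=\cos\theta_{23}$, a one-line computation gives
\[
dV\ \propto\ \bigl((1-x_1^2)^{-3/2},\ (1-x_2^2)^{-3/2},\ (1-x_3^2)^{-3/2}\bigr),
\]
each component being strictly negative, hence nonzero, on the interior of $\mathcal{T}$. For the web, implicit differentiation of the defining equation \eqref{cayley_cubics} gives $d\lambda\propto(v_1,v_2,v_3)$ with $v_i=x_jx_k-s\,x_i$ for $(i,j,k)$ a cyclic permutation of $(1,2,3)$. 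Writing $a_i=1-x_i^2$ and $T_i=s^2-x_i^2$, the crucial identity is
\[
v_i^2=(s^2-x_j^2)(s^2-x_k^2)=T_jT_k,
\]
valid precisely on the eigenvalue locus: expanding $v_i^2-T_jT_k$ and using \eqref{cayley_cubics} one gets $v_i^2-T_jT_k=-s\bigl(s^3-s(x_1^2+x_2^2+x_3^2)+2x_1x_2x_3\bigr)=0$. (Equivalently this records the rank-one structure of $\mathrm{adj}(\pi(Q)-\lambda\,\Id)$, whose entries are the $v_i$ and the $s^2-x_i^2$.)

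Now the condition $dV\parallel d\lambda$, squared to clear the square roots and with $v_i^2=T_jT_k$ substituted, reads $T_2T_3a_1^3=T_3T_1a_2^3=T_1T_2a_3^3$. Away from the degenerate cases (some $T_i=0$) one multiplies the $i$-th expression by $T_i$ to find $a_i^3=\mu T_i=\mu(s^2-1)+\mu a_i$ for a single constant $\mu$; hence each $a_i$ is a root of the fixed cubic $t^3-\mu t-\mu(s^2-1)$. Were $a_1,a_2,a_3$ distinct they would be exactly its three roots, so Vieta's relation would force $a_1+a_2+a_3=0$ — impossible since $a_i=1-x_i^2>0$ on the interior. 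Therefore $a_i=a_j$ for some pair, i.e. $x_i=\pm x_j$; the case $x_i=x_j$ is the claim, while $x_i=-x_j$ (with $x_i\neq0$) forces $v_i=v_j$ and hence $s=-x_k$, which I would dispose of together with the remaining case $T_i=0$: when $s^2=x_i^2$ the identity collapses two of the $v$'s to zero, so $d\lambda$ is a multiple of a single coordinate covector (or vanishes), which is never parallel to $dV$ since the latter has all components nonzero — unless two of the $T_i$ vanish together, which forces $x_j^2=x_k^2$ and returns us to the previous case, or places $x$ on one of the four repeated-eigenvalue lines of the earlier proposition, on each of which two of the $x_i$ already agree.

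The main obstacle is spotting the identity $v_i^2=T_jT_k$: before it is available the parallel condition entangles the implicitly-defined eigenvalue with the factors $(1-x_i^2)^{-3/2}$ and looks intractable, whereas once it is in hand the argument collapses to the Vieta observation. The remaining work — bookkeeping the cases $s=0$, $T_i=0$, and $x_i=-x_j$, and afterwards verifying (the check the paper leaves to the reader) that the surviving isosceles configurations really do have $\nabla V$ co-oriented with $\nabla\lambda$ — is routine.
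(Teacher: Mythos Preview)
Your argument is correct and lands on exactly the same key equation as the paper, namely $a_j^3=\mu T_j$ (the paper writes it as $(s^2-x_j^2)/\kappa^2=(1-x_j^2)^3$), after which the Vieta trick is identical: the sum of the three roots of the resulting cubic is zero, forcing two of the $a_j$ to coincide. The route to that equation, however, is genuinely different. The paper parametrises each leaf of the Cayley web explicitly by $x_j=(\lambda-2)\cos\varphi_j$ with $\sum\varphi_j=0$ (and separately by $\cosh$ on the conical sheets), and then reads off the Lagrange-multiplier condition. You instead differentiate the defining cubic implicitly and invoke the algebraic identity $v_i^2=T_jT_k$, which is precisely the rank-one condition on $\mathrm{adj}(\pi(Q)-\lambda\,\Id)$; this identity is doing the same work as the parametrisation (indeed $v_i=s^2\sin\varphi_j\sin\varphi_k$ and $T_j=s^2\sin^2\varphi_j$ under the paper's chart), but it has the virtue of treating the tetrahedral and conical components and the degenerate leaf $\lambda=2$ all at once, with no case split. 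Where the paper is cleaner is in disposing of the plane $x_i=-x_j$: it simply observes that this is a plane of tetrahedral symmetry, so $\nabla\lambda$ is tangent to it, whereas $\nabla V$ has all entries of one sign and therefore is not. Your chain ``$x_i=-x_j\Rightarrow v_i=v_j\Rightarrow s=-x_k\Rightarrow T_k=0$'' followed by the $T_i=0$ analysis reaches the same conclusion but with more bookkeeping; in particular, the sub-case ``two $T_i$ vanish but not the third'' actually forces $s=0$ and hence two coordinates equal to zero (so isosceles), rather than literally ``returning to the previous case'' as you phrase it --- worth tightening, but not a real gap.
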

	\begin{proof}
		Consider a leaf of the web. We immediately discount the degenerate leaf \(\lambda=2\) since \({V}\) restricted to the coordinate planes \(x_k=0\) is regular everywhere. 
		
		For \(\lambda\ne 2\) the component of the web which is a curvy tetrahedron admits a parametrisation
		\begin{equation}\label{cayley_param}
			x_j=(\lambda-2)\cos\varphi_j,\quad\text{for}\quad\varphi_1+\varphi_2+\varphi_3=0.
		\end{equation}
		Finding critical points of \({V}\) restricted to the leaf therefore becomes the Lagrange multiplier problem
		\[
		(\lambda-2)\sin\varphi_j=\kappa\left[1-(\lambda-2)^2\cos^2\varphi_j\right]^{3/2}.
		\]
		Squaring both sides and rearranging yields
		\[
		\frac{(\lambda-2)^2-x_j^2}{\kappa^2}=(1-x_j^2)^3.
		\]
		Note that this is a cubic in \(x_j^2\). Suppose that the solutions \(x_1,x_2,x_3\) are distinct. Then \(x_1^2,x_2^2,x_3^2\) must be the three solutions to the cubic, and so their sum must be the coefficient of \(x_j^4\), which is \(3\). The only \(x\) in \(\mathcal{T}\) for which \(x_1^2+x_2^2+x_3^2=3\) are the vertices. 
		
		The vector \(\nabla\lambda\) is tangent to \(x_i=-x_j\) as it is a plane of symmetry for \(\mathcal{T}\). There can be no RE in this plane since all of the entries of \(\nabla V\) are strictly negative.
		
		For the conical components of the web we repeat the argument but using the parametrisation \(x_j=\pm(\lambda-2)\cosh\varphi_j\).
	\end{proof}
	
	\begin{proof}[Proof of Theorem~\ref{classification_thm} for RE of Type 2]
		The line through \(P\) and the midpoint of \(F\) is fixed by the \(\mathbf{S}_3\)-symmetry. Therefore, for any choice of potential this line belongs to \(\crit({V},\lambda)\), and for a strictly attractive potential corresponds to a family of normal RE of Type 2.{(i)}.
		
		To classify the remaining points of \(\crit({V},\lambda)\) in the interior of \(\mathcal{T}\) it suffices from the previous lemma to consider the plane \(x_1=x_3\). In this plane Eq.~\eqref{cayley_cubics} factors as
		\begin{equation}\label{webs_inthe_plane}
			\left((\lambda-2)+x_2\right)\left((\lambda-2)^2+(\lambda-2)x_2-2x_1^2\right)
		\end{equation}
	There are no critical points along the line \(x_2=2-\lambda\). If we parametrise the parabola defined by the quadratic factor in Eq.~\eqref{webs_inthe_plane} by \(x_1\) we find that 
	\begin{equation}
		\label{rate_of_change}
		\frac{dV}{dx_1}=-\frac{2}{(1-x_1^2)^{3/2}}-\left(\frac{4x_1}{\lambda-2}\right)\frac{1}{(1-x_2^2)^{3/2}}.
	\end{equation}
We can solve for \(\lambda\) in terms of \(x_1\) and \(x_2\), and then substitute this into the equation above. By setting the resulting expression to zero we find that the critical points are those which satisfy
		\begin{equation}\label{nasty_curve_eqn}
			\pm(1-x_2^2)^{3/2}\sqrt{8x_1^2+x_2^2}=4x_1(1-x_1^2)^{3/2}-x_2(1-x_2^2)^{3/2}.
		\end{equation}
		This defines three curves in the plane shown in Figure~\ref{iso_contour}, including the line \(x_1=x_2=x_3\) of equilateral solutions of Type 2.{(i)}, and the curves of Type 2.{(ii)}. By squaring both sides of this equation and writing \(x_1=\cos\alpha\) and \(x_2=\cos\beta\) we obtain Eq.\eqref{isosceles_eqn}.
	\end{proof}
	\begin{figure}
		\centering
\begin{tikzpicture}
	\draw (0,0) node[inner sep=0] {\includegraphics[scale=0.8]{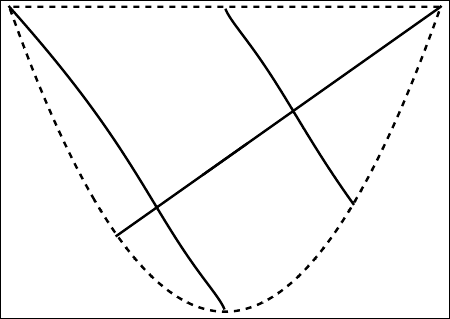}};
	
	\draw (0,-0.8*3.3) node {$x_1$};
	\draw (-0.8*4.3,0) node {$x_2$};
	
	\draw  (-0.8*3.7,-0.8*2.6) --++(0,-0.8*0.2) ;
	\draw  (0.8*3.7,-0.8*2.6) --++(0,-0.8*0.2) ;
	
	\draw  (-0.8*3.7,0.8*2.6) --++(-0.8*0.2,0) ;

	\draw  (-0.8*3.7,-0.8*2.6) --++(-0.8*0.2,0) ;
	
	\draw (-0.8*3.7,-0.8*3.3) node {$-1$};
	\draw (0.8*3.7,-0.8*3.3) node {$1$};
	
	\draw (-0.8*4.4,0.8*2.7) node {$1$};
	\draw (-0.8*4.4,-0.8*2.7) node {$-1$};
	
\end{tikzpicture}
		\caption{\label{iso_contour}The intersection of $\mathcal{T}$ with the plane $x_1=x_3$. Shown are the curves of Type~2 relative equilibria.}
	\end{figure}

\section{Stability}
A RE will be called stable if the corresponding equilibrium in the symplectic reduced space is stable. A sufficient condition for the stability of an equilibrium is that it be a local extremum of the Hamiltonian function; this is known as Dirichlet's criterion. The energy-momentum method seeks to establish the stability of a RE by showing that the corresponding equilibrium in reduced space is a local extremum of the reduced Hamiltonian.

\subsection{The Energy-Momentum Method}

We return to Eq.~\eqref{H_first_deriv} for the rate of change in the Hamiltonian along a curve \(\gamma(t)\) in the reduced space \(T^*U\times\orb\). Differentiating a second time yields
\begin{multline*}
	\frac{d^2}{dt^2}H(\gamma(t))=\langle\dot{y},\M_x^{-1}\dot{y}\rangle+\dots+\left(\langle\ddot{L},\inertia_{x}^{-1}L\rangle+\langle\dot{L},\dot{\inertia}_x^{-1}L\rangle+\langle\dot{L},\inertia_x^{-1}\dot{L}\rangle\right)\\
	+\frac{1}{2}\left(\langle\dot{L},\dot{\inertia}_x^{-1}L\rangle+\langle L,\ddot{\inertia}_x^{-1}L\rangle+\langle L,\dot{\inertia}_x^{-1}\dot{L}\rangle\right)+\ddot{V},
\end{multline*}
where the ellipsis indicates terms which contain \(y\). Indeed, after we set \(t=0\) we have \(y(0)=0\) and the expression above simplifies to
\begin{equation*}
	\langle\dot{y},\M_{x_0}^{-1}\dot{y}\rangle+2\langle\dot{L},\dot{\inertia}_x^{-1}L_0\rangle+\frac{d^2}{dt^2}\bigg|_{t=0}\left(V_{L_0}(x(t))+J_{x_0}(L(t))\right).
\end{equation*}
For normal RE we shall suppose, as in the proof of Theorem~\ref{main_thm}, that the section \(\sigma\) is such that \(L_0\) is a critical point of \(J_{x}\) restricted to \(\orb\) for all \(x\) in \(U\). Therefore, by Lemma~\ref{lemma}, \(\omega_0=\inertia_x^{-1}L_0\) satisfies \(\ad^*_{\omega_0} L_0=0\) for all \(x\). This implies that \(\inertia_x^{-1}L_0\) belongs to the isotropy subalgebra \(\mathfrak{g}_{L_0}\), and therefore, so too does the derivative \(\dot{\inertia}_x^{-1}L_0\). The tangent space to \(\orb\) at \(L_0\) is the annihilator \(\mathfrak{g}_{L_0}^\circ\), and hence, the term \(\langle\dot{L},\dot{\inertia}_x^{-1}L_0\rangle\) in the expression above vanishes, giving us

\begin{propn}
	Let \((x_0,L_0)\) be a normal relative equilibrium. The Hessian of the reduced Hamiltonian evaluated at the fixed point \((x_0,0,L_0)\) in \(T^*U\times\orb\) is given in block-diagonal form by
	\begin{equation}\label{hess}
		\begin{pmatrix}
			\M_{x_0}^{-1} & 0 & 0 \\
			0 &\Hess(V_{L_0}) & 0\\
			0 & 0 & \Hess(J_{x_0})
		\end{pmatrix}.
	\end{equation}
\end{propn}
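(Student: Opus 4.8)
The plan is to obtain the Hessian directly as the quadratic form $\dot\gamma(0)\mapsto\frac{d^2}{dt^2}\big|_{t=0}H(\gamma(t))$ along curves $\gamma(t)=(x(t),y(t),L(t))$ in $T^*U\times\orb$ with $\gamma(0)=(x_0,0,L_0)$; since a normal relative equilibrium is in particular a critical point of the reduced Hamiltonian, this second derivative is well defined and depends only on $\dot\gamma(0)=(\dot x,\dot y,\dot L)$. First I would split $H$ into the shape-kinetic term $\tfrac12\langle y,\M_x^{-1}y\rangle$, the rotational-kinetic term $\tfrac12\langle L,\inertia_x^{-1}L\rangle$, and $V(x)$. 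Because $y(0)=0$, a Taylor expansion shows the shape-kinetic term contributes exactly $\langle\dot y,\M_{x_0}^{-1}\dot y\rangle$ with no coupling to $\dot x$ or $\dot L$ — this is the role played by the condition $y_0=0$ at a relative equilibrium. For the remaining two terms I would rewrite $\tfrac12\langle L(t),\inertia_{x(t)}^{-1}L(t)\rangle+V(x(t))$ as $V_{L_0}(x(t))+J_{x_0}(L(t))$ plus a remainder and expand to second order; this is exactly the computation carried out just above the proposition, yielding
\begin{equation*}
	\frac{d^2}{dt^2}\Big|_{t=0}H(\gamma(t))=\langle\dot y,\M_{x_0}^{-1}\dot y\rangle+2\langle\dot L,\dot{\inertia}_x^{-1}L_0\rangle+\frac{d^2}{dt^2}\Big|_{t=0}\big(V_{L_0}(x(t))+J_{x_0}(L(t))\big).
\end{equation*}

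The heart of the proof is to kill the middle term $2\langle\dot L,\dot{\inertia}_x^{-1}L_0\rangle$, which is the only surviving coupling between the shape direction and the orbit direction. Here I would invoke the section adapted to $L_0$ from the proof of Theorem~\ref{main_thm}, for which $L_0$ is a critical point of $J_x$ restricted to $\orb$ for \emph{every} $x$ in the neighbourhood $U$, not merely at $x_0$. By Lemma~\ref{lemma} this says $\ad^*_{\omega_0(x)}L_0=0$ with $\omega_0(x)=\inertia_x^{-1}L_0$, so the smooth curve $x\mapsto\inertia_x^{-1}L_0$ stays inside the fixed linear subspace $\mathfrak{g}_{L_0}$, and hence so does its derivative $\dot{\inertia}_x^{-1}L_0$. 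On the other hand $\dot L$ is tangent to $\orb$ at $L_0$, so it lies in $\mathfrak{g}_{L_0}^\circ=T_{L_0}\orb$; pairing an element of this annihilator with an element of $\mathfrak{g}_{L_0}$ gives zero.

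With the cross term eliminated, the second derivative is the sum of a quadratic form in $\dot y$ alone (with matrix $\M_{x_0}^{-1}$), the quantity $\frac{d^2}{dt^2}\big|_{t=0}V_{L_0}(x(t))=\Hess(V_{L_0})(\dot x,\dot x)$ (using that $x_0$ is a critical point of $V_{L_0}$), and $\frac{d^2}{dt^2}\big|_{t=0}J_{x_0}(L(t))=\Hess(J_{x_0})(\dot L,\dot L)$, the Hessian of $J_{x_0}$ restricted to $\orb$ (using that $L_0$ is a critical point of $J_{x_0}|_\orb$). Since these three contributions involve the three groups of tangent coordinates separately, the Hessian of $H$ at $(x_0,0,L_0)$ is block-diagonal in the asserted form. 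I expect the only genuine obstacle to be the vanishing of the cross term — everything else is the bookkeeping of a Taylor expansion — and I would emphasise that the block decomposition, like the model $T^*U\times\orb$ of the reduced space itself, is tied to the choice of section that makes $L_0$ stationary along $\orb$.
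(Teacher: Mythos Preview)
Your proposal is correct and follows the paper's argument essentially verbatim: the second derivative along curves is expanded as in the text immediately preceding the proposition, and the cross term $2\langle\dot L,\dot{\inertia}_x^{-1}L_0\rangle$ is eliminated by the same mechanism---the adapted section from Theorem~\ref{main_thm} forces $\inertia_x^{-1}L_0\in\mathfrak{g}_{L_0}$ for all $x$, hence $\dot{\inertia}_x^{-1}L_0\in\mathfrak{g}_{L_0}$, which annihilates $\dot L\in T_{L_0}\orb=\mathfrak{g}_{L_0}^\circ$. There is nothing to add.
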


	Eq.~\eqref{hess} coincides with the block diagonalisation obtained from the reduced energy-momentum method of \cite[Thm.~2.7]{stability}. The rigid `rotational modes' correspond to the so-called Arnold form \(\text{Hess}(J_{x_0})\), and the internal `vibrational modes' to the block \(\text{Hess}(V_{L_0})\).
	

\begin{rmk}\label{signature_remark}
The \(\mathbb{M}_{x_0}^{-1}\) block in Eq.~\eqref{hess} is always positive definite as it is the metric tensor for the reduced metric on shape space. Therefore, the energy-momentum method can only return a positive stability result when \(\text{Hess}(V_{L_0})\) and \(\text{Hess}(J_{x_0})\) are both positive definite. That being said, the signature of the Hessian in  Eq.~\eqref{hess} can be used to deduce instability. If the Hessian is non-singular and has an odd number of positive eigenvalues, then the equilibrium is unstable.
\end{rmk}

\begin{ex}[The rigid body]
	For the case of rotational symmetry the quadratic form \(J_{x_0}\) produces the familiar picture of Euler's equations for the rigid body. The critical points of \(J_{x_0}\) on the coadjoint orbit \(\mathbf{S}^2_\mu\) are the eigenvectors of \(\inertia_{x_0}\). These have signature \((++)\) for the largest eigenvalue, \((+-)\) for the intermediate eigenvalue, and \((--)\) for the smallest eigenvalue.
\end{ex}

\subsection{Stability of Eulerian and Lagrangian Solutions}
\begin{thm}\label{3bp_stab}
	Consider three particles of unit mass on \(\textup{\textbf{S}}^2\) whose  mutual interaction is given by the cotangent potential. For relative equilibria of Types 1.(i)--(iii) and Type 2.(i) the signature of the Hessian at the corresponding equilibria in reduced space is as follows:
	\begin{enumerate}
		\item \textup{\textbf{Eulerian solutions}} of Types 1.(i)--(ii) with separation angle \(\theta\) have signature
		\begin{itemize}
			\item [\ding{226}]\((+++,++-,--)\) for \(0<\theta<\theta_\textup{scal}\),
			\item [\ding{226}]\((+++,+++,--)\) for \(\theta_\textup{scal}<\theta<\theta_\textup{iso}\),
			\item [\ding{226}]\((+++,-++,--)\) for \(\theta_\textup{iso}<\theta<\frac{\pi}{3}\),
			\item [\ding{226}]\((+++,-+-,+-)\) for \(\frac{\pi}{3}<\theta<\frac{2\pi}{3}\), and
			\item [\ding{226}]\((+++,-++,+-)\) for \(\frac{2\pi}{3}<\theta<\pi\).
			\end{itemize}
		Here \(\theta_\textup{scal}\approx 0.906\) satisfies \(32\cos^6\theta-2\cos^2\theta-1=0\) and \(\theta_\textup{iso}\approx 0.934\) satisfies \(\cos^4\theta=1/8\).
		\item \textup{\textbf{Lagrangian solutions}} of Type 2.(i) with internal angle \(\phi\) have signature
		\begin{itemize}
			\item [\ding{226}]\((+++,--+,--)\) for \(0<\phi<\frac{\pi}{2}-\phi_\textup{scal}\),
			\item [\ding{226}]\((+++,+++,--)\) for \(\frac{\pi}{2}-\phi_\textup{scal}<\phi<\frac{\pi}{2}\),
			\item [\ding{226}]\((+++,--+,++)\) for \(\frac{\pi}{2}<\phi<\frac{\pi}{2}+\phi_\textup{scal}\), and
			\item [\ding{226}]\((+++,+++,++)\) for \(\frac{\pi}{2}+\phi_\textup{scal}<\phi<\frac{2\pi}{3}\).
			\end{itemize}
		Here \(\phi_\textup{scal}\) satisfies \(\sin\phi=1/\sqrt{10}\). Additionally, coplanar solutions of Type 1.(iii) rotating in the plane with angular momentum \(L\) have signature
		\begin{itemize}
			\item [\ding{226}]\((+++,++-,++)\) for \(|L|^2<24\sqrt{3}\), and
			\item [\ding{226}]\((+++,+++,++)\) for \(|L|^2>24\sqrt{3}\).
		\end{itemize}
	\end{enumerate}
The signatures are written according to the grouping of blocks \(\mathbb{M}_{x}^{-1}\oplus\textup{Hess}(V_{L})\oplus\textup{Hess}(J_{x})\) in the tangent spaces to \(T^*U\times\textup{\textbf{S}}^2_\mu\).
\end{thm}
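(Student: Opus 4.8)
The plan is to extract each signature from the block-diagonal Hessian of Eq.~\eqref{hess}. Shape space here is three-dimensional, so that decomposition reads $\M_{x_0}^{-1}\oplus\Hess(V_{L_0})\oplus\Hess(J_{x_0})$ with block sizes $3$, $3$, $2$, which is exactly the grouping in the statement. The first block is the reduced metric tensor and is therefore automatically positive definite, so it contributes $(+++)$ in every case; the task is to determine the inertia of the other two blocks along each family of Theorem~\ref{classification_thm}.

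For the rotational block I would follow the rigid-body example: on $\textbf{S}^2_\mu$ the form $J_{x_0}(L)=\tfrac12\langle L,\inertia_{x_0}^{-1}L\rangle$ has signature $(++)$, $(+-)$ or $(--)$ at the eigenvector $L_0$ according to whether its eigenvalue $\lambda_j$ is the largest, intermediate or smallest of the three eigenvalues of $\inertia_Q=3\Id-QQ^T$. So for each family I would write down $x_0\in\mathcal{T}$, compute the eigenvalues of $\inertia_Q$ explicitly as functions of the separation angle $\theta$ (Eulerian), the internal angle $\phi$ (Lagrangian), or as the triple $(\tfrac32,\tfrac32,3)$ for the equatorial triangle, and record where $\lambda_j$ sits in the spectrum. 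An eigenvalue crossing of $\inertia_Q$, which happens precisely at the abnormal angles $\theta=\pi/3$ and $\phi=\pi/2$, is one mechanism by which this block changes signature.

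For the vibrational block I would pass to the special section of the proof of Theorem~\ref{main_thm}, for which $V_{L_0}=V+\tfrac12|L_0|^2\lambda_j^{-1}$ near $x_0$, the multiplier being fixed by the relative-equilibrium condition $\nabla V=\tfrac12|L_0|^2\lambda_j^{-2}\nabla\lambda_j$ (equivalently $|L_0|^2=\kappa\lambda_j^2$, cf.\ Theorem~\ref{introduction_thm}). I would then compute the $3\times 3$ Hessian of $V_{L_0}$ at $x_0$, using the coordinates $(x_1,x_2,x_3)$ in the interior of $\mathcal{T}$ and the boundary chart $(\theta_{12},\theta_{13},\varphi)$ of Proposition~\ref{abnormal_propn} for the coplanar families. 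The computation is lightened by the tetrahedral symmetry: every relative equilibrium of Types 1.(i)--(iii) and 2.(i) sits on a reflection plane $x_i=x_j$ or its boundary analogue, so $\Hess(V_{L_0})$ block-diagonalizes into a symmetry-preserving part and a one-dimensional symmetry-breaking part, and a boundary configuration splits further into the two directions tangent to $\partial\mathcal{T}$ and the one normal to it.

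The crux, and the main obstacle, is pinning down the precise angles at which an eigenvalue of $\Hess(V_{L_0})$ vanishes. Rather than grinding the characteristic polynomial, I would use the fact that such a degeneracy must coincide with a bifurcation of the families found in Theorem~\ref{classification_thm}: the symmetry-breaking eigenvalue of an Eulerian solution vanishes where the scalene branch of Eq.~\eqref{scalene_horrible} meets the symmetric locus $\alpha=\beta$, which reduces to $32\cos^6\theta-2\cos^2\theta-1=0$; the eigenvalue normal to $\partial\mathcal{T}$ vanishes where the general-position isosceles family of Eq.~\eqref{isosceles_eqn} detaches from the coplanar family, giving $\cos^4\theta=1/8$; and the same mechanism on the Lagrangian branch produces $\sin\phi=1/\sqrt{10}$ for $\phi_\textup{scal}$. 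For the equatorial solutions of Type 1.(iii) the boundary satisfies $\nabla\lambda=0$ for the eigenvalue $\lambda=3$ (so $x_0$ is an honest equilibrium) and $\Hess(\lambda)$ is negative in the inward normal direction, so the inward eigenvalue of $\Hess(V_{L_0})=\Hess(V)+\tfrac12|L_0|^2\Hess(\lambda^{-1})$ crosses zero at $|L|^2=24\sqrt3$. One point of care is that at the abnormal angles $\theta=\pi/3$ and $\phi=\pi/2$ the eigenvalue $\lambda_j$ is only Lipschitz (the web leaf is singular there), so $\Hess(V_{L_0})$ need not be continuous and several entries of the signature string can change at once. Finally I would assemble the three blocks into the stated strings and cross-check them against Remark~\ref{signature_remark} — an odd count of positive eigenvalues forces instability — and against the transitions visible in Figures~\ref{scalene_curve_pic} and~\ref{iso_contour}.
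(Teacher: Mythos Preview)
Your outline matches the paper's proof in structure: the $\M^{-1}$ block is trivially positive, the $\Hess(J_{x_0})$ block is read off from the ordering of the eigenvalues of $\inertia_Q$ (the paper asserts this ordering explicitly before the proof), and $\Hess(V_{L_0})$ is block-diagonalised using the reflection symmetries exactly as you describe. The paper also uses $V_{L_0}=V+\tfrac12|L_0|^2\lambda_j^{-1}$ for the Lagrangian family and the explicit boundary eigenvalues $\lambda_\pm$ for the Eulerian family.

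Where you diverge is in your shortcut for locating the sign changes of $\Hess(V_{L_0})$. You propose to infer them from the bifurcation points where the scalene and isosceles branches meet the Eulerian and Lagrangian curves. The paper confirms in a remark that these intersections do coincide with $\theta_{\textup{scal}}$, $\theta_{\textup{iso}}$, $\tfrac{\pi}{2}\pm\phi_{\textup{scal}}$, so your identification of the equations is correct. But the implication you are using runs the wrong way: a bifurcation forces a zero eigenvalue of the Hessian, yet a zero eigenvalue need not produce a visible branch in Theorem~\ref{classification_thm}. To turn your argument into a proof you must still (a) exhibit the actual sign in at least one interval of each family and (b) exclude additional degeneracies. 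The paper simply computes the eigenvalues outright: for the Lagrangian family the two orthogonal eigenvalues are positive multiples of $(\lambda-2)(5\lambda^2-20\lambda+18)$, and for the Eulerian boundary block it tabulates closed-form expressions $E_1,E_2$ in $\theta$ on each subinterval. That is no harder than assembling the Hessian you already plan to compute, and it closes both gaps at once.

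There is one further technical point you skate over. For the Eulerian transversal direction (normal to $\partial\mathcal{T}$) the coordinates $(x_1,x_2,x_3)$ degenerate, and the chart $(\theta_{12},\theta_{13},\varphi)$ from Proposition~\ref{abnormal_propn} does not by itself yield the second derivative you need, because the relevant curve (a leaf of the web in the plane $x_i=x_j$) meets $\partial\mathcal{T}$ with $\Delta\sim\sqrt{s}$. The paper handles this with a short regularisation lemma: reparametrise so that $\Delta$ is smooth, and then the second derivative of $V_{L_0}$ at the boundary equals the product of the first derivatives of the cubic $C(x)$ and of $V$ along the original curve. This is how the sign flip at $\theta_{\textup{iso}}$ is actually established; your plan would need the same device or an equivalent.
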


Before commencing the proof we assert the following easily verified statements concerning the inertia eigenvalues of the RE. For Types 1.(i)--(ii) \(\lambda\) is the smallest of the three eigenvalues when \(\theta<\pi/3\), otherwise it is intermediate. For Type 1.(iii) \(\lambda=3\) is maximal. For Type 2.(i) \(\lambda\) is the smallest of the three eigenvalues for \(\phi<\pi/2\), otherwise it is the largest. This establishes the signatures of the block \(\text{Hess}(J_x)\). 

It remains to find the signature of each \(\text{Hess}(V_{L})\) to complete the proof. In order to do this we shall use Remark~\ref{actually} to identify shape space with the smooth affine variety
\[
\mathcal{B}=\left\{(x,\Delta)\in\R^3\times\R~|~C(x)=\Delta^2\right\}.
\]
Here \(C(x)\) is the cubic appearing in Eq.~\eqref{curvytetrahedron_defn}. This will allow us to differentiate curves which pass through coplanar configurations where \(\Delta=0\), since here \((x_1,x_2,x_3)\) fails to serve as a coordinate chart on \(\partial\mathcal{T}\). 

\begin{lem}\label{regularize}
	Let \(x(s)\) be a curve in \(\mathcal{T}^\pm\) which intersects \(\partial\mathcal{T}\) transversally at \(s=s_0\) and \(F\) a function on \(\mathcal{B}\) which depends on \(x\) alone. There is a reparametrisation \(\widetilde{x}(s)=x(\sigma(s))\) with \(\sigma(s_0)=s_0\), and a lift of this to a curve \(\Gamma(s)=(\widetilde{x}(s),\Delta(s))\) in \(\mathcal{B}\) with
	\begin{equation}\label{useful}
		\frac{d^2}{ds^2}\bigg|_{s_0}F(\Gamma(s))=\left(\frac{d}{ds}\bigg|_{s_0}C(x(s))\right)\left(\frac{d}{ds}\bigg|_{s_0}F(x(s))\right)
	\end{equation}
\end{lem}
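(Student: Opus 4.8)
\emph{Proof plan.} The plan is to reduce the statement to a one-variable computation about the order of vanishing of $C$ along the curve, and to choose the reparametrisation precisely so that the excursion of $x(s)$ outside $\mathcal{T}$ is ``folded'' back inside with a first-order contact at $s_0$. Set $\phi(u)=C(x(u))$. Transversality of $x(s)$ to $\partial\mathcal{T}=\{C=0\}$ means exactly that $c_1:=\phi'(s_0)=\frac{d}{ds}\big|_{s_0}C(x(s))$ is nonzero, so $C$ changes sign at $s_0$ and $x(s)$ lies in $\mathcal{T}$ only on one side of $s_0$. Since $\phi'(s_0)\ne 0$, the inverse function theorem furnishes a smooth local inverse $\psi$ of $\phi$ near $0$ with $\psi(0)=s_0$ and $\psi'(0)=1/c_1$.

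I would then define $\sigma(s)=\psi\!\big(\tfrac12 c_1^2(s-s_0)^2\big)$, which is smooth near $s_0$ and satisfies $\sigma(s_0)=s_0$. Putting $\widetilde{x}(s)=x(\sigma(s))$, by construction $C(\widetilde{x}(s))=\phi(\sigma(s))=\tfrac12 c_1^2(s-s_0)^2\ge 0$, so $\widetilde{x}(s)$ never leaves $\mathcal{T}$, and setting $\Delta(s)=\tfrac{c_1}{\sqrt2}(s-s_0)$ produces a smooth curve $\Gamma(s)=(\widetilde{x}(s),\Delta(s))$ in $\mathcal{B}$ with $\Gamma(s_0)\in\partial\mathcal{T}$ (and, incidentally, transversal to the wall $\Delta=0$). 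The elementary facts I need are $\sigma'(s_0)=0$ and $\sigma''(s_0)=\psi'(0)\,c_1^2=c_1$, from which the chain rule gives $\widetilde{x}'(s_0)=0$ and $\widetilde{x}''(s_0)=c_1\,x'(s_0)$. The vanishing $\widetilde{x}'(s_0)=0$ is the whole point: it is what annihilates the Hessian term below.

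To finish, I would use that ``$F$ depends on $x$ alone'' means $F$ is the restriction to $\mathcal{B}$ of a $C^2$ function $f$ of $x\in\R^3$, so $F(\Gamma(s))=f(\widetilde{x}(s))$, and differentiate twice:
\[
\frac{d^2}{ds^2}\bigg|_{s_0}F(\Gamma(s))=\widetilde{x}'(s_0)^{\top}\Hess f(x(s_0))\,\widetilde{x}'(s_0)+\nabla f(x(s_0))\cdot\widetilde{x}''(s_0)=c_1\big(\nabla f(x(s_0))\cdot x'(s_0)\big).
\]
Since $\nabla f(x(s_0))\cdot x'(s_0)=\frac{d}{ds}\big|_{s_0}f(x(s))=\frac{d}{ds}\big|_{s_0}F(x(s))$ and $c_1=\frac{d}{ds}\big|_{s_0}C(x(s))$, this is precisely Eq.~\eqref{useful}.

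The computation is routine once the right $\sigma$ is chosen; the only thing to be careful about is the normalisation constant $\tfrac12 c_1^2$ defining $\sigma$, selected so that the factor $c_1$ coming from $\sigma''(s_0)$ and the factor $c_1$ absorbed into the lift $\Delta$ assemble into the single factor $\frac{d}{ds}\big|_{s_0}C(x(s))$ on the right of Eq.~\eqref{useful}, rather than some other multiple of it. The mildest obstacle is the bookkeeping claim that a function on $\mathcal{B}$ that is constant on the fibres of $\mathcal{B}\to\mathcal{T}$ really is the restriction of a smooth function of $x$ (so that $\Hess f$ is meaningful): this holds because $x_1,x_2,x_3$ together with the relation $\Delta^2=C(x)$ exhaust the invariants, and in any case the $\Hess f$ contribution drops out since $\widetilde{x}'(s_0)=0$.
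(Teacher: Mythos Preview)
Your argument is correct and complete. The route differs from the paper's in how the reparametrisation $\sigma$ is obtained: the paper produces $\sigma$ as a solution of the non-Lipschitz ODE $\sigma'=\sqrt{2C(x(\sigma))}$, appealing to Peano's existence theorem, and leaves the derivation of Eq.~\eqref{useful} implicit. You instead invert $\phi=C\circ x$ via the inverse function theorem and set $\sigma(s)=\phi^{-1}\!\big(\tfrac12 c_1^2(s-s_0)^2\big)$, which gives an explicit smooth $\sigma$ with the same second-order jet $\sigma'(s_0)=0$, $\sigma''(s_0)=c_1$ as the ODE solution, and then carry out the chain-rule computation in full. Your construction is more elementary (no ODE existence theory needed) and makes the mechanism behind Eq.~\eqref{useful} transparent: the vanishing of $\widetilde{x}'(s_0)$ kills the Hessian term, leaving only the gradient term scaled by $\sigma''(s_0)=c_1$. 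The paper's ODE formulation has the minor advantage that $\Delta=\pm\sigma'/\sqrt{2}$ is manifestly smooth without a separate check, but both approaches yield the same formula.
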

\begin{proof}
	The reparametrisation is required to ensure that \(\Delta(s)=\pm\sqrt{C(\widetilde{x}(s))}\) be differentiable at \(s_0\). This is achieved by a \(\sigma\) satisfying \(\sigma'=\sqrt{2C(x(\sigma))}\), whose existence is guaranteed by Peano's theorem. 
\end{proof}
\begin{proof}[Proof of Theorem~\ref{3bp_stab} for Lagrangian solutions]
	The points in shape space for the Lagrangian RE are along the line of symmetry through \(P\) and the midpoint of the face \(F_0\). We can exploit the tetrahedral symmetry to decompose the Hessian of \(V_L\) into two subspaces: along this line, and orthogonal to it. 
	
	The space orthogonal to the line is tangent to the surface of constant \(\lambda\). Therefore, we equivalently seek the Hessian of \(V\) restricted to this leaf. The parametrisation given in Eq.~\eqref{cayley_param} allows us to compute this directly. Both eigenvalues of the Hessian are found to be positive scalar multiples of 
	\[
	(\lambda-2)(5\lambda^2-20\lambda+18).
	\]
	The root \(\lambda=2\) corresponds to \(\phi=\pi/2\), and the root-pair \(\lambda=2\pm\sqrt{2/5}\) to \(\phi=\cos^{-1}(\pm 1/\sqrt{10})\).
	
	We must now determine the Hessian of \(V_L\) in the direction along the line of symmetry \(x_1=x_2=x_3=\cos\phi\). Along this line \(\lambda=2(1-\cos\phi)\), and thus, we have a one-dimensional function 
	\[
	V_L(\phi)=-3\cot\phi+\frac{|L|^2}{4(1-\cos\phi)}
	\]
whose critical points can be shown to be minima for \(0<\phi<2\pi/3\). For RE of Type 1.(iii) we apply Lemma~\ref{regularize} to \(V_L(\phi)\) evaluated at \(\phi=2\pi/3\).
\end{proof}

The points in shape space for the Eulerian RE belong to the intersection of  the boundary \(\partial\mathcal{T}\) with the plane \(x_i=x_j\). The symmetries of the tetrahedron imply that the Hessian of \(V_L\) decomposes into a boundary component in \(\partial\mathcal{T}\), and a transversal component. We divide the proof into two parts accordingly.

\begin{proof}[Proof of Theorem~\ref{3bp_stab} for Eulerian solutions: transversal component]

	The inertia is invariant with respect to the tetrahedral symmetries. It follows that the relevant surface of constant \(\lambda\) meets \(\partial\mathcal{T}\) orthogonally in the plane \(x_i=x_j\). Therefore, it suffices to compute the second derivative of \(V\) evaluated along such a curve at the point where it meets the boundary.
	
	The relevant curve is a straight line for \(2\pi/3<\theta<\pi\) and a downward parabola for \(\theta<2\pi/3\) given by the linear and quadratic factors of Eq.~\eqref{webs_inthe_plane}. With the aid of Figure~\ref{iso_contour} we see that these curves meet the boundary on the left for \(\pi/2<\theta<\pi\), where the potential is found to be increasing. On the other hand, when these curves meet the boundary on the right for \(\theta<\pi/2\), we see by evaluating Eq.~\eqref{rate_of_change} at \(x_1=\cos\theta\) and \(x_2=\cos 2\theta\), that
	\[
	\frac{dV}{dx_1}=2\csc^32\theta\sec\theta(1-8\cos^4\theta).
	\]
changes sign at \(\theta_\text{iso}\). Combining this with Lemma~\ref{regularize} reveals that, along these transversal curves, the Hessian of \(V_L\) contributes a \((+)\) for \(\theta<\theta_\text{iso}\) and a \((-)\) otherwise.
\end{proof}
	
\begin{proof}[Proof of Theorem~\ref{3bp_stab} for Eulerian solutions: boundary component]
	As in the proof of Theorem~\ref{classification_thm} we shall use \((\theta_{12},\theta_{23})\) as coordinates on each face of \(\partial\mathcal{T}\). For coplanar configurations \(\lambda=3\) is a root of Eq.~\eqref{cayley_cubics}. The other two roots are
	\begin{equation*}\label{explicit_lambda}
		\lambda_{\pm}=\frac{1}{2}\left( 3\pm\sqrt{3+2\cos(2\theta_{12})+2\cos(2\theta_{13})+2\cos(2\theta_{23})}\right).
	\end{equation*}
This gives us an explicit expression for \(V_L\) in terms of \((\theta_{12},\theta_{23})\). Computing the Hessian can be made a little easier by exploiting the symmetry through the line \(\theta_{12}=\theta_{23}\) to deduce that \((1,1)\) and \((1,-1)\) are eigenvectors of \(\text{Hess}(V_L|_{\partial\mathcal{T}})\) with eigenvalues \(E_1\) and \(E_2\), respectively.
The proof is completed by solving for \(|L|^2\) in \(\nabla V_L=0\) and then computing \(E_1\) and \(E_2\) evaluated at \(\theta=\theta_{12}=\theta_{23}\). The results are collected below.
	\[
\renewcommand{\arraystretch}{1.5}
\begin{array}{lll}
0<\theta<\frac{\pi}{2}:  & E_1=4(4+\cos 2\theta)\csc^32\theta&>0\\[0.2cm]
  & E_2=-\displaystyle\frac{4(32\cos^6\theta-2\cos^2\theta-1)}{(1+2\cos 2\theta)\sin^32\theta}&\pbox{20cm}{$<0$ for $\theta<\theta_\text{scal}$\\ $>0$ for $\theta_\text{scal}<\theta<\frac{\pi}{3}$\\$<0$ for $\theta>\frac{\pi}{3}$}\\[0.5cm]
 \frac{\pi}{2}<\theta<\frac{2\pi}{3}:  & E_1=6\csc^42\theta\sin 4\theta&>0\\
  & E_2=-4(2+\cos 2\theta)(1+2\cos 2\theta)\csc^32\theta&<0\\[0.5cm]
  \frac{2\pi}{3}<\theta<\pi:  & E_1=\displaystyle-\frac{6(7+8\cos 2\theta+3\cos 4\theta)}{(2+\cos 2\theta)\sin^32\theta}&>0\\
  & E_2=-4(2+\cos 2\theta)(1+2\cos 2\theta)\csc^32\theta&>0
\end{array}
\]
\end{proof}
\begin{rmk}
	By continuity of the Hamiltonian, the Hessian must become singular at the critical values of \(\theta\) and \(\phi\) in Theorem~\ref{3bp_stab} where the sig§nature changes. In fact, one can show that \(\theta_\text{scal}\) is where the curve of Eulerian solutions meets the curve of scalene solutions in Figure~\ref{scalene_curve_pic}, and \(\theta_\text{iso}\) is where the curve of isosceles solutions meets the boundary in Figure~\ref{iso_contour}. Furthermore, \(\frac{\pi}{2}\pm\phi_\text{iso}\) correspond to the two points where the line of Lagrangian solutions intersects the curves of isosceles solutions.
\end{rmk}

We conclude with some concrete stability results by combining Theorem~\ref{3bp_stab} with Remark~\ref{signature_remark} to obtain

\begin{cor}
	Eulerian solutions with angular separation \(\theta\) are unstable if \(\theta<\theta_\textup{scal}\) or if \(\theta_\textup{iso}<\theta<2\pi/3\). Lagrangian solutions with internal angle \(\phi\) are local minima of the reduced Hamiltonian if  \(\pi/2+\phi_\textup{scal}<\phi<2\pi/3\), and are therefore stable. Solutions of Type 1.(iii) are unstable if \(|L|^2\) is less than \(24\sqrt{3}\) and become gyroscopically stabilised if it is greater.
\end{cor}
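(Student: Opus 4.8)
The plan is to derive the corollary directly from the signature computations of Theorem~\ref{3bp_stab} together with the two principles isolated in Remark~\ref{signature_remark}: if the block-diagonal Hessian of Eq.~\eqref{hess} is non-singular and has an \emph{odd} number of positive eigenvalues, the relative equilibrium is unstable; and if that Hessian is positive definite, the corresponding fixed point is a strict local minimum of the conserved reduced Hamiltonian on $T^*U\times\textbf{S}^2_\mu$, so Dirichlet's criterion yields Lyapunov stability. The entire proof is then a matter of counting $+$ signs in the signatures tabulated in Theorem~\ref{3bp_stab}.

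First I would treat the Eulerian family, for which the blocks $\mathbb{M}_x^{-1}$, $\Hess(V_L)$, $\Hess(J_x)$ have dimensions $3$, $3$, $2$. For $0<\theta<\theta_\textup{scal}$ the signature $(+++,++-,--)$ has five positive eigenvalues; and throughout $\theta_\textup{iso}<\theta<2\pi/3$ the signature is $(+++,-++,--)$ for $\theta<\pi/3$ and $(+++,-+-,+-)$ for $\theta>\pi/3$, each again with five positive eigenvalues. In every case the count is odd, so Remark~\ref{signature_remark} forces instability. (In the remaining intervals $\theta_\textup{scal}<\theta<\theta_\textup{iso}$ and $2\pi/3<\theta<\pi$ the count is $6$, which is even, so the parity test is inconclusive — consistent with the corollary's silence there.)

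Next the stable cases. For the Lagrangian family on $\pi/2+\phi_\textup{scal}<\phi<2\pi/3$ the signature $(+++,+++,++)$ means the Hessian of Eq.~\eqref{hess} is positive definite; hence the fixed point is a strict local minimum of the reduced Hamiltonian and the relative equilibrium is stable by Dirichlet's criterion. For solutions of Type~1.(iii) with angular momentum $L$, the signature $(+++,++-,++)$ when $|L|^2<24\sqrt{3}$ has seven positive eigenvalues, an odd number, so such solutions are unstable; once $|L|^2>24\sqrt{3}$ the signature $(+++,+++,++)$ is positive definite, giving stability. Since the underlying configuration is a genuine non-spinning equilibrium — the $|L|=0$ member of this family, hence unstable by the previous line — which becomes stable only after $|L|^2$ is pushed past $24\sqrt{3}$, the term \emph{gyroscopically stabilised} is warranted.

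There is essentially no obstacle here: the substance was already carried out in Theorem~\ref{3bp_stab}, and what remains is bookkeeping. The only points needing a word of care are that the matrix in Eq.~\eqref{hess} genuinely is the Hessian of the reduced Hamiltonian at the fixed point — the content of the Proposition preceding Remark~\ref{signature_remark} — and that at the borderline values $\theta_\textup{scal}$, $\theta_\textup{iso}$, $\pi/2\pm\phi_\textup{scal}$ and $|L|^2=24\sqrt{3}$ the Hessian degenerates, so neither criterion applies and, correctly, no claim is made there.
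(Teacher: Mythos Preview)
Your proposal is correct and follows exactly the approach the paper intends: the paper offers no separate proof of the corollary beyond the sentence ``combining Theorem~\ref{3bp_stab} with Remark~\ref{signature_remark},'' and your sign-counting carries out precisely that combination. The bookkeeping is accurate in every case, and your observation that the parity test is inconclusive on the remaining intervals (where the corollary is silent) is a helpful sanity check.
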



\end{document}